\newtheorem{theorem}{Theorem}[section]
\newtheorem*{theorem*}{Theorem}
\newtheorem{corollary}[theorem]{Corollary}
\newtheorem{lemma}[theorem]{Lemma}
\newtheorem*{lemma*}{Lemma}
\theoremstyle{definition}
\newtheorem*{definition*}{Definition}
\theoremstyle{remark}
\newtheorem{remark}{Remark}[section]
\newtheorem*{notation}{Notation}
\numberwithin{equation}{section}
\newcommand{\figref}[1]{Figure~\ref{#1}}
\newcommand{\secref}[1]{Section~\ref{#1}}
\newcommand{\thmref}[1]{Theorem~\ref{#1}}
\newcommand{\lemref}[1]{Lemma~\ref{#1}}
\newcommand{\corref}[1]{Corollary~\ref{#1}}
\newcommand{\tabref}[1]{Table~\ref{#1}}
\newcommand{\sub}[1]{\raisebox{-.1\baselineskip}{\includegraphics[height=0.7\baselineskip, width=0.7\baselineskip, keepaspectratio]{sub_#1}}}
\newcommand{\subinmatrix}[1]{\raisebox{-.1\baselineskip}{\includegraphics[height=\baselineskip, width=\baselineskip, keepaspectratio]{sub_#1}}}
\date{}
\begin{document}

\title{Approximability of the Eight-vertex Model}
\author{
{Jin-Yi Cai}\thanks{Department of Computer Sciences, University of Wisconsin-Madison. Supported by NSF CCF-1714275. \texttt{jyc@cs.wisc.edu}}
\and {Tianyu Liu}\thanks{Department of Computer Sciences, University of Wisconsin-Madison. Supported by NSF CCF-1714275. \texttt{tl@cs.wisc.edu}}
\and
{Pinyan Lu}\thanks{ITCS, Shanghai University of Finance and Economics. \texttt{lu.pinyan@mail.shufe.edu.cn}}
\and
{Jing Yu}\thanks{School of Mathematical Sciences, Fudan University. \texttt{jyu17@fudan.edu.cn}}
}

\maketitle

\begin{abstract}
We initiate a study of
the classification of approximation complexity of the eight-vertex model
defined over 4-regular graphs.
The eight-vertex model, together with its special case the six-vertex model,
is one of the most extensively studied models in statistical physics,
and can be stated as a problem of counting weighted orientations in graph theory.
Our result concerns the approximability of the partition function on
all 4-regular graphs,
classified according to the parameters of the model.
Our complexity results conform to the phase transition phenomenon from physics.

We introduce a \emph{quantum decomposition} of the eight-vertex model and prove a set of \emph{closure properties} in various
regions of the parameter space.
Furthermore, we show that there are extra closure properties
on 4-regular planar graphs.
These regions of the parameter space are concordant with
the phase transition threshold.
Using these closure properties,
we derive polynomial time approximation algorithms
 via \emph{Markov chain Monte Carlo}.
We also show that the eight-vertex model is NP-hard
to approximate on the other side of the phase transition threshold.
\end{abstract}


\section{Introduction}\label{sec:intro}
\documentclass[paper]{subfiles}

Let us consider the following natural orientation problem
which is called 
 the eight-vertex model in statistical physics.
 Given a 4-regular graph $G$,
we consider all orientations of the edges
such that there is an even number of arrows into (and out of) each vertex. 
Such a configuration is called an \emph{even orientation}.
In the \emph{unweighted} case,  the problem is to
count the number of even orientations of $G$, and this
is computable in polynomial time~\cite{DBLP:journals/corr/CaiF17}.
In the general case
of the eight-vertex model there are 
\emph{weights} associated with 
local configurations,
and the problem is to compute a weighted sum called
the partition function. This becomes an  interesting
and challenging problem, and the complexity
picture becomes  more intricate~\cite{DBLP:journals/corr/CaiF17}.

\renewcommand{\thesubfigure}{-\arabic{subfigure}}
\begin{figure}[h!]
\centering
\begin{subfigure}[b]{0.12\linewidth}
\centering\includegraphics[width=\linewidth]{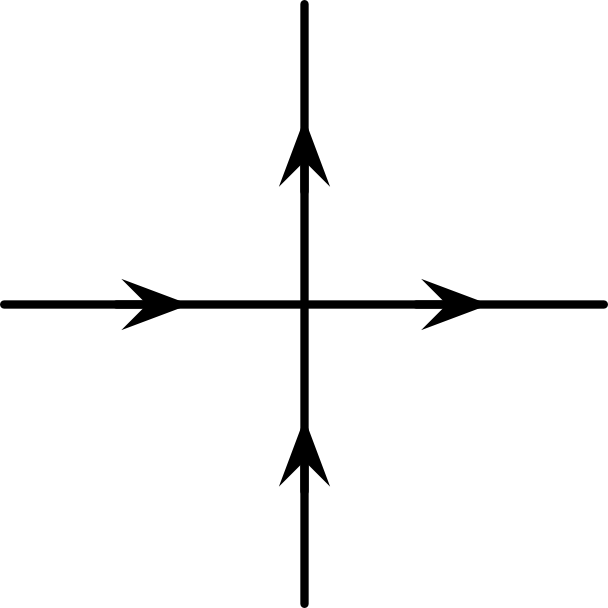}\caption{$1$}
\label{fig:orientations_1}
\end{subfigure}
\begin{subfigure}[b]{0.12\linewidth}
\centering\includegraphics[width=\linewidth]{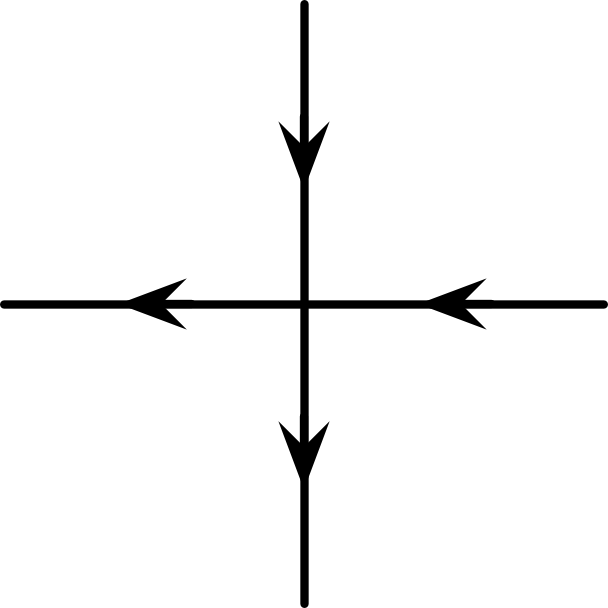}\caption{$2$}
\label{fig:orientations_2}
\end{subfigure}
\begin{subfigure}[b]{0.12\linewidth}
\centering\includegraphics[width=\linewidth]{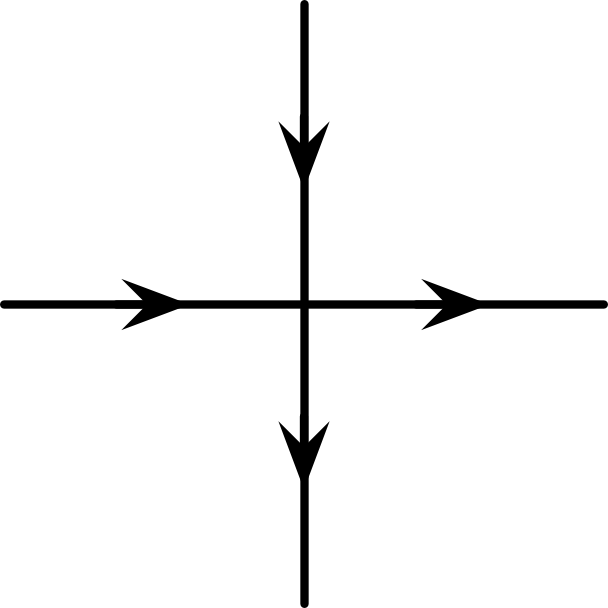}\caption{$3$}
\label{fig:orientations_3}
\end{subfigure}
\begin{subfigure}[b]{0.12\linewidth}
\centering\includegraphics[width=\linewidth]{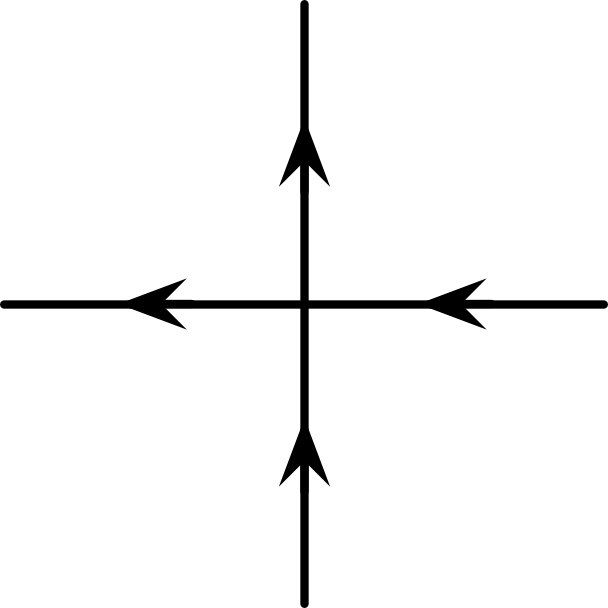}\caption{$4$}
\label{fig:orientations_4}
\end{subfigure}
\begin{subfigure}[b]{0.12\linewidth}
\centering\includegraphics[width=\linewidth]{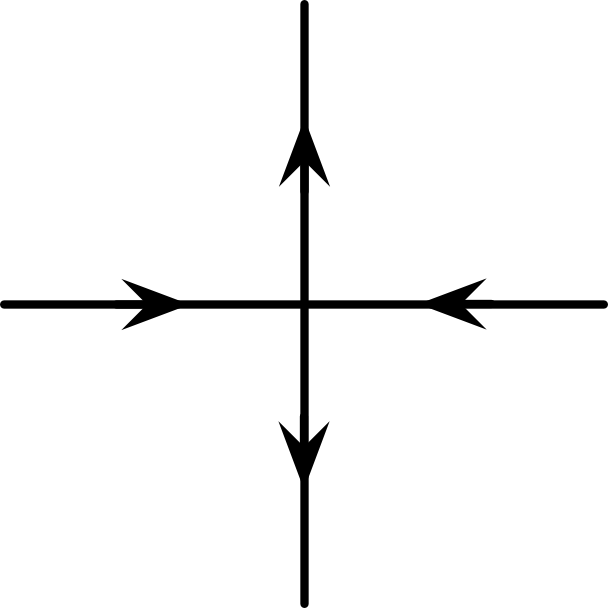}\caption{$5$}
\label{fig:orientations_5}
\end{subfigure}
\begin{subfigure}[b]{0.12\linewidth}
\centering\includegraphics[width=\linewidth]{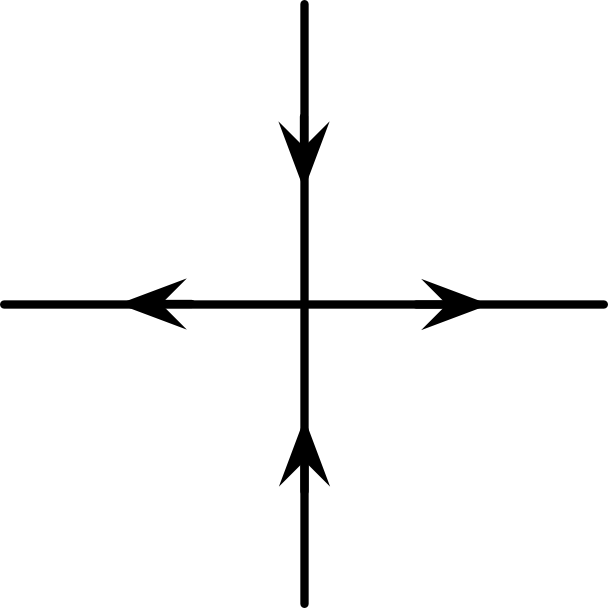}\caption{$6$}
\label{fig:orientations_6}
\end{subfigure}
\begin{subfigure}[b]{0.12\linewidth}
\centering\includegraphics[width=\linewidth]{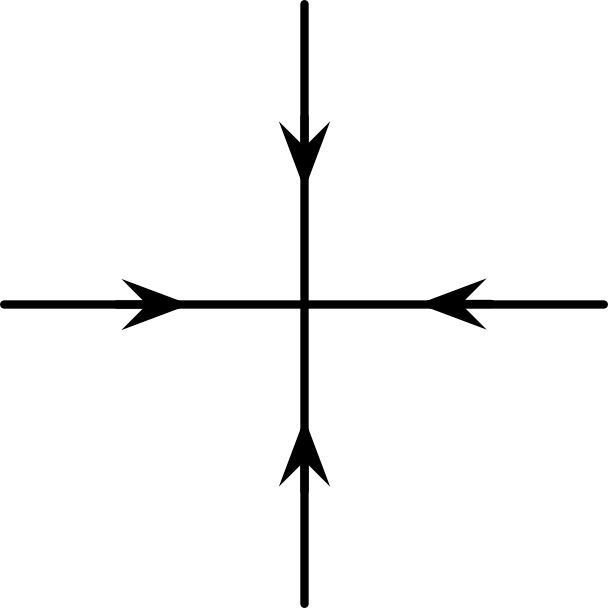}\caption{$7$}
\label{fig:orientations_7}
\end{subfigure}
\begin{subfigure}[b]{0.12\linewidth}
\centering\includegraphics[width=\linewidth]{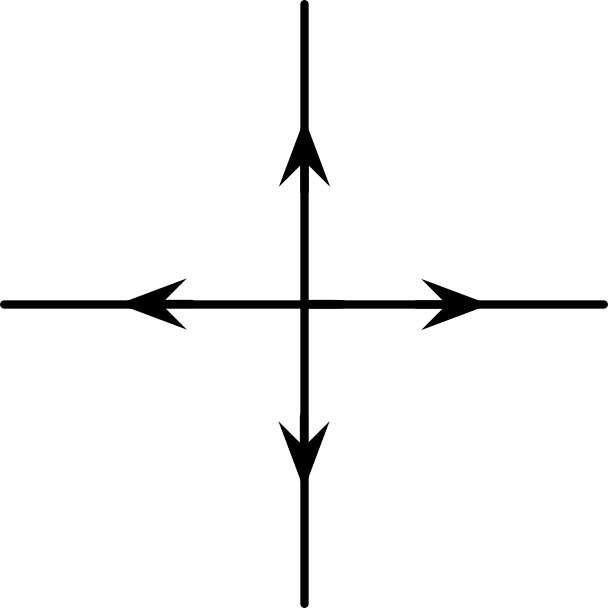}\caption{$8$}
\label{fig:orientations_8}
\end{subfigure}
\caption{Valid configurations of the eight-vertex model.}\label{fig:orientations}
\end{figure}

Classically, the eight-vertex model
is defined by statistical physicists on a square lattice region where each vertex of the lattice is connected by an edge to four nearest neighbors.
There are eight permitted types of local configurations around a vertex---hence the name eight-vertex model (see \figref{fig:orientations}).
In general, the eight configurations 1 to 8 in \figref{fig:orientations}
are associated with eight possible weights $w_1, \ldots, w_8$.
By physical considerations, the total weight of a state remains unchanged
if  all arrows are flipped,
assuming there is no external electric field.
In this case we write
$w_1 = w_2 = a$, $w_3 = w_4= b$, $w_5 = w_6 = c$, and $w_7 = w_8 = d$.
This complementary invariance is known as \emph{arrow reversal symmetry} or \emph{zero field assumption}.
In this paper, we make this assumption
and further assume that
 $a, b, c, d \ge 0$, as is the case in \emph{classical} physics.
Given a  4-regular graph $G$, we label
four incident edges of each vertex
from 1 to 4.
The \emph{partition function} of the eight-vertex model with parameters
$(a, b, c, d)$ on  $G$
is defined as
\begin{equation}\label{Z-defn}
Z(G; a, b, c, d) = \sum_{\tau \in \mathcal{O}_{\bf e}(G)}a^{n_1 + n_2}b^{n_3 + n_4}c^{n_5 + n_6}d^{n_7 + n_8},
\end{equation}
where $\mathcal{O}_{\bf e}(G)$ is the set of all even orientations of $G$,
and $n_i$ is the number of vertices in type  $i$  in $G$ ($1 \le i \le 8$,
locally depicted as in     
Figure~\ref{fig:orientations}) 
under an even orientation $\tau \in \mathcal{O}_{\bf e}(G)$.

If only six local arrangements 1 to 6 are permitted around a vertex (i.e. $d=0$), then the configurations are \emph{Eulerian orientations} of the underlying 4-regular graph.
This is called the \emph{six-vertex model} which is the antecedent of the eight-vertex model.
The latter was first introduced in 1970 by Sutherland~\cite{doi:10.1063/1.1665111}, and Fan and Wu~\cite{PhysRevB.2.723} as a generalization of the six-vertex model for certain
 more desirable properties on the square lattice. However in contrast to the six-vertex model which has been ``exactly solved'' (in the physics sense, a good understanding in the thermodynamic limit on the square lattice) under various parameter settings and external fields~\cite{PhysRev.162.162, PhysRevLett.18.1046, PhysRevLett.19.108, PhysRevLett.19.103, PhysRevB.2.723}, the eight-vertex model was ``exactly solved'' only in the zero-field case~\cite{PhysRevLett.26.832, BAXTER1972193}.
This model is enormously expressive even in the zero-field setting:
its special case when $d=0$, the zero-field six-vertex model, has sub-models such as the ice, KDP, and Rys $F$ models; some other important models such as the dimer and zero-field Ising models can be reduced to it.
Therefore, insight to the eight-vertex model is much sought-after
in statistical physics.

Not until recently did we fully understand the exact computational complexity of the eight-vertex model on 4-regular graphs. In \cite{DBLP:journals/corr/CaiF17}, a complexity dichotomy is given for the eight-vertex model for all eight parameters.
This is studied in the context of a classification program for the complexity of counting problems, where the eight-vertex model serves as important basic cases for Holant problems defined by not necessarily symmetric constraint functions. It is shown that
every setting is either P-time computable (and some are surprising) or \#P-hard.  
However, most cases for P-time tractability 
are due to nontrivial cancellations.
In our setting where $a, b, c, d$ are nonnegative, the problem of computing the partition function of the eight-vertex model is \#P-hard unless: (1) $a = b = c = d$ (this is equivalent to the unweighted case); (2) at least three of $a, b, c, d$ are zero; or (3) two of $a, b, c, d$ are zero and the other two are equal.
In addition, on planar graphs it is also P-time computable for parameter settings $(a, b, c, d)$ with $a^2 + b^2 = c^2 + d^2$, using the \emph{FKT algorithm}. We note that the classification of the exact complexity for the eight-vertex model on planar graphs is still open.

Since exact computation is hard in most cases,
one natural question is what is the approximate complexity of counting and sampling of the eight-vertex model. To our best knowledge, there is only one
previous result in this regard due to Greenberg and Randall.
They showed that on square lattice regions
 a specific Markov chain (which flips the orientations of 
all four edges along a uniformly picked face at each step) is torpidly mixing
 when $d$ is 
large~\cite{Greenberg2010}. It means that when sinks and sources have
large weights, this particular chain cannot be used to approximately sample eight-vertex configurations on the square lattice according to the Gibbs measure.

In this paper we initiate a study toward a classification of the approximate complexity of the eight-vertex model on 4-regular graphs in terms of the parameters. Our results conform to phase transitions in physics.

Here we briefly describe the phenomenon of phase transition of the 
zero-field eight-vertex model (see Baxter's book~\cite{Baxter:book} for more details).
On the square lattice 
in the thermodynamic limit:
\vspace{-2mm}
\setlist[enumerate]{itemsep=-1mm}
\begin{enumerate}[(1)]
\item When  $a > b + c + d$ (called the ferroelectric phase, or FE for short)
any finite region tends to be frozen into one of the two configurations 
where either all 
arrows point up or to the right~(Figure~\ref{fig:orientations_1}), or
alternatively all point down or 
to the left~(Figure~\ref{fig:orientations_2}).
\item Symmetrically when $b > a + c + d$ (also called FE) either all 
arrows point down or to the right~(Figure~\ref{fig:orientations_3}), or
alternatively all point up or to the left~(Figure~\ref{fig:orientations_4}).
\item When
$c > a + b + d$ (AFE: anti-ferroelectric phase)
configurations in Figure~\ref{fig:orientations_5} and Figure~\ref{fig:orientations_6} alternate.
\item When
$d > a + b + c$ (also AFE)
configurations in Figure~\ref{fig:orientations_7} and Figure~\ref{fig:orientations_8} alternate.
\item When $a < b + c + d$, $b < a + c + d$, $c < a + b + d$ and $d < a + b + c$, the system is disordered 
(DO: disordered phase)
 in the sense that all correlations decay to zero
with increasing distance.
\end{enumerate}

For convenience in presenting our theorems and proofs, we adopt the following notations assuming $a, b, c, d \in \mathbb{R}^+$.
\vspace{-2mm}
\setlist[itemize]{itemsep=-1mm}
\begin{itemize}
\item
$\mathcal{F}_{\le^2} := \{(a,b,c,d) \; | \;
a^2 \le b^2 + c^2 + d^2, ~~b^2 \le a^2 + c^2 + d^2,~~ c^2 \le a^2 + b^2 + d^2,~~d^2 \le a^2 + b^2 + c^2\}$;
\item
$\mathcal{F}_> := \{(a,b,c,d) \; | \;
a > b + c + d ~\text{ or }~ b > a + c + d ~\text{ or }~ c > a + b + d ~\text{ or }~ d > a + b + c \text{ where at least two of } a, b, c, d > 0\footnote{If at most one of $a, b, c, d$ is nonzero, computing the partition function is poly-time tractable.}\}$;
\item
$\mathcal{A}_\le := \{(a,b,c,d) \; | \; a+d \le b+c\}$, $\mathcal{B}_\le := \{(a,b,c,d) \; | \; b+d \le a+c\}$, $\mathcal{C}_\le := \{(a,b,c,d) \; | \; c+d \le a+b\}$, $\mathcal{C}_\ge := \{(a,b,c,d) \; | \; c+d \ge a+b\}$, $\mathcal{C}_= := \{(a,b,c,d) \; | \; c+d = a+b\}$.
\end{itemize}

\begin{remark}
We have
$\mathcal{F}_{\le^2} \subset \overline{\mathcal{F}_>}$,
and $\mathcal{A}_\le \bigcap \mathcal{B}_\le \bigcap
 \mathcal{C}_\le  \subset \overline{\mathcal{F}_>}$.
Clearly $\mathcal{C}_=  = \mathcal{C}_\le \bigcap \mathcal{C}_\ge$.
But $\mathcal{A}_\le \bigcap \mathcal{B}_\le \bigcap 
\mathcal{C}_\ge \not\subseteq \overline{\mathcal{F}_>}$. 
\end{remark}

\begin{theorem}\label{thm:main}
There is an
FPRAS\footnote{Suppose $f: \Sigma^* \rightarrow \mathbb{R}$ is a function mapping problem instances to real numbers. A \textit{fully polynomial randomized approximation scheme (FPRAS)} \cite{Karp:1983:MAE:1382437.1382804} for a problem is a randomized algorithm that takes as input an instance $x$ and $\varepsilon > 0$, running in time polynomial in $n$ (the input length) and $\varepsilon^{-1}$, and outputs a number $Y$ (a random variable) such that
\(\operatorname{Pr}\left[(1 - \varepsilon)f(x) \le Y \le (1 + \varepsilon)f(x)\right] \ge \frac{3}{4}.\)}
for $Z(a, b, c, d)$ if $(a, b, c, d) \in \mathcal{F}_{\le^2} \bigcap \mathcal{A}_\le \bigcap \mathcal{B}_\le \bigcap \mathcal{C}_\le$; there is no FPRAS for $Z(a, b, c, d)$ if $(a, b, c, d) \in \mathcal{F}_>$ unless RP = NP.
In addition, for planar graphs there is an FPRAS for $Z(a, b, c, d)$ if $(a, b, c, d) \in \mathcal{F}_{\le^2} \bigcap \mathcal{A}_\le \bigcap \mathcal{B}_\le \bigcap \mathcal{C}_\ge$.
\end{theorem}

\begin{remark}
The relationship of these regions denoted by $\mathcal{F}_{\le^2}$, $\mathcal{F}_>$, $\mathcal{A}_\le$, $\mathcal{B}_\le$, $\mathcal{C}_\le$, $\mathcal{C}_\ge$, and $\mathcal{C}_=$ may not
 be easy to visualize, since they reside in 4-dimensional space. 
See \figref{fig:landscape} (where we normalize $d=1$)\footnote{Some 3D renderings of the parameter space can be found at \url{https://skfb.ly/6C9LE} and \url{https://skfb.ly/6C9MS}.}.
The roles of $a$, $b$, $c$, and $d$ are not all symmetric in the eight-vertex model. In particular, $d$ is the weight of sinks and sources and has a special role (e.g. see \cite{Greenberg2010}).
%
If $(a, b, c, d) \in  \mathcal{A}_\le \bigcap \mathcal{B}_\le \bigcap \mathcal{C}_\le$ then $d \le a, b, c$. So our algorithm 
(for general, i.e., not necessarily planar, graphs)
works only when the weight on sinks and sources is relatively not large.
The restriction of $(a, b, c, d) \in \mathcal{A}_\le \bigcap \mathcal{B}_\le$ is equivalent to $c - d \ge |a - b|$. Therefore, for planar graphs even when sinks and sources have weights larger than the weights of the first four 
configurations in \figref{fig:orientations}, FPRAS can still exist.
\end{remark}

\renewcommand{\thesubfigure}{\alph{subfigure}}
\captionsetup[subfigure]{labelformat=parens}
\begin{figure}[h!]
\centering
\begin{subfigure}[t]{0.48\linewidth}
\centering\includegraphics[width=0.7\linewidth]{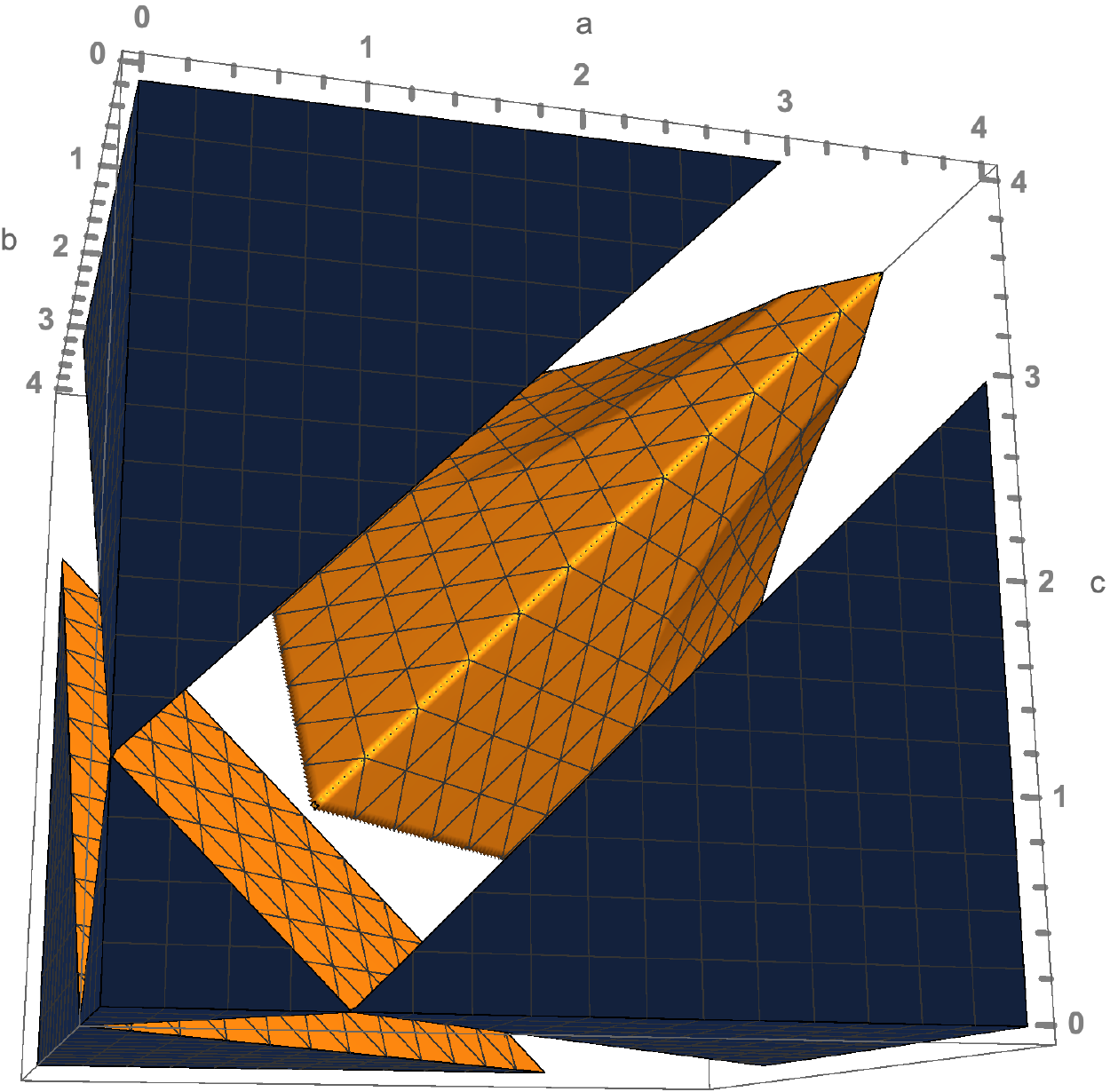}
\caption{Regions of known complexity in the eight-vertex model.
The four corner regions constitute $\mathcal{F}_>$.
The non-corner region depicted is
$\mathcal{F}_{\le^2} \bigcap \mathcal{A}_\le \bigcap \mathcal{B}_\le \bigcap \mathcal{C}_\le$.}
\label{fig:landscape_nonplanar}
\end{subfigure}
\hfill
\begin{subfigure}[t]{0.48\linewidth}
\centering\includegraphics[width=0.7\linewidth]{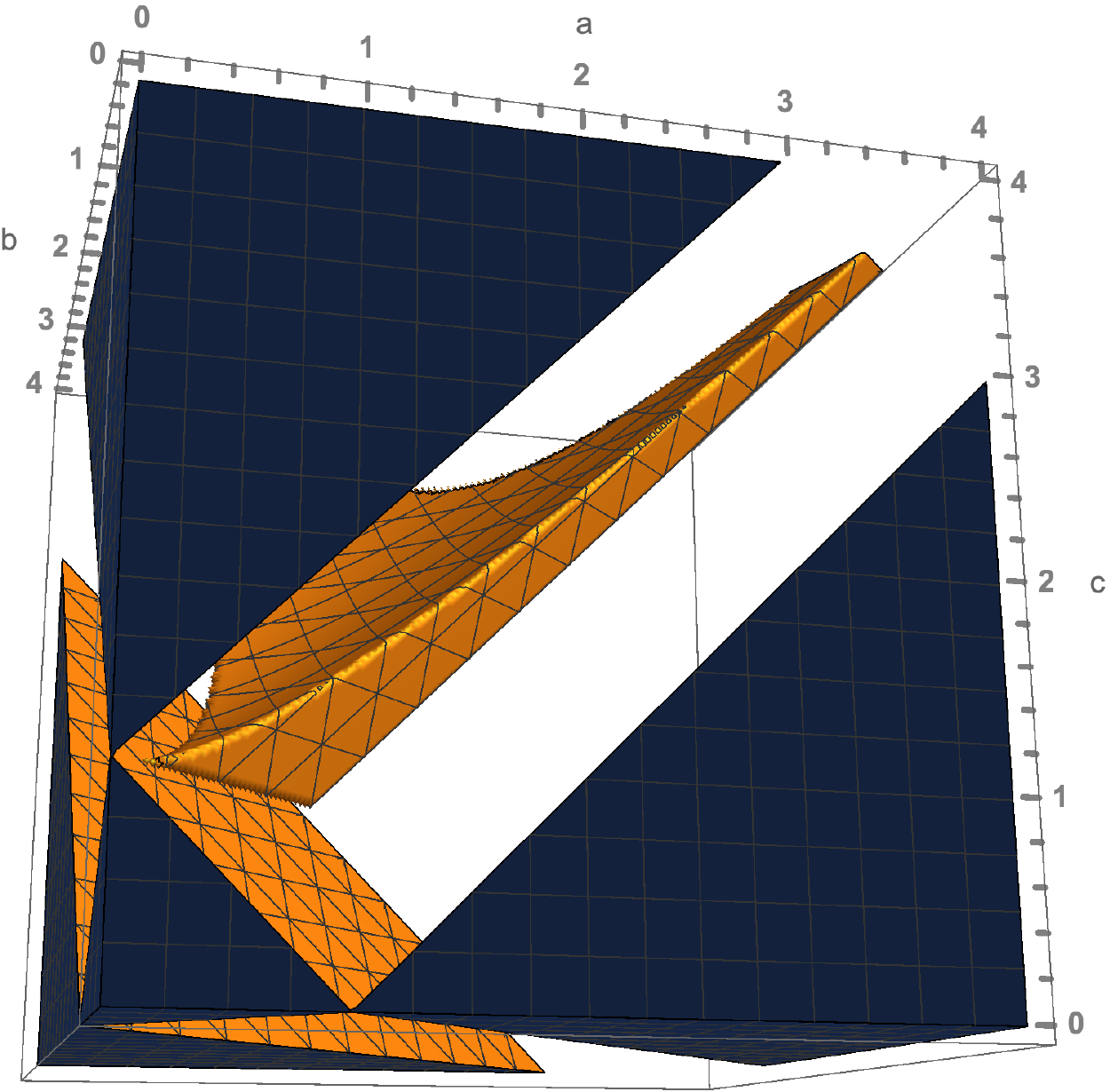}
\caption{An extra region that admits FPRAS on planar graphs.}
\label{fig:landscape_planar}
\end{subfigure}
\caption{}\label{fig:landscape}
\end{figure}

To prove the FPRAS result in \thmref{thm:main}, our most important contribution is a set of \emph{closure properties}. 
We prove these
closure properties  for the eight-vertex model
 in \secref{sec:properties}.
We then use these closure properties to show that a Markov chain designed for the six-vertex model can be adapted to provide our FPRAS.
The Markov chain we adapt is the \emph{directed-loop algorithm} which was invented by Rahman and Stillinger~\cite{doi:10.1063/1.1678874} and
is widely used for the six-vertex model (e.g., \cite{YANAGAWA1979329, Barkema98montecarlo, PhysRevE.70.016118}). The state space of our Markov chain for the eight-vertex model consists of even orientations and near-even orientations, which is an extension of the space of valid configurations; the transitions of this algorithm are composed of creating, shifting, and merging of two ``defective'' edges.
A formal description of the directed-loop algorithm is given in \secref{sec:fpras}.

This leads to a  \emph{Markov chain Monte Carlo} approximate counting algorithm
by \emph{sampling}. To prove that this is an FPRAS, we show that
(1) the above Markov chain is rapidly mixing via a \emph{conductance argument}~\cite{doi:10.1137/0218077, Dyer:1991:RPA:102782.102783, Sinclair92improvedbounds, Jerrum-book},
(2) the valid configurations take a non-negligible proportion in the state space, and (3) there is a (not totally obvious) self-reduction (to reduce the computation of the partition function of a graph to that of a ``smaller'' graph)~\cite{JERRUM1986169}.
All three parts depend on the closure properties.
Specifically, we show that when $(a, b, c, d) \in \mathcal{F}_{\le^2}$, the conductance of the Markov chain can be polynomially bounded \emph{if} the ratio of near-even orientations over even orientations can be polynomially bounded; when $(a, b, c, d) \in \mathcal{A}_\le \bigcap \mathcal{B}_\le \bigcap \mathcal{C}_\le$, this ratio is indeed polynomially bounded according to the closure properties. Finally a self-reduction whose success in $\mathcal{A}_\le \bigcap \mathcal{B}_\le \bigcap \mathcal{C}_\le$
 requires an additional closure property.
Therefore, there is an FPRAS in the intersection of $\mathcal{F}_{\le^2}$ and $\mathcal{A}_\le \bigcap \mathcal{B}_\le \bigcap \mathcal{C}_\le$.

The closure properties are keys to our FPRAS.
We use the term \emph{a 4-ary construction} to denote a
4-regular graph $\Gamma$ having four ``dangling'' edges,
and consider all configurations on the edges of $\Gamma$ where every vertex satisfies the even orientation rule and has arrow reversal symmetry. 
We can prove that this $\Gamma$ defines
a \emph{constraint function} of arity 4 that also satisfies
the even orientation rule and has arrow reversal symmetry.
If we imagine the graph $\Gamma$ is shrunken to a single point
 except the 4 dangling edges,
 then
 a 4-ary construction can be viewed as a virtual vertex with  
parameters  $(a', b', c', d')$ in the 
eight-vertex model, for some $a', b', c', d' \ge 0$.

In \thmref{thm:property_1} we show that the set of 4-ary constraint functions in $\mathcal{A}_\le \bigcap \mathcal{B}_\le \bigcap \mathcal{C}_\le$ is closed under 4-ary constructions. This is achieved by inventing a \emph{``quantum decomposition''} of even-orientations.
In~\cite{DBLP:journals/corr/abs-1712-05880}  
 a special case of \thmref{thm:property_1} when $d = 0$ is
proved for the six-vertex model using
a 
decomposition of Eulerian orientations.
Given $G = (V, E)$, every Eulerian orientation defines a set of $2^{|V|}$ directed 
\emph{Eulerian partitions}
by pairing up the four edges around a vertex in one of two ways such that each pair of edges satisfies ``1-in-1-out''. However, such a decomposition does not exist when sinks and sources appear in the eight-vertex model.

In order to overcome this difficulty, we introduce a quantum decomposition where each vertex has a ``signed'' pairing. 
Given an even orientation,
 a \emph{plus pairing} groups the four edges around a vertex into two pairs such that both pairs satisfy ``1-in-1-out'';
a \emph{minus pairing} groups the four edges around a vertex into two pairs such that both pairs independently satisfy either ``2-in'' or ``2-out''.
With weights, this gives rise to a
quantum decomposition of  $3^{|V|}$ \emph{``annotated'' circuit partitions}.
(Details are in \secref{sec:properties}.)
Although the idea of ``pairings'' and decompositions of Eulerian orientations 
have been used before~\cite{VERGNAS1988367,Jaeger1990,Mihail1996},
the idea of a signed pairing and the associated quantum decomposition of
even orientations into annotated circuit partitions is new.
Just as statistical physicists introduce the eight-vertex model on the square lattice for certain desirable 
properties and better universality over the six-vertex model, in approximate complexity on 4-regular graphs our technique that gives FPRAS for the eight-vertex model extends significantly beyond those for the six-vertex model.

Not only more sophisticated techniques are needed,
the landscape of approximate complexity for
the eight-vertex model is also richer.
In the six-vertex model we have $d=0$. Then it follows that $\mathcal{F}_{\le^2} \subset \mathcal{A}_\le \bigcap \mathcal{B}_\le \bigcap \mathcal{C}_\le$ which means whenever the conductance of the directed-loop algorithm can be bounded by the ratio of near-even orientations over even orientations, there is an FPRAS. In the eight-vertex model, however, there are parameter settings in $\mathcal{F}_{\le^2}$ where the ratio can be exponentially large. This indicates that the current MCMC method is unable to give FPRAS for the whole region $\mathcal{F}_{\le^2}$, even though
there is a nice upper bound for the conductance of this Markov chain.

Moreover, in the eight-vertex model we can give more positive results for planar graphs than for general graphs, unlike in the six-vertex model 
whenever we have an FPRAS for planar graphs we also have one for general graphs
for the same parameters.
For planar graphs, in \thmref{thm:property_2} and \corref{cor:property_4} we show that the extra regions $\mathcal{A}_\le \bigcap \mathcal{B}_\le \bigcap \mathcal{C}_\ge \bigcap \overline{\mathcal{F}_>}$ and $\mathcal{A}_\le \bigcap \mathcal{B}_\le \bigcap \mathcal{C}_=$ also enjoy closure properties.
It turns out that (only) on planar graphs we have an FPRAS when the parameter
setting is in the intersection of  $\mathcal{A}_\le \bigcap \mathcal{B}_\le \bigcap \mathcal{C}_\ge \bigcap \overline{\mathcal{F}_>}$ and $\mathcal{F}_{\le^2}$.
And since $\mathcal{F}_{\le^2} \subset \overline{\mathcal{F}_>}$,
combined with the FPRAS on general graphs,
we get an FPRAS for $\mathcal{F}_{\le^2} \bigcap \mathcal{A}_\le \bigcap \mathcal{B}_\le$ for all planar graphs.
This tolerance of dropping off the requirement $c+d \le a+b$ is in perfect accordance with the special role that ``saddle'' configurations (Figure~\ref{fig:orientations_5} and Figure~\ref{fig:orientations_6}) play on planar graphs.
Although this region is disjoint from the hard regions on general graphs, we find this region is FPRASable on planar graphs but its approximate complexity is unknown for general graphs.
Considering the fact that the exact complexity for the eight-vertex model on planar graphs is not even understood, this is one of the very few cases where research on approximate complexity has advanced beyond that on exact complexity.

The NP-hardness of approximation in FE\&AFE regions is shown by reductions from the problem of computing the maximum cut on a 3-regular graph. For the eight-vertex models not included in the
six-vertex model ($d \neq 0$), both the reduction source and the ``gadgets'' we employ to prove the hardness are substantially different from those that were used in the hardness proof of the six-vertex model~\cite{DBLP:journals/corr/abs-1712-05880}.
We note that the parameter settings in~\cite{Greenberg2010}
where torpid mixing is proved are contained in our  NP-hardness region.

In addition to the complexity result, we show that there is a fundamental difference in the behavior on the two sides separated by the phase transition threshold, in terms of closure properties. In \thmref{thm:property_3}, we show that the set of 4-ary constraint functions lying in the complement of $\mathcal{F}_>$
is closed under 4-ary constructions.
We prove  in this paper that approximation is hard on $\mathcal{F}_>$.
It is not known if the eight-vertex model in the full region of $\overline{\mathcal{F}_>}$ admits FPRAS or not.

The eight-vertex model fits into the wider class of Holant problems and 
serves as important basic cases for the latter.
Previous results in approximate counting are mostly about spin systems and the present paper, together with \cite{DBLP:journals/corr/abs-1712-05880}, are probably the first fruitful attempts in the Holant literature to make connections to phase transitions. While there is still a gap in the complexity picture for the 
six-vertex and eight-vertex models, we believe the framework set in this paper gives a starting point for studying the approximation complexity of a broader class of counting problems.

\section{Preliminaries}\label{sec:prelim}
\documentclass[paper]{subfiles}

Given a 4-regular graph $G = (V, E)$,
the \emph{edge-vertex incidence graph}  $G' = (U_E, U_V, E')$
is a bipartite graph where
$(u_e, u_v) \in U_E \times U_V$ is an edge in $E'$ iff 
$e \in E$ in $G$  is incident to $v \in V$.
We model an orientation ($w \rightarrow v$)
 on an edge $e = \left\{w, v\right\} \in E$
 from $w$ into $v$ in $G$ by assigning
 $1$  to $(u_e, u_w) \in E'$  and  $0$ to $(u_e, u_v) \in E'$
in $G'$.
A configuration of the eight-vertex model on $G$
is  an  \emph{edge 2-coloring} on $G'$,
namely $\sigma: E' \rightarrow \{0, 1\}$,
where for  each  $u_e \in U_E$ its two incident edges are
assigned 01 or 10, and  for  each $u_v \in U_V$ the  sum of
 values $\sum_{i=1}^4 \sigma(e_i) \equiv 0  \pmod 2$,
over the four incident edges of $u_v$.
Thus 
we model the even orientation rule of $G$ on all $v \in V$ by requiring ``two-0-two-1/four-0/four-1'' locally at
each vertex $u_v \in U_V$.

The ``one-0-one-1'' requirement on the two edges incident to a vertex in $U_E$ is a binary {\sc Disequality} constraint, denoted by $(\neq_2)$.
The values of a 4-ary \emph{constraint function} $f$  can be listed in a matrix $M(f) = \left[\begin{smallmatrix} f_{0000} & f_{0010} & f_{0001} & f_{0011} \\ f_{0100} & f_{0110} & f_{0101} & f_{0111} \\ f_{1000} & f_{1010} & f_{1001} & f_{1011} \\ f_{1100} & f_{1110} & f_{1101} & f_{1111}\end{smallmatrix}\right]$,
called the \emph{constraint matrix} of $f$. For the eight-vertex model 
satisfying the even orientation rule and arrow reversal symmetry, the constraint function $f$ at every vertex $v \in U_V$ in $G'$ 
has the form $M(f) = \left[\begin{smallmatrix} d & 0 & 0 & a \\ 0 & b & c & 0 \\ 0 & c & b & 0 \\ a & 0 & 0 & d \end{smallmatrix}\right]$, if we locally index the left, down, right, and up edges incident to $v$ by 1, 2, 3, and 4, respectively according to \figref{fig:orientations}.
Thus computing the partition function $Z(G; a, b, c, d)$ is equivalent to evaluating
(the Holant sum in the framework for Holant  problems)
\[Z'(G'; f) := \sum_{\sigma:E'\rightarrow\left\{0,1\right\}}\prod_{u\in U_E}(\neq_2)\left(\sigma |_{E'(u)}\right) \prod_{u\in U_V}f\left(\sigma |_{E'(u)}\right).\]
where $E'(u)$ denotes the incident edges of $u \in U_E \cup U_V$.

When every vertex in $G$ 
has the same constraint function $f$ with $M(f) = \left[\begin{smallmatrix} d & & & a \\ & b & c & \\ & c & b & \\ a & & & d \end{smallmatrix}\right]$,
we write the partition function $Z(a, b, c, d)$ as $Z(f)$,
 and denote by $Z(\mathcal{F})$ when
each vertex is assigned some constraint function from a set $\mathcal{F}$
consisting of constraint functions of this form.

\section{Closure Properties}\label{sec:properties}
\documentclass[paper]{subfiles}
\begin{theorem}\label{thm:property_3}
The set of constraint functions in $\overline{\mathcal{F}_>}$ 
is closed under 4-ary constructions,
i.e., the constraint function of any 4-ary construction
using constraint functions from the set  $\overline{\mathcal{F}_>}$
also belongs to the same set.
\end{theorem}

\begin{theorem}\label{thm:property_1}
The set of constraint functions in $\mathcal{A}_\le \bigcap \mathcal{B}_\le \bigcap \mathcal{C}_\le$ is closed under 4-ary constructions.
\end{theorem}

\begin{theorem}\label{thm:property_2}
The set of constraint functions in $\mathcal{A}_\le \bigcap \mathcal{B}_\le \bigcap \mathcal{C}_\ge \bigcap \overline{\mathcal{F}_>}$ is closed under 4-ary \emph{plane} constructions.
\end{theorem}

\begin{corollary}\label{cor:property_4}
The set of constraint functions in $\mathcal{A}_\le \bigcap \mathcal{B}_\le \bigcap \mathcal{C}_=$ is closed under 4-ary \emph{plane} constructions.
\end{corollary}


In order to prove the above closure properties, we introduce a quantum decomposition for the eight-vertex model, in which every even orientation of a 4-regular graph $G = (V, E)$ is a 
``superposition'' of $3^{|V|}$ annotated circuit partitions
 (to be defined shortly).

Let $v$ be a vertex of $G$, and $e_1, e_2, e_3, e_4$ the four
labeled edges incident to $v$.
A \emph{pairing} $\varrho$ at $v$ is a partition of $\{e_1, e_2, e_3, e_4\}$ into two pairs. There are exactly three distinct pairings at $v$~(\figref{fig:pairings}) which we denote by three special symbols: $\sub{1}, \sub{2}, \sub{3}$, respectively.
A \emph{circuit partition} of a graph $G$ is a partition of the edges of $G$ into edge-disjoint circuits (in such a circuit
 vertices may repeat but edges may not).
It is in 1-1 correspondence with a family of pairings $\varphi = \{\varrho_v\}_{v \in V}$, where $\varrho_v \in \{\sub{1}, \sub{2}, \sub{3}\}$ is a pairing at $v${\textemdash}once the pairing at each vertex is fixed, then the two edges paired together at each vertex is also adjacent in the same circuit.

\begin{figure}[h!]
\centering
\begin{subfigure}[b]{0.3\linewidth}
\centering\includegraphics[width=0.5\linewidth]{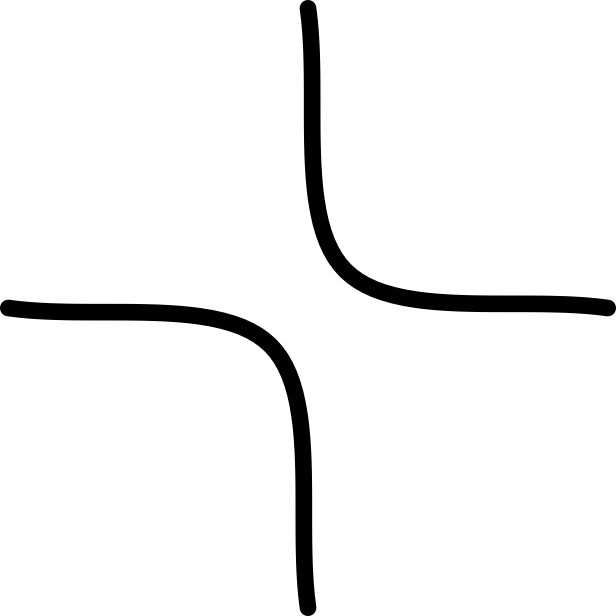}\caption{$\sub{1}$}
\end{subfigure}
\begin{subfigure}[b]{0.3\linewidth}
\centering\includegraphics[width=0.5\linewidth]{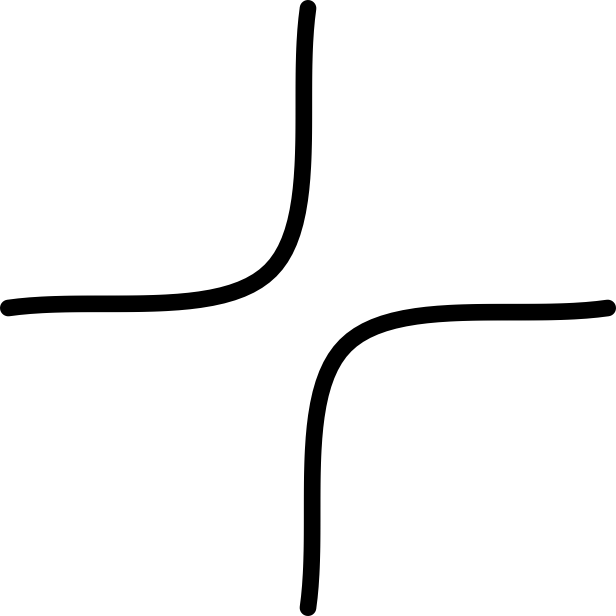}\caption{$\sub{2}$}
\end{subfigure}
\begin{subfigure}[b]{0.3\linewidth}
\centering\includegraphics[width=0.5\linewidth]{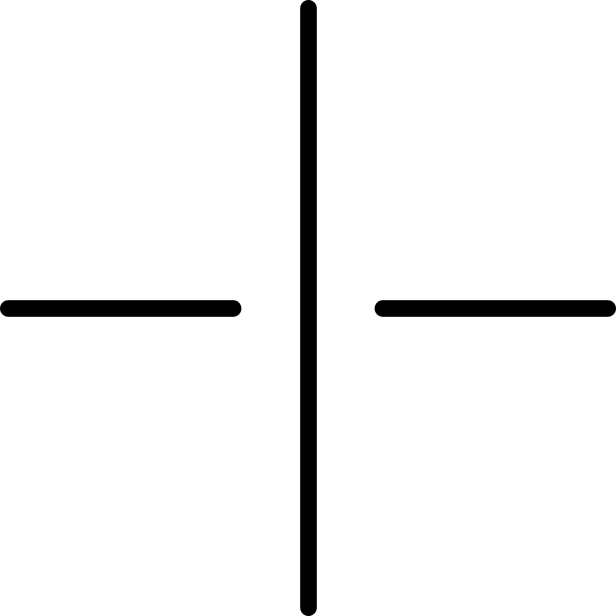}\caption{$\sub{3}$}
\end{subfigure}
\caption{(Unsigned) pairings at a degree 4 vertex.}\label{fig:pairings}
\end{figure}


A \emph{signed pairing} $\boldsymbol{\varrho}_v$ at $v$ is a pairing with a sign,
either \emph{plus} ($+$) or \emph{minus} ($-$). In other words, it is an element in $\{\sub{1}, \sub{2}, \sub{3}\} \times \{+,-\}$.
We denote a signed pairing by $\varrho_+$ or $\varrho_-$ if the pairing is $\varrho$ and the sign is plus or minus, respectively.
An \emph{annotated circuit partition} of $G$, or
{\sl acp} for short, is  a circuit partition of  $G$ together with
a  map $V \rightarrow \{+,-\}$ such that
along every circuit one encounters an even number of $-$ 
(a repeat vertex with $-$ counts twice on the circuit).
Thus, it is in  1-1 correspondence with
 a family of signed pairings for all $v \in V$, 
with the restriction that there is an \emph{even} number of $-$ along each circuit.
Each circuit $C$ in an {\sl acp} has exactly two directed \emph{states}---starting at an arbitrary edge in $C$ with one of the two orientations on this edge, one can
uniquely orient every edge in $C$ such that for every vertex $v$
on $C$, two edges incident at $v$ paired up by $+$
 have consistent orientations  at $v$
(i.e., they form ``1-in-1-out'' at $v$), 
whereas two edges paired up by $-$  
have contrary orientations at $v$
(i.e., they form ``2-in'' or ``2-out'' at $v$).
These two directed states of $C$
are well-defined because cyclically
the direction of edges along $C$
changes an even
number of times, precisely at the minus signs.
A \emph{directed
annotated circuit partition} ({\sl dacp}) is an {\sl acp}
with each circuit in a directed state.
If an {\sl acp} has $k$ circuits, then it
defines $2^k$ {\sl dacp}'s.

Next we describe an association
between even orientations and {\sl acp}'s as well as {\sl dacp}'s.
Given an even orientation $\tau$ of $G$,
every local configuration of  $\tau$ at a vertex
defines exactly three signed pairings at this vertex
according to \tabref{tab:correspondence}.
Note that, given  $\tau$ and a pairing at  a vertex $v$,
the two pairs have {either} \emph{both} consistent {or} \emph{both} contrary
orientations.
Thus the same sign, $+$ or $-$,  works for both pairs, although
this depends on the pairing at $v$.

\begin{table}[h!]
\newcolumntype{C}{ >{\centering\arraybackslash} m{3cm} }
\newcolumntype{W}{ >{\centering\arraybackslash} m{1cm} }
\newcolumntype{S}{ >{\centering\arraybackslash} m{1cm} }
  \begin{center}
    \caption{Map from eight local configurations to signed pairings.}
    \label{tab:correspondence}
    \begin{tabular}{@{}CcccSSS@{}}
    \toprule
      \textbf{Configurations} & \phantom{abc} & \textbf{Weight} & \phantom{abc} & \multicolumn{3}{c}{\textbf{Sign}}\\
      \cmidrule{5-7}
      & & & & \includegraphics[width=0.06\textwidth]{phi_1} & \includegraphics[width=0.06\textwidth]{phi_2} & \includegraphics[width=0.06\textwidth]{phi_3}\\
      \midrule
      \includegraphics[width=0.075\textwidth]{epsilon_1} \includegraphics[width=0.075\textwidth]{epsilon_2} && $a$ && - & + & + \\
      \includegraphics[width=0.075\textwidth]{epsilon_3} \includegraphics[width=0.075\textwidth]{epsilon_4} && $b$ && + & - & + \\
      \includegraphics[width=0.075\textwidth]{epsilon_5} \includegraphics[width=0.075\textwidth]{epsilon_6} && $c$ && + & + & - \\
      \includegraphics[width=0.075\textwidth]{epsilon_7} \includegraphics[width=0.075\textwidth]{epsilon_8} && $d$ && - & - & - \\
    \bottomrule
    \end{tabular}
  \end{center}
\end{table}

In this way, every even orientation $\tau$ 
defines $3^{|V|}$ {\sl acp}'s, denoted by $\Phi(\tau)$.
See \tabref{tab:decomposition_a} and \tabref{tab:decomposition_c} for two examples.
Moreover, for any {\sl acp} $\varphi \in \Phi(\tau)$, every circuit in $\varphi$ 
is in one of the two well-defined directed states
under the  orientation $\tau$.
Thus each even orientation $\tau$ defines $3^{|V|}$ {\sl dacp}'s.

\begin{table}[h!]
\newcolumntype{O}{ >{\centering\arraybackslash} m{5cm} }
\newcolumntype{A}{ >{\centering\arraybackslash} m{2.2cm} }
  \begin{center}
    \caption{An even orientation and its quantum decomposition into {\sl acp}'s.}
    \label{tab:decomposition_a}
    \begin{tabular}{@{}O@{\phantom{abcdefg}}A@{\phantom{ab}}A@{\phantom{ab}}A@{}}
    \toprule
      $\tau$ & \multicolumn{3}{c}{$\Phi(\tau)$}\\
      \midrule
      \\[-0.5em]
      \multirow{3}{*}{\includegraphics[width=0.3\textwidth]{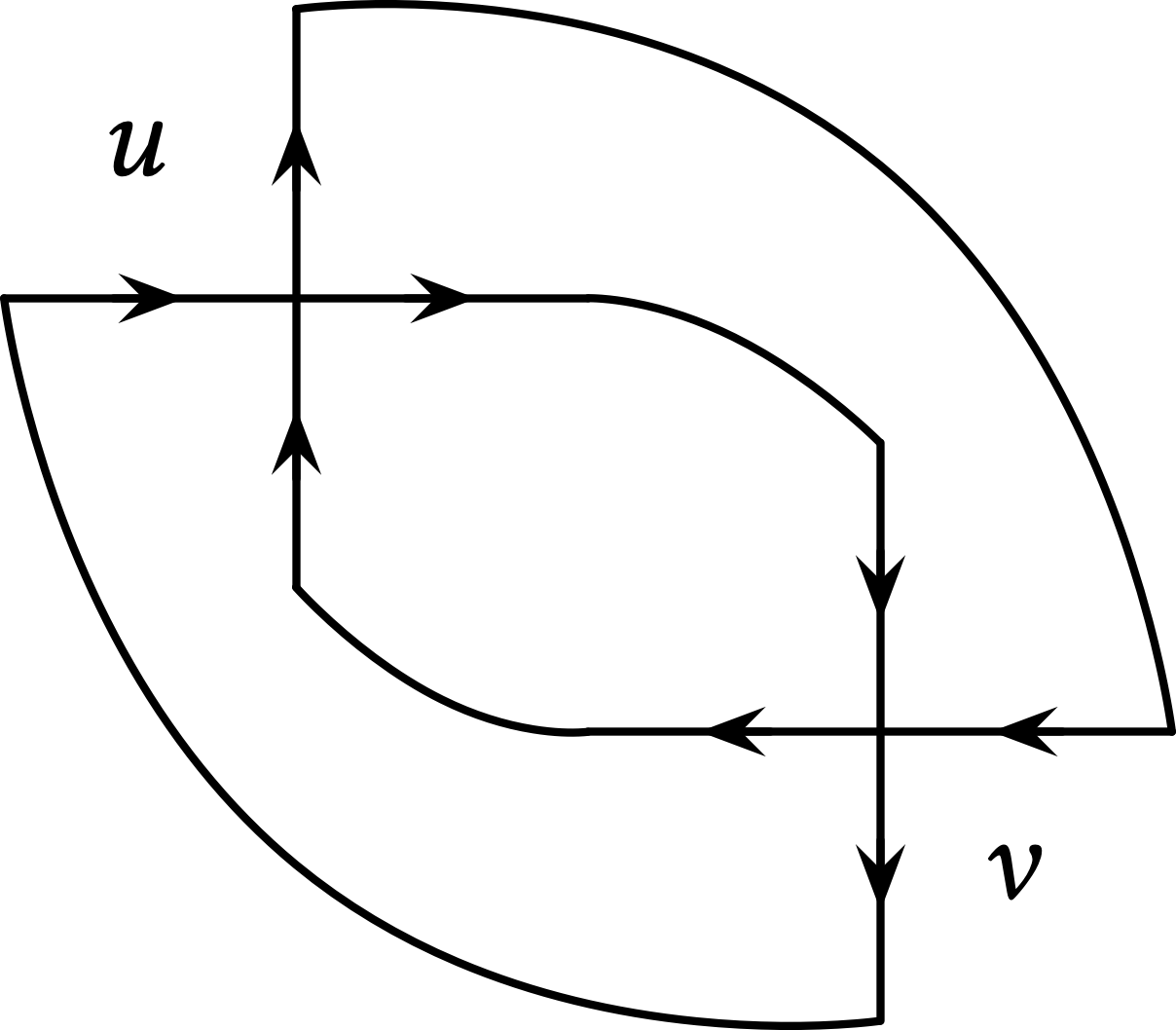}}
      & \includegraphics[width=0.13\textwidth]{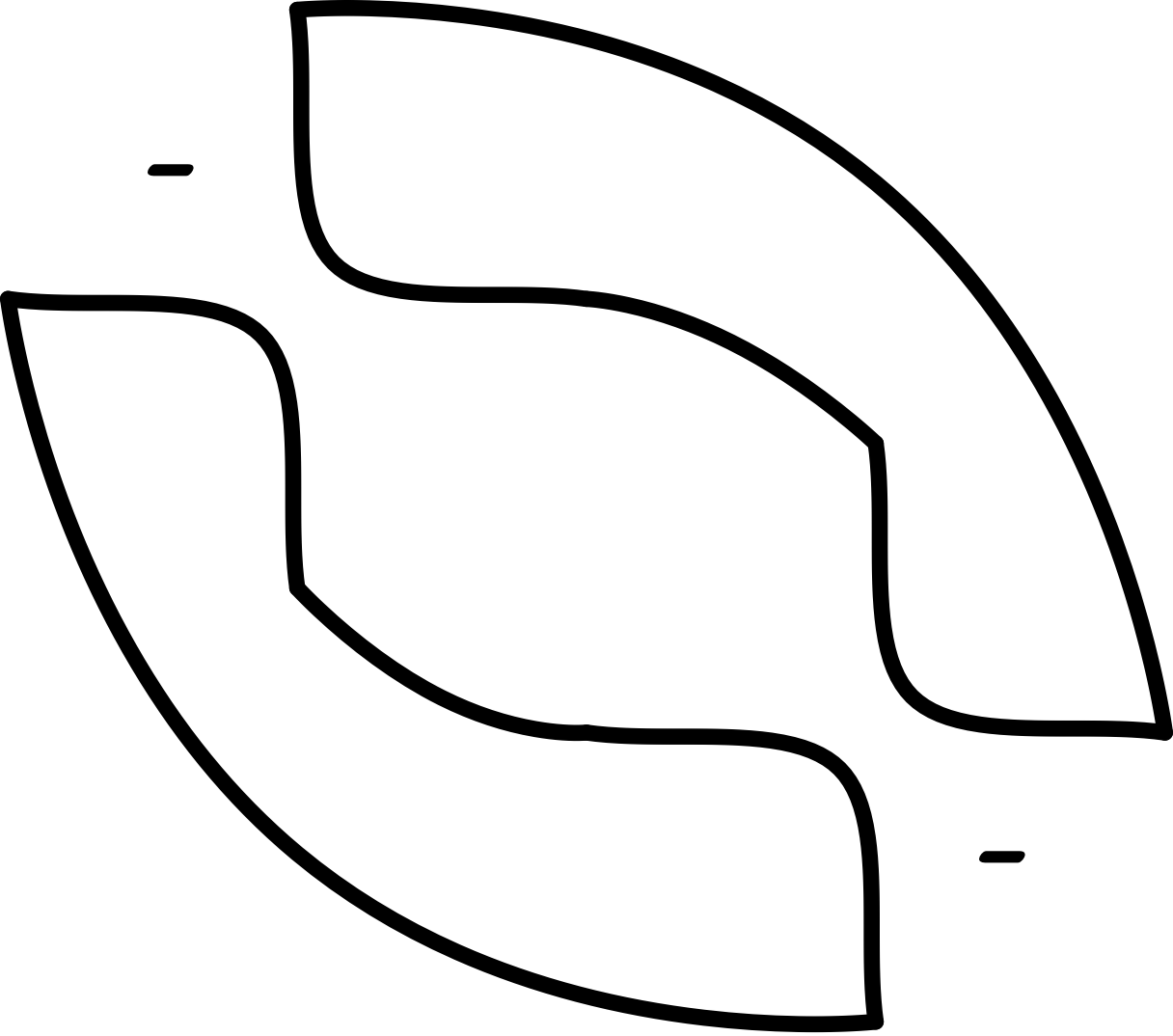} &
             \includegraphics[width=0.13\textwidth]{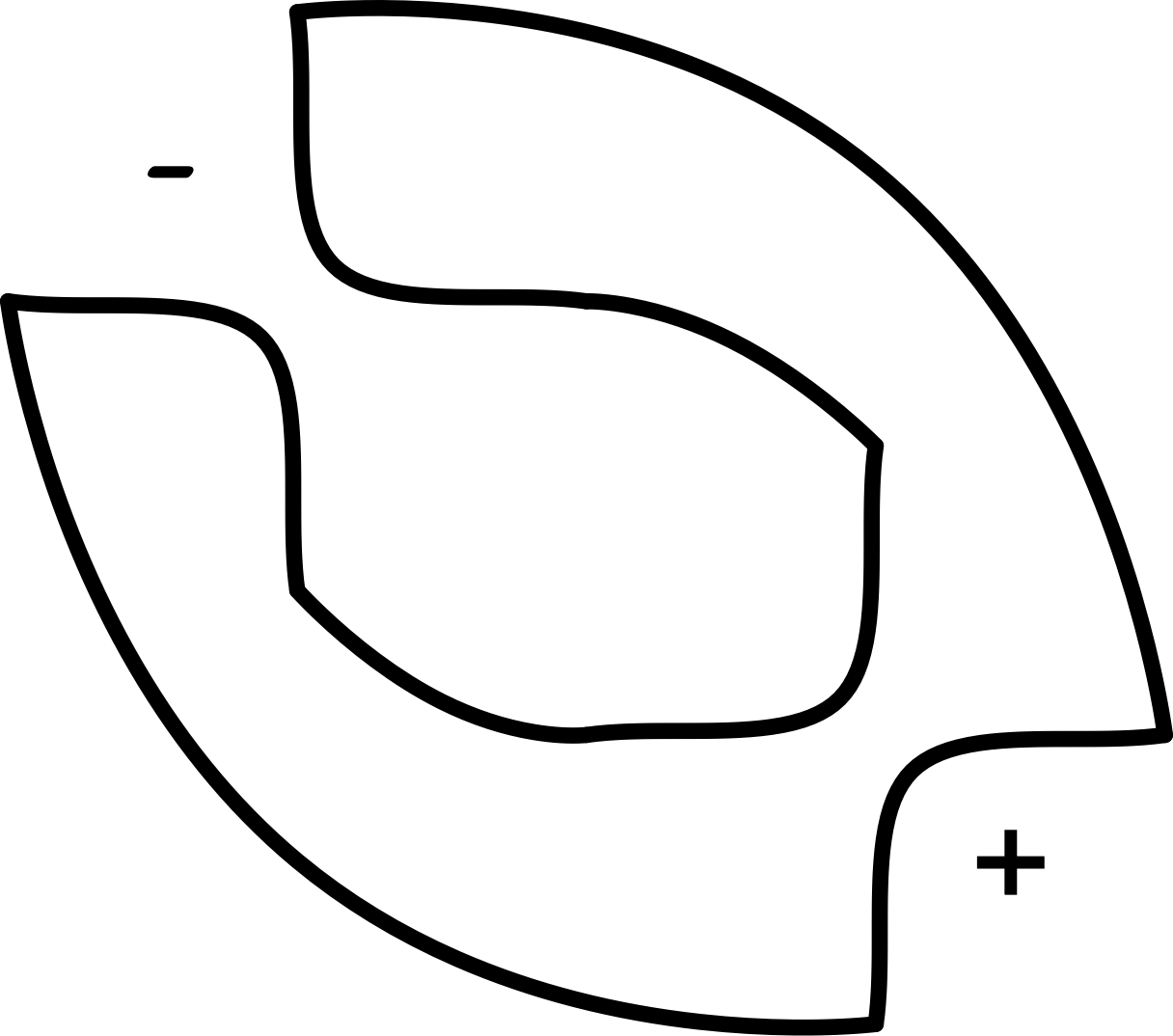} &
       \includegraphics[width=0.13\textwidth]{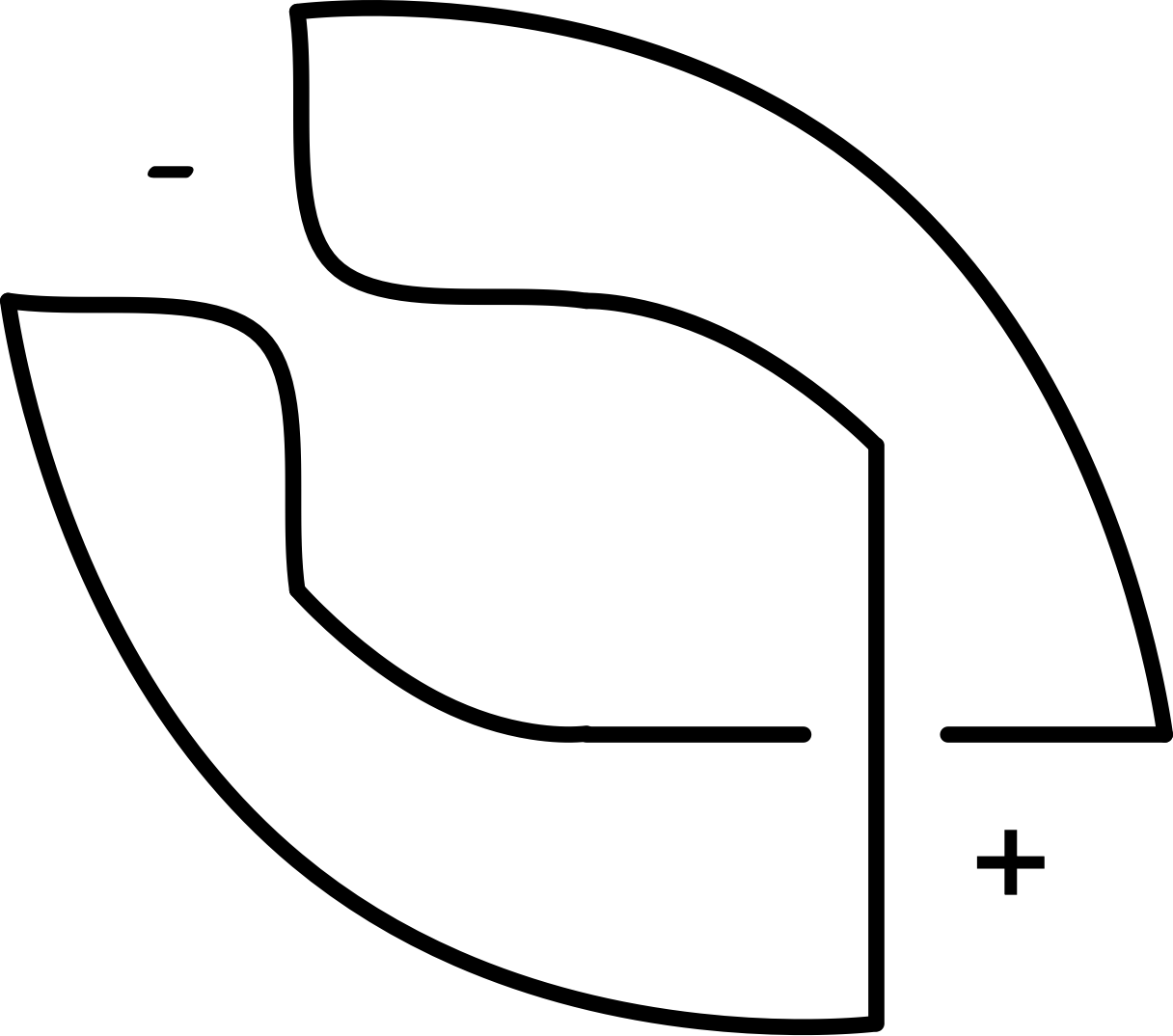} \\
       \\[-0.5em]
      & \includegraphics[width=0.13\textwidth]{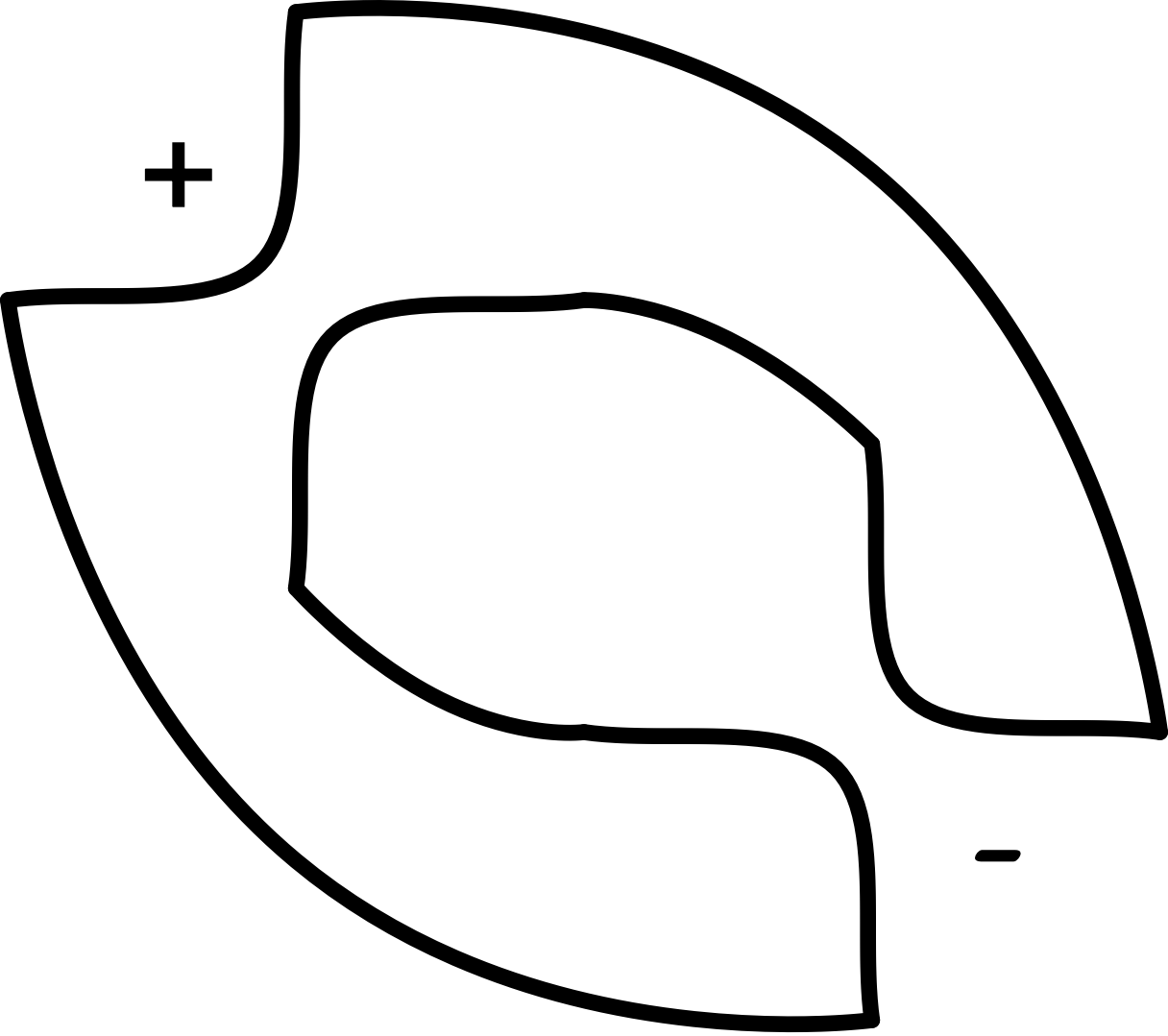} &
       \includegraphics[width=0.13\textwidth]{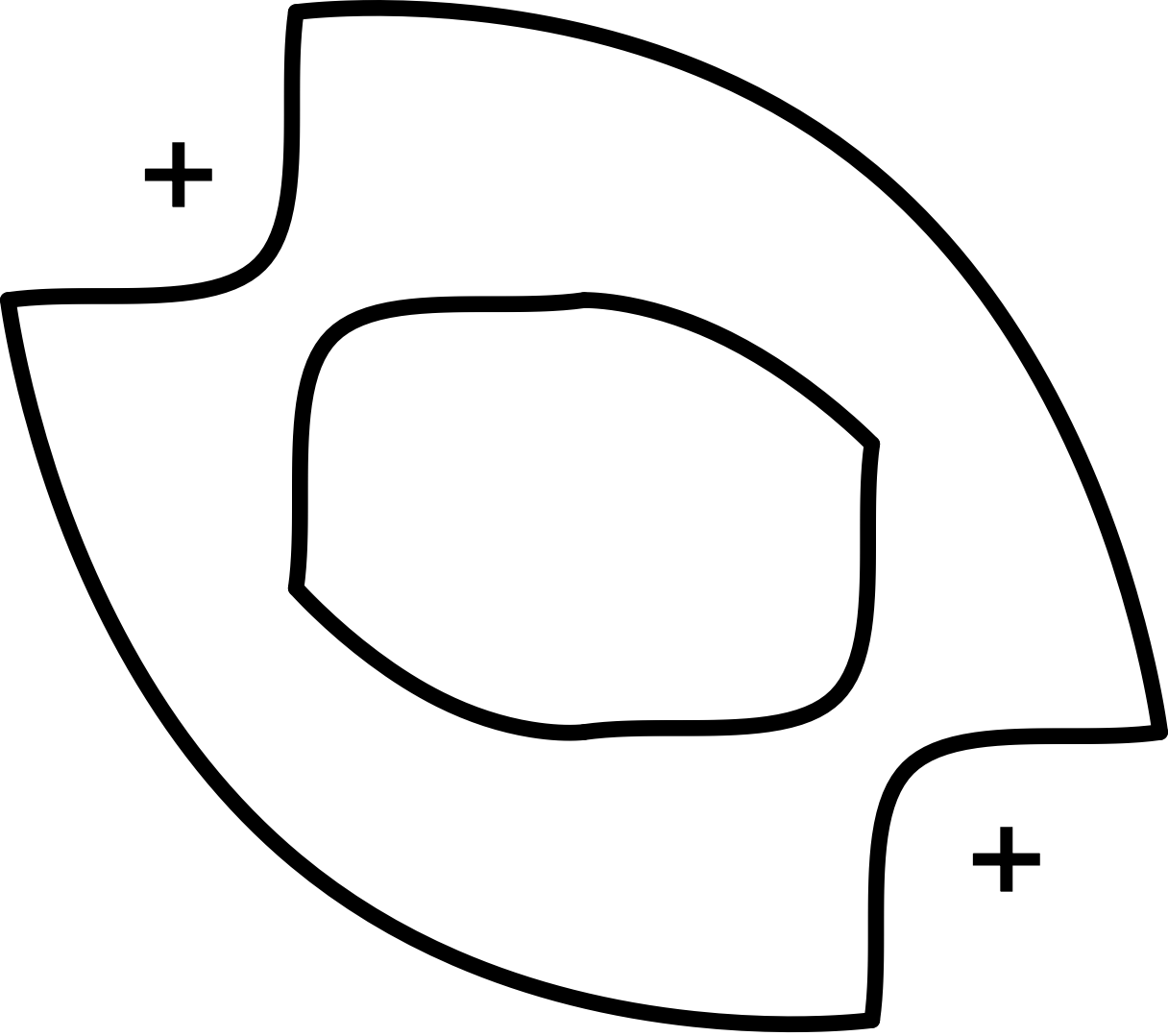} &
       \includegraphics[width=0.13\textwidth]{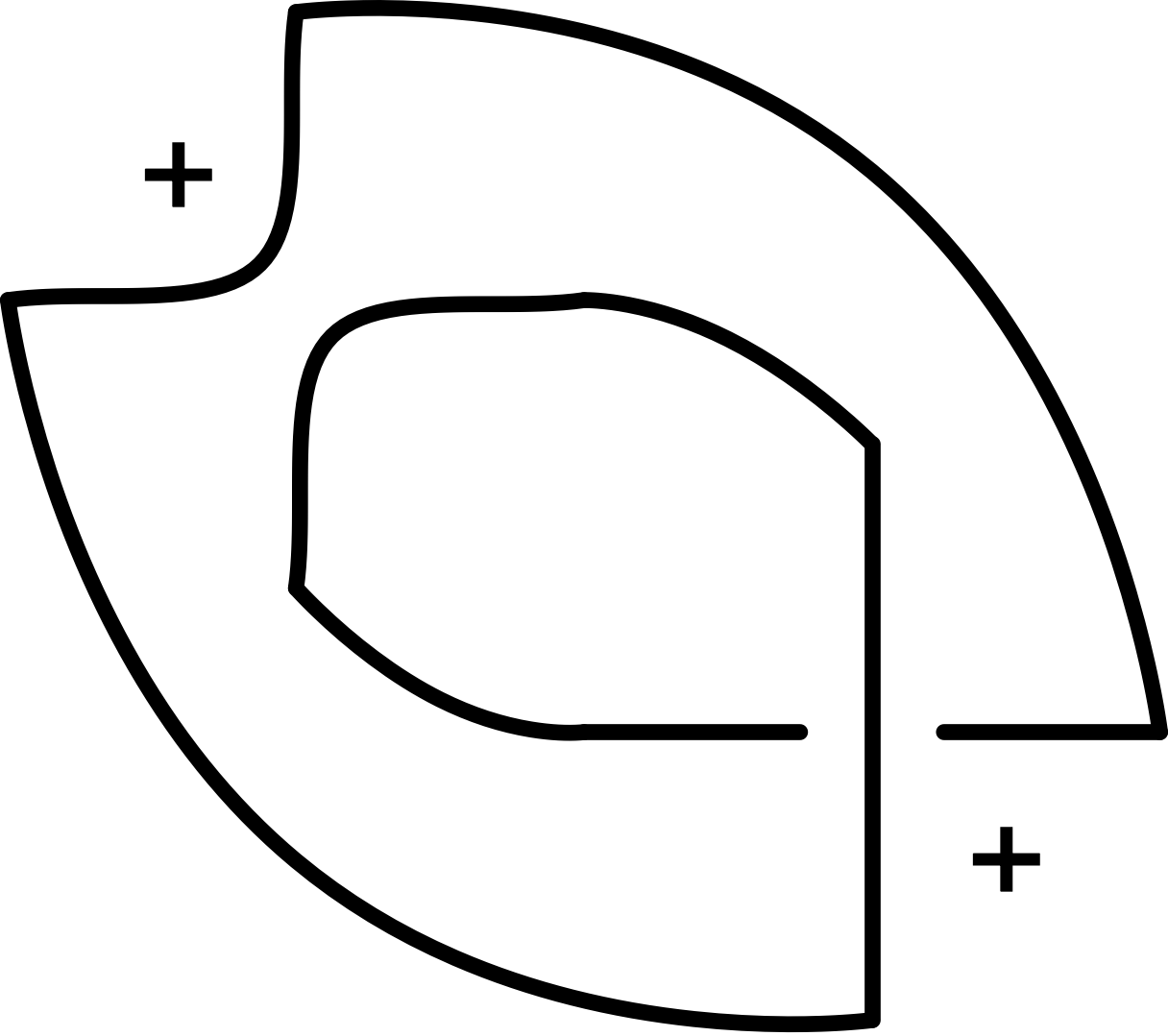} \\
       \\[-0.5em]
      & \includegraphics[width=0.13\textwidth]{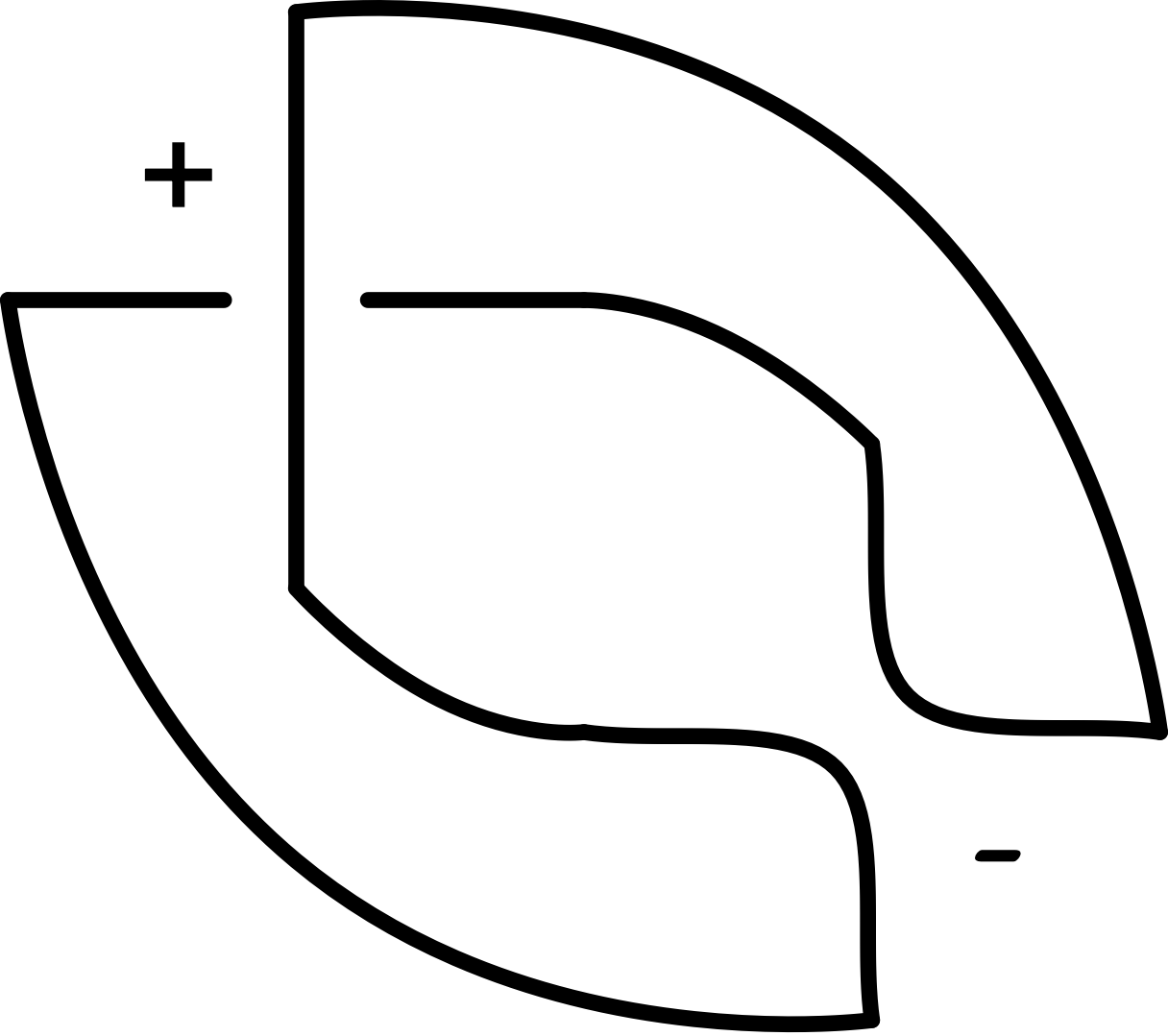} &
       \includegraphics[width=0.13\textwidth]{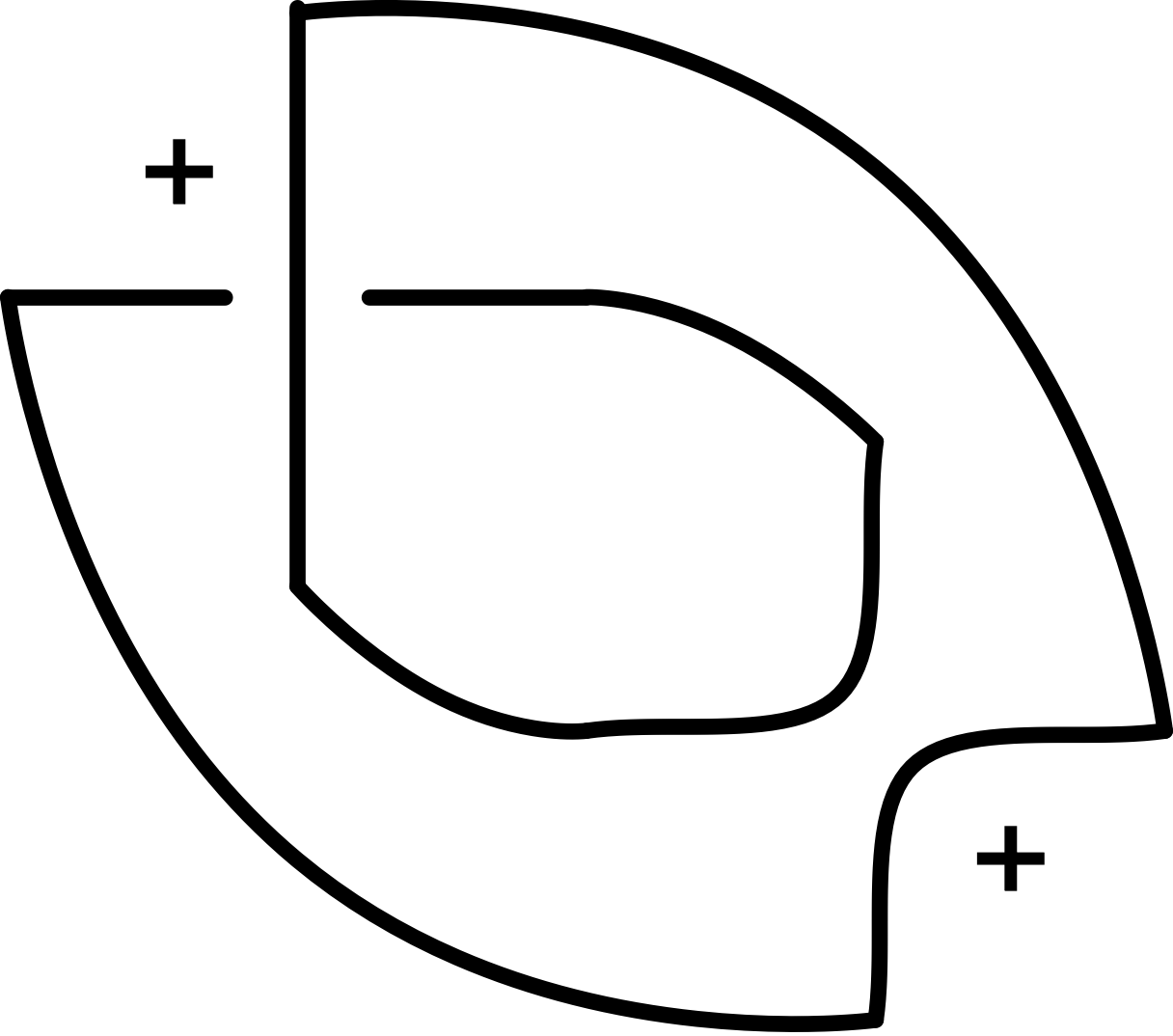} &
       \includegraphics[width=0.13\textwidth]{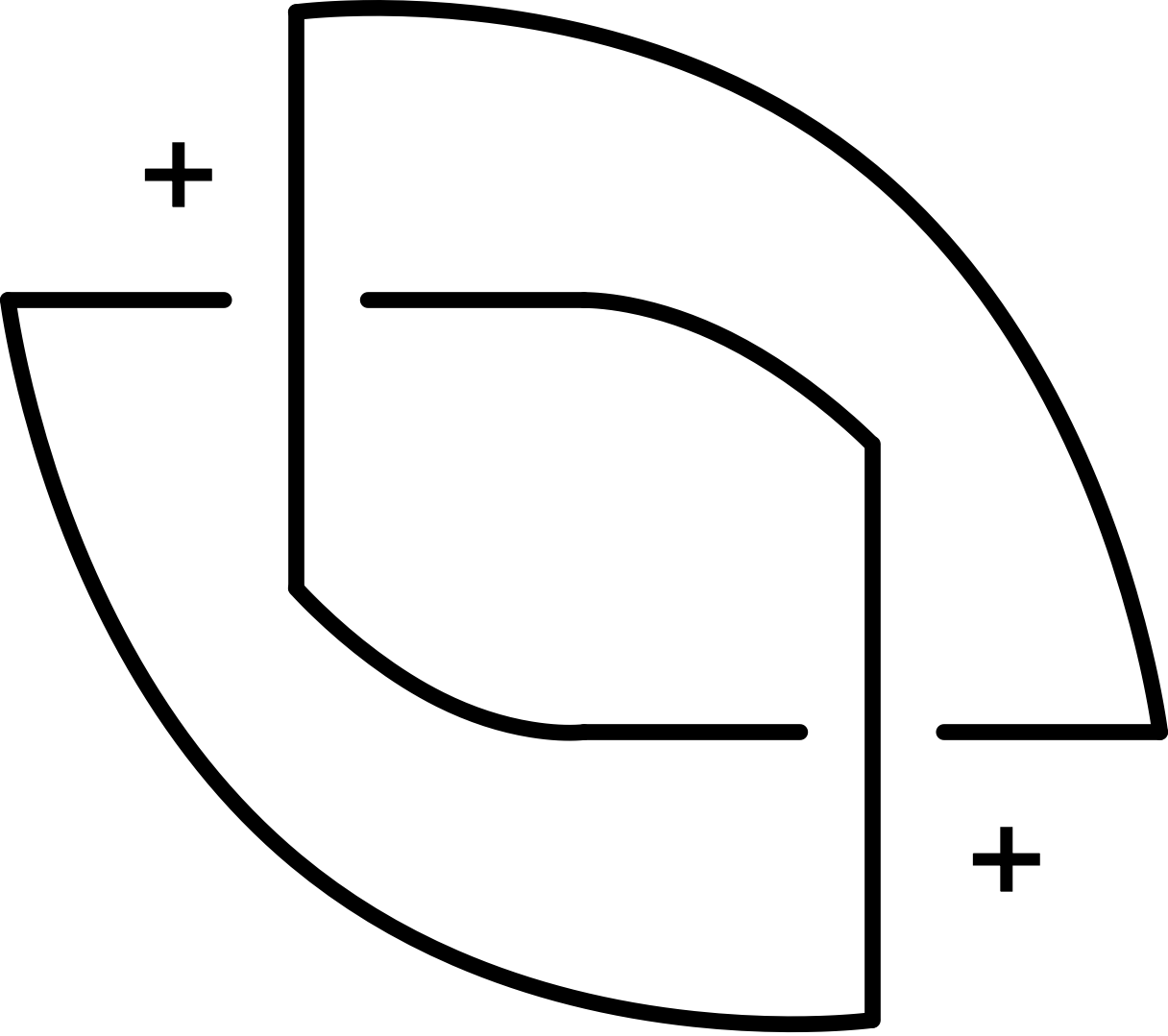} \\
       \\[-0.5em]
    \bottomrule
    \end{tabular}
  \end{center}
\end{table}

\begin{table}[h!]
\newcolumntype{O}{ >{\centering\arraybackslash} m{5cm} }
\newcolumntype{A}{ >{\centering\arraybackslash} m{2.2cm} }
  \begin{center}
    \caption{Another even orientation and its quantum decomposition into {\sl acp}'s.}
    \label{tab:decomposition_c}
    \begin{tabular}{@{}O@{\phantom{abcdefg}}A@{\phantom{ab}}A@{\phantom{ab}}A@{}}
    \toprule
      $\tau$ & \multicolumn{3}{c}{$\Phi(\tau)$}\\
      \midrule
      \\[-0.5em]
      \multirow{3}{*}{\includegraphics[width=0.3\textwidth]{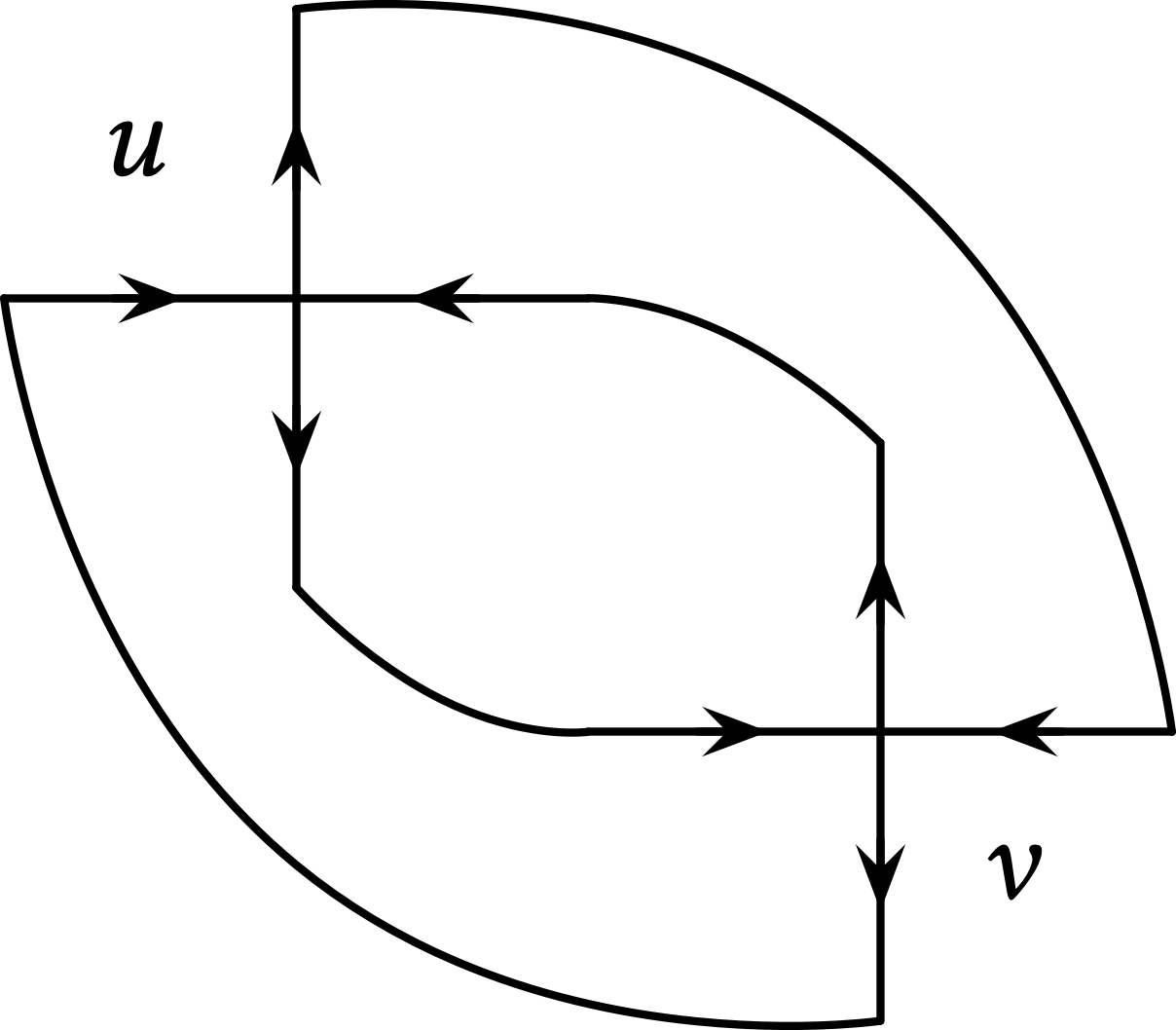}}
      & \includegraphics[width=0.13\textwidth]{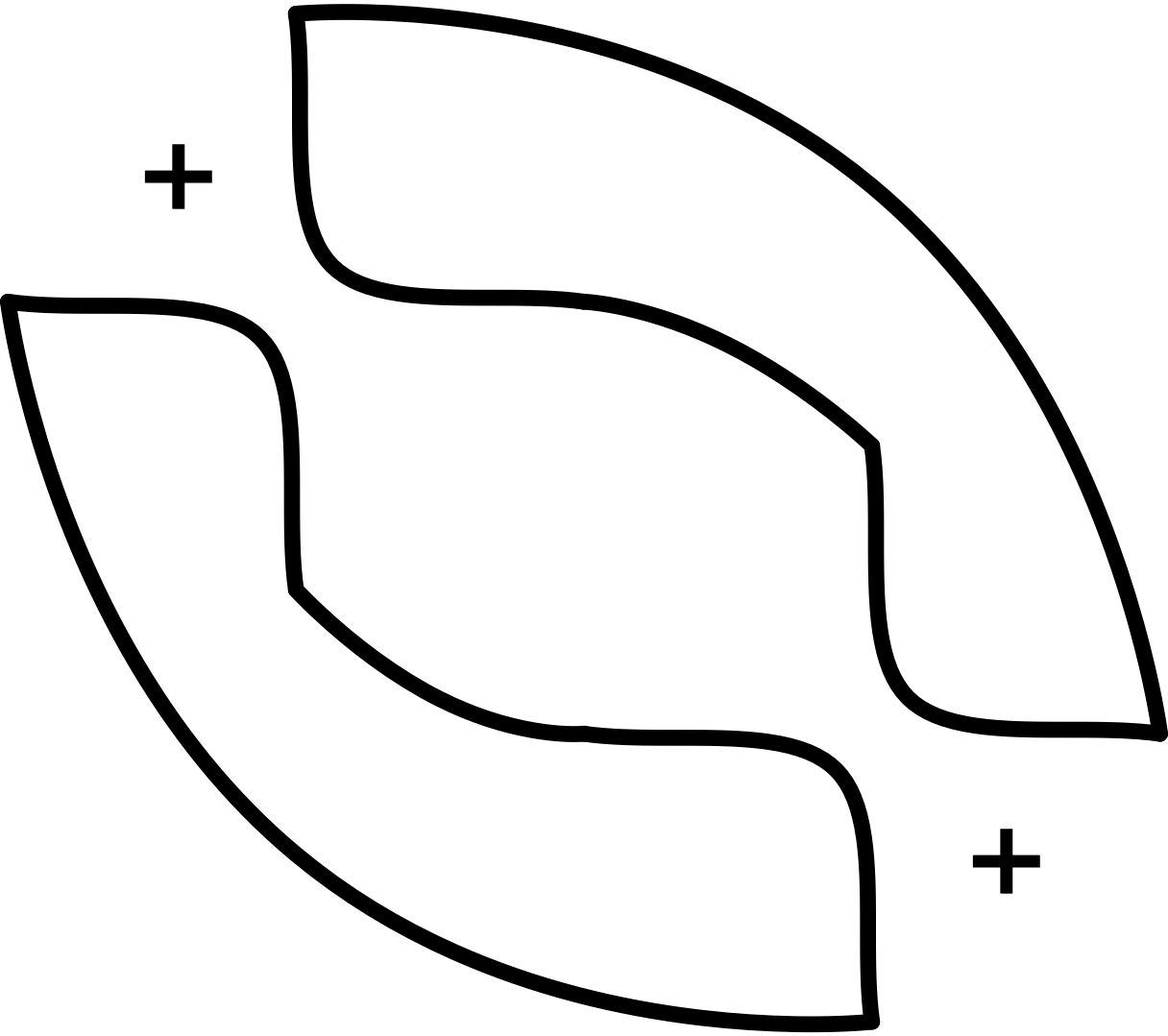} &
             \includegraphics[width=0.13\textwidth]{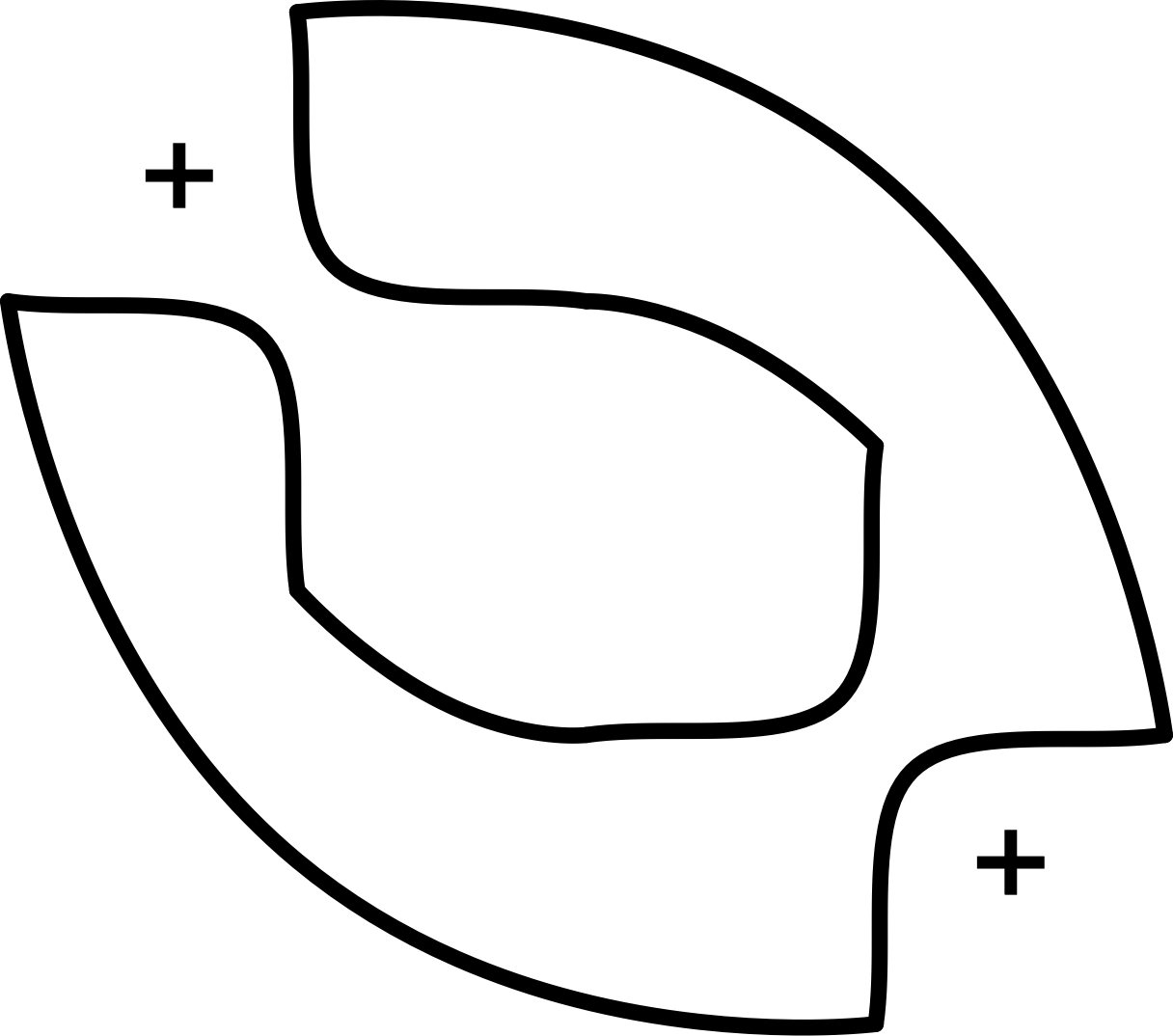} &
       \includegraphics[width=0.13\textwidth]{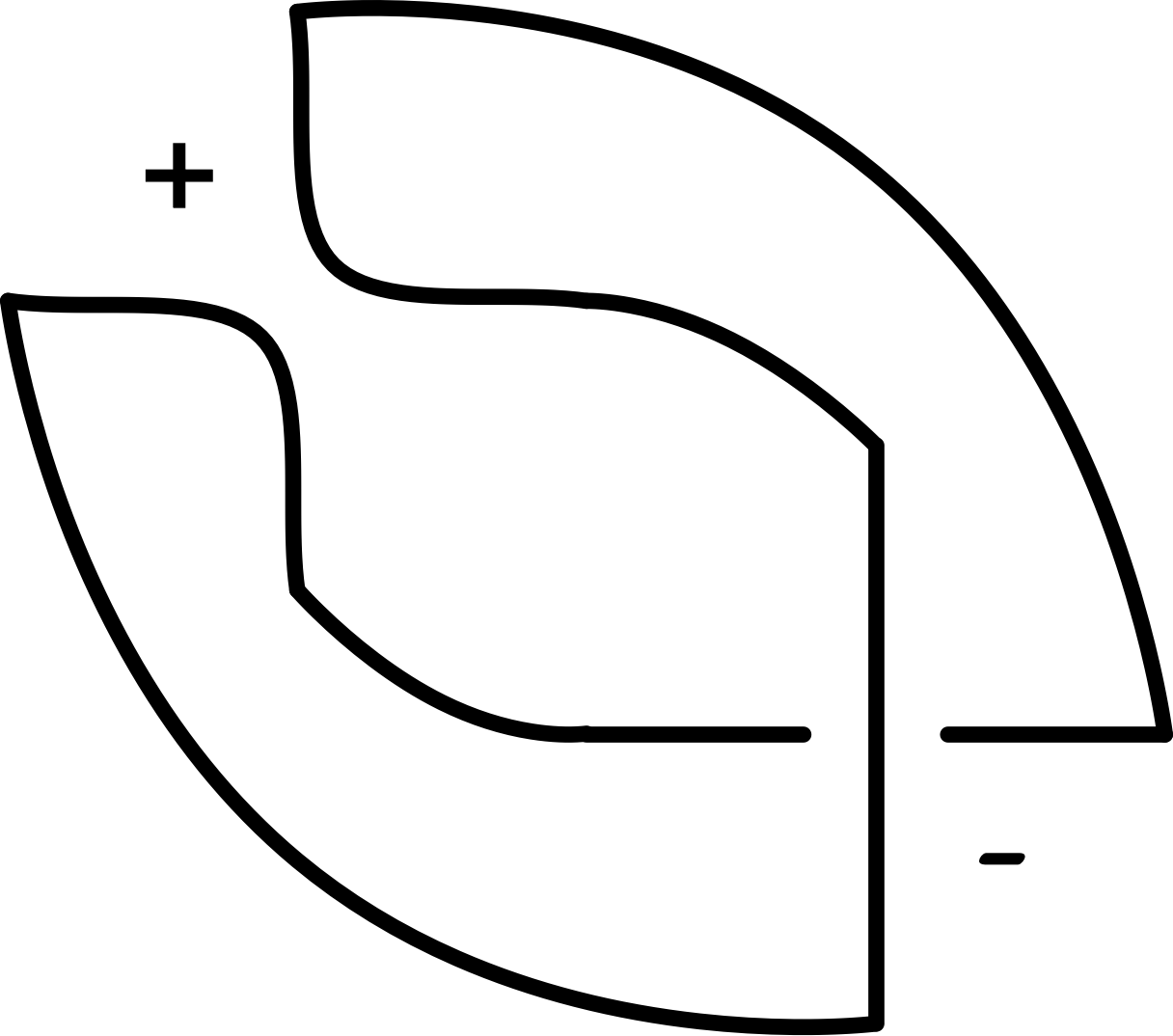} \\
       \\[-0.5em]
      & \includegraphics[width=0.13\textwidth]{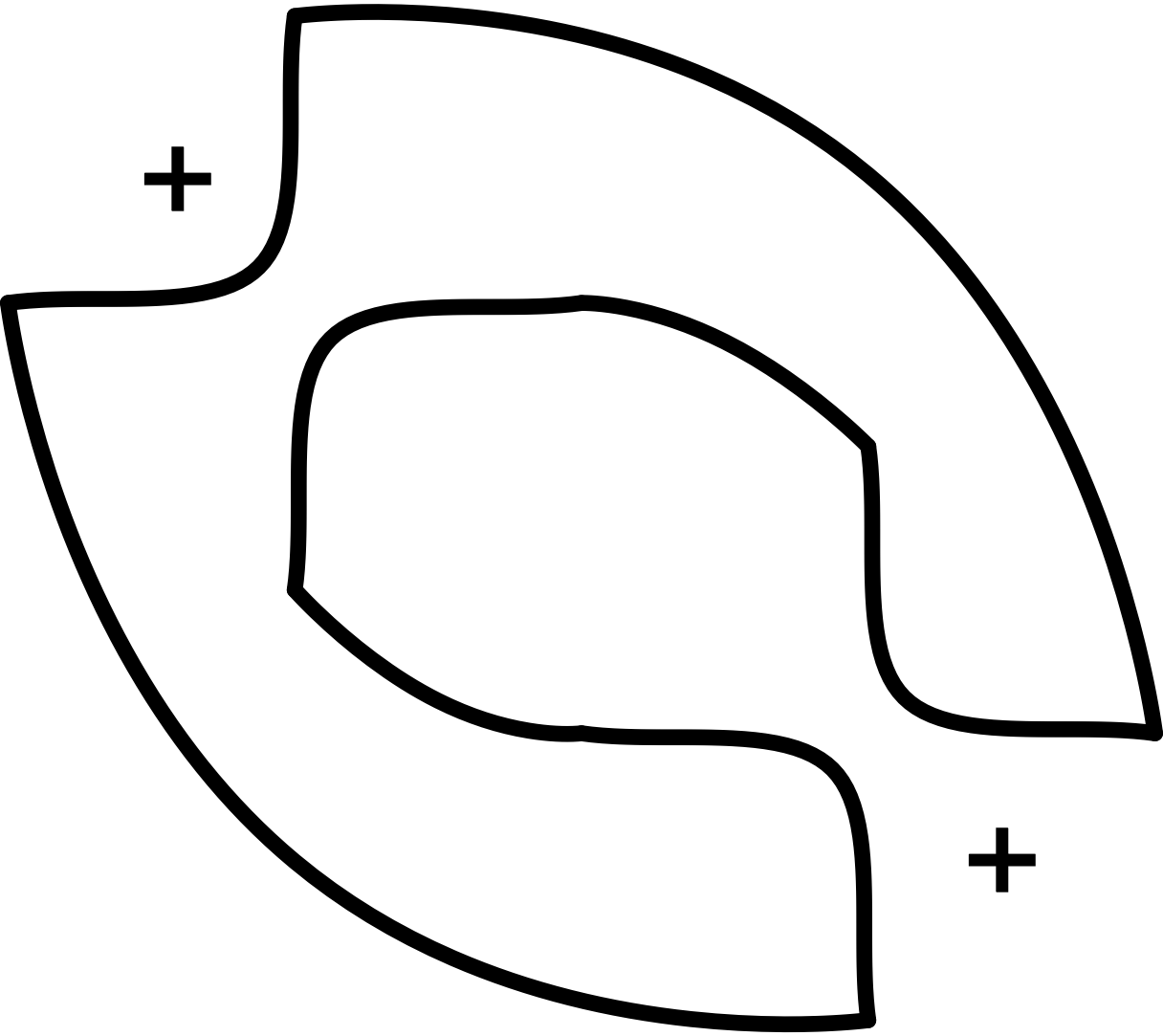} &
       \includegraphics[width=0.13\textwidth]{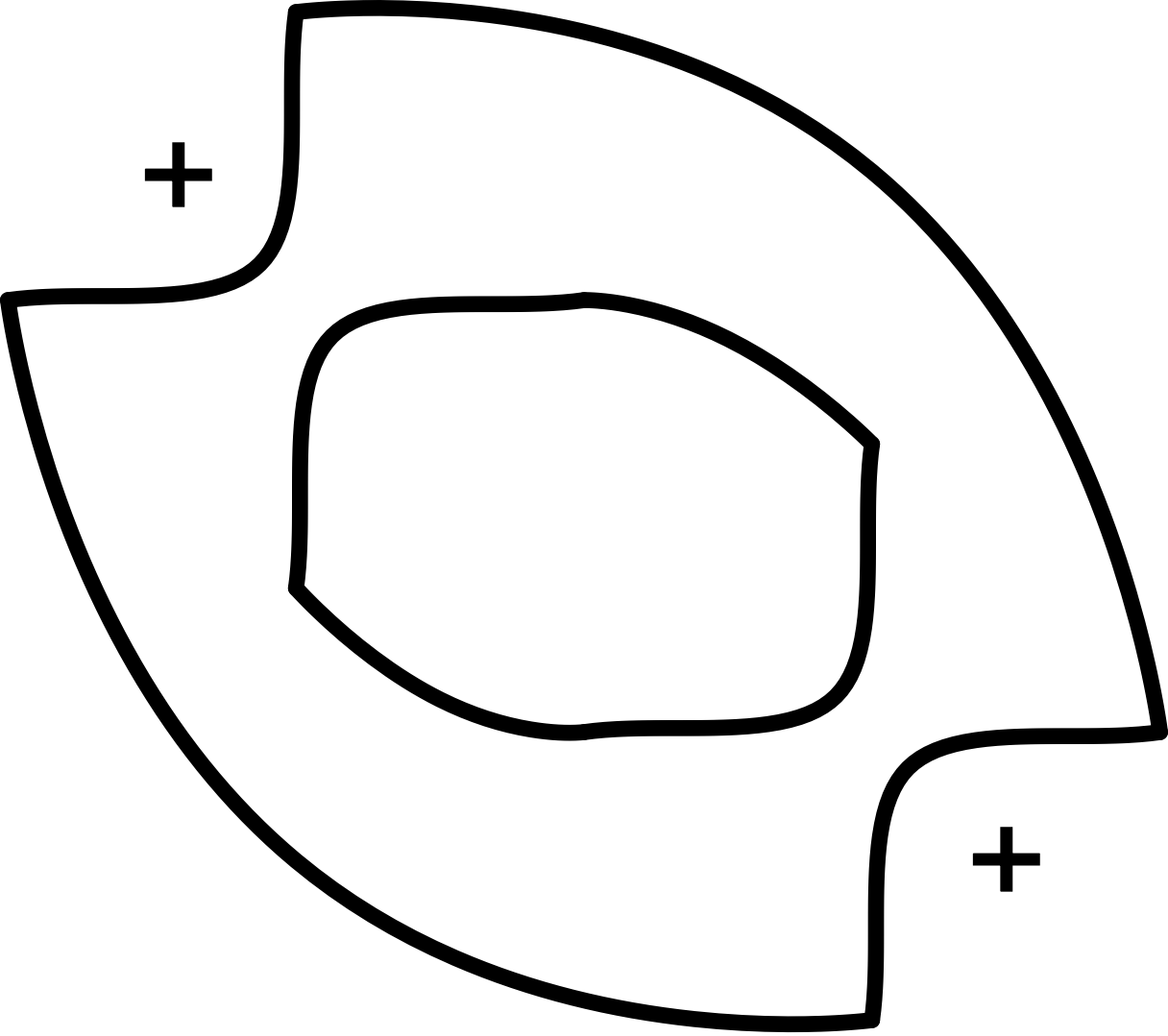} &
       \includegraphics[width=0.13\textwidth]{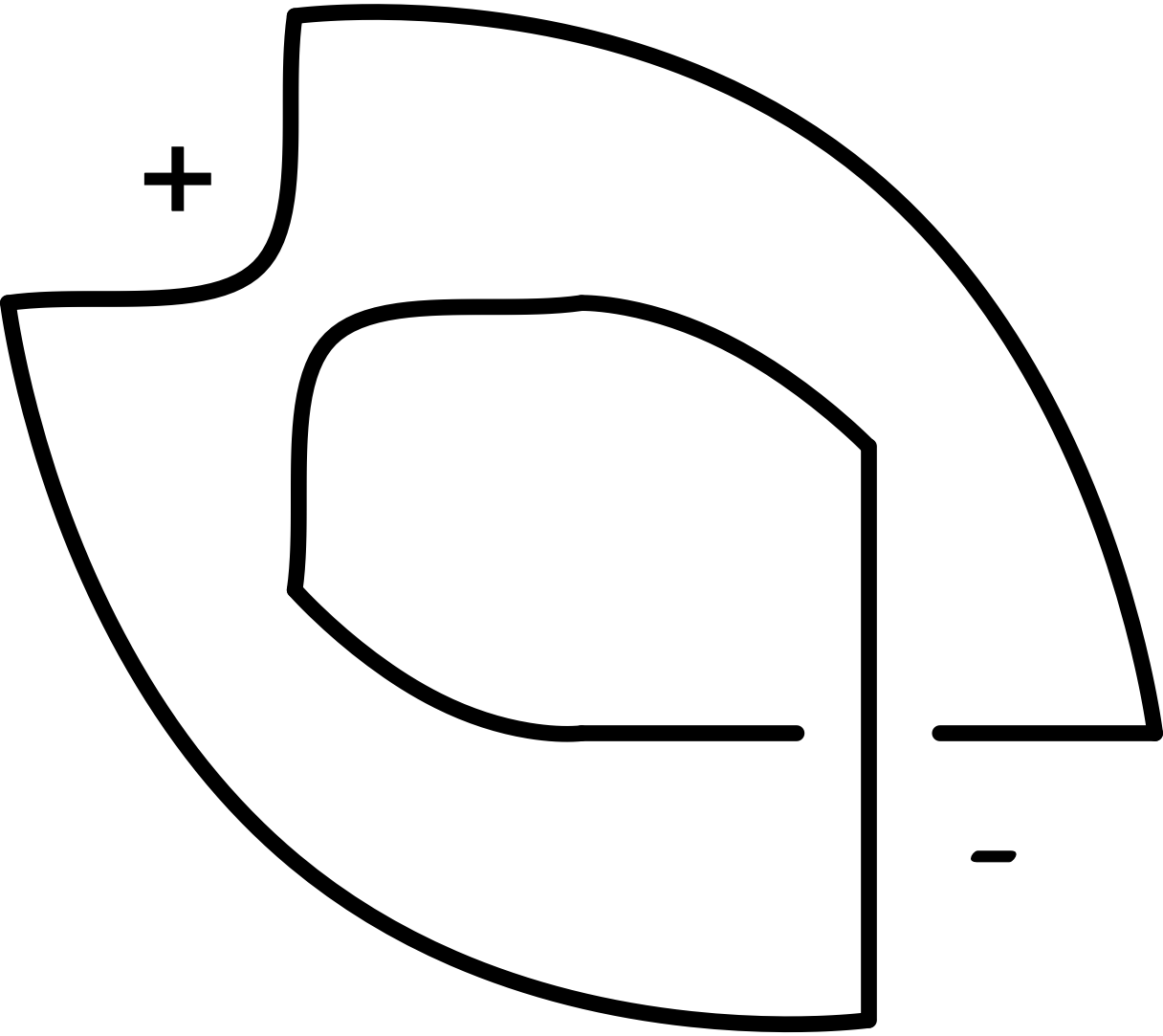} \\
       \\[-0.5em]
      & \includegraphics[width=0.13\textwidth]{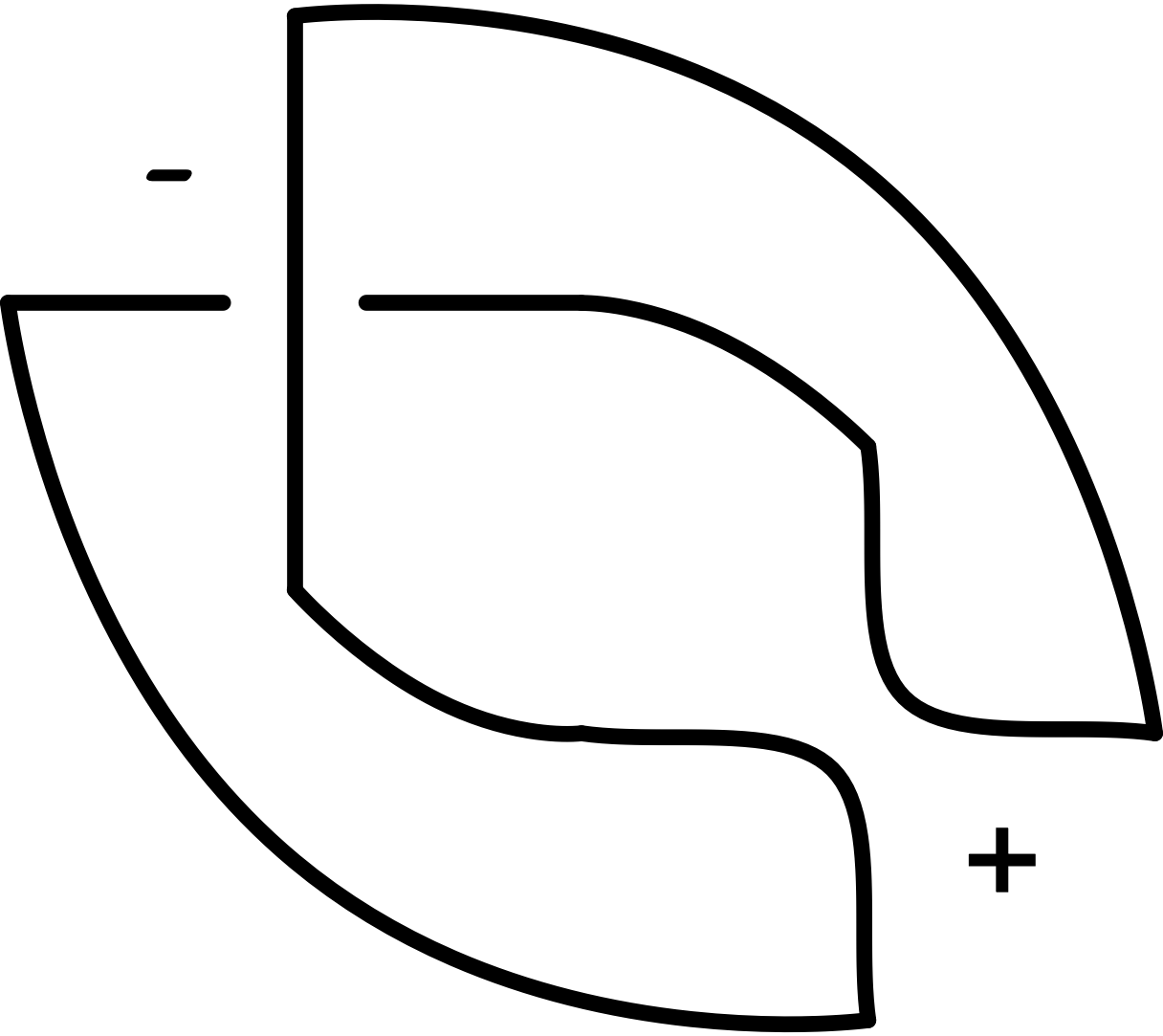} &
       \includegraphics[width=0.13\textwidth]{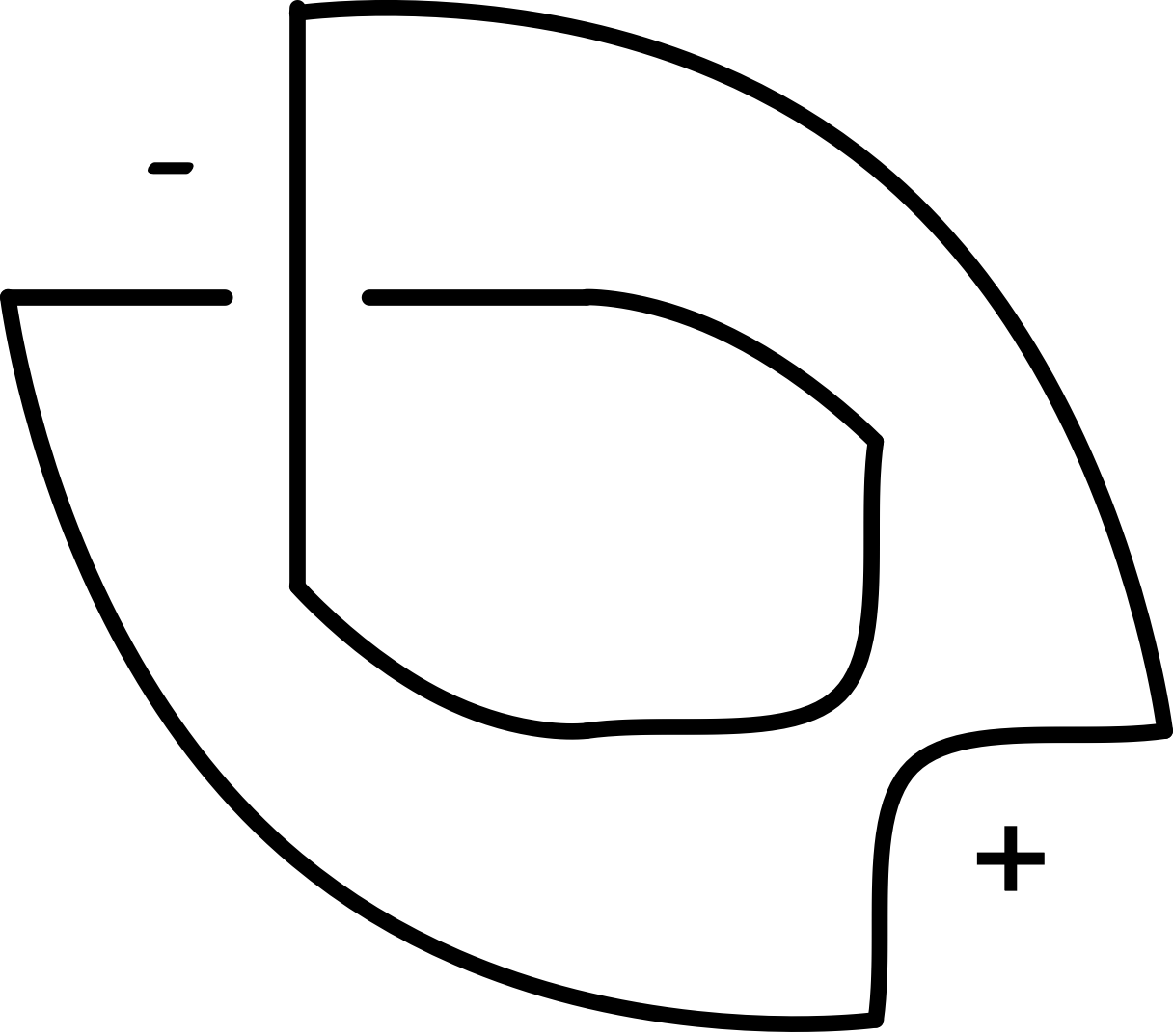} &
       \includegraphics[width=0.13\textwidth]{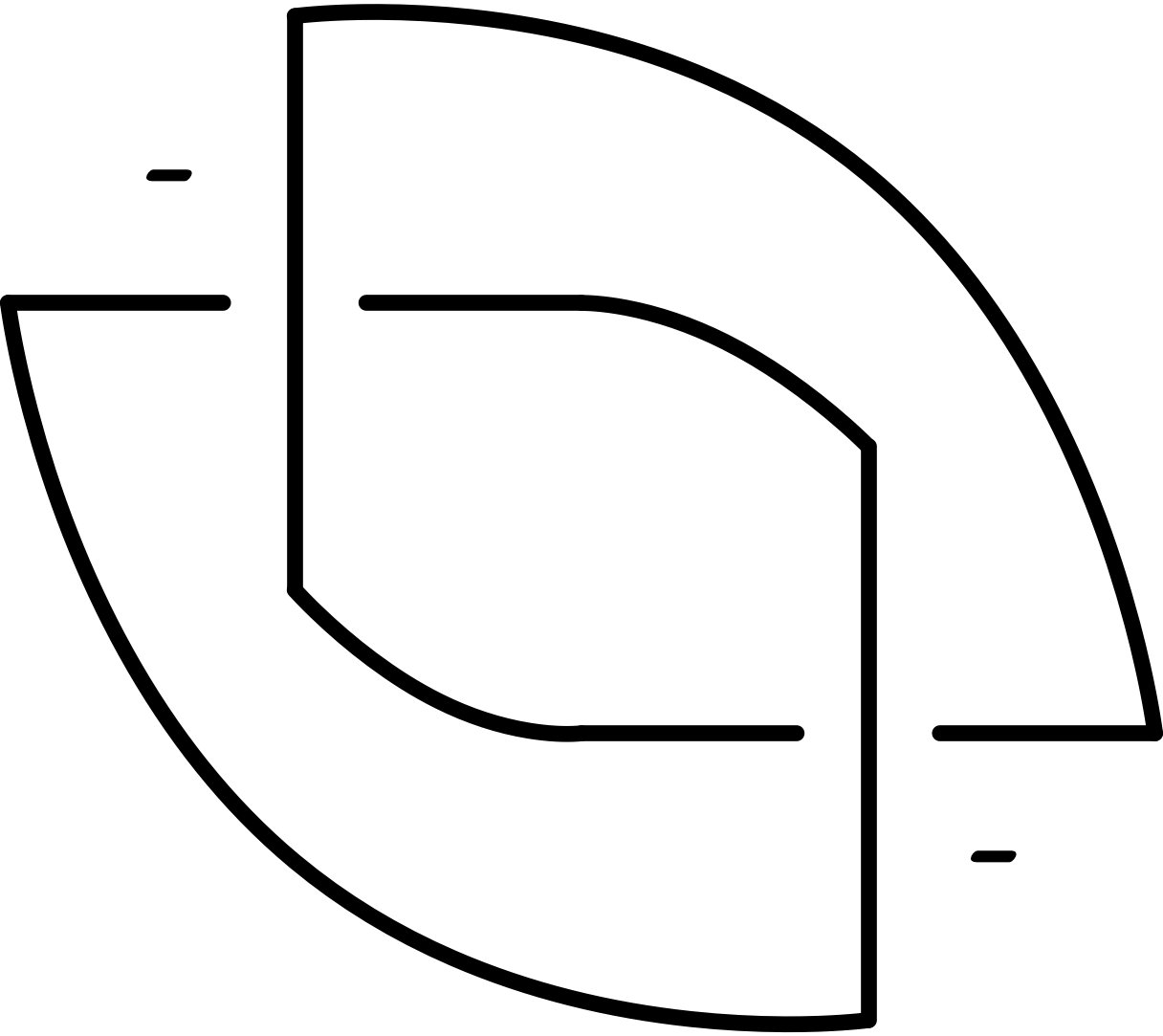} \\
       \\[-0.5em]
    \bottomrule
    \end{tabular}
  \end{center}
\end{table}

Conversely, for any {\sl dacp}, if we ignore the signs
at all vertices we get a valid even orientation (because
each sign applies to both pairs).
If a  {\sl dacp} comes from $\Phi(\tau)$
then we get back the  even orientation $\tau$.
 Therefore, the association from even orientations
 to {\sl dacp}'s is $1$-to-$3^{|V|}$, non-overlapping, and surjective.
Define $w$ to be a function assigning a weight to every signed pairing at every vertex and let the weight $\tilde{w}(\varphi)$ of an 
annotated circuit  partition $\varphi$, either undirected ({\sl acp})
 or directed ({\sl dacp}), be the product of weights at each vertex.
For every vertex in the eight-vertex model with the parameter setting $(a, b, c, d)$,
we define $w$ such that
\begin{equation}\label{eqn:weight}
\left\{\begin{smallmatrix}
a = w(\subinmatrix{1}_-) + w(\subinmatrix{2}_+) + w(\subinmatrix{3}_+) \\
b = w(\subinmatrix{1}_+) + w(\subinmatrix{2}_-) + w(\subinmatrix{3}_+) \\
c = w(\subinmatrix{1}_+) + w(\subinmatrix{2}_+) + w(\subinmatrix{3}_-) \\
d = w(\subinmatrix{1}_-) + w(\subinmatrix{2}_-) + w(\subinmatrix{3}_-) \\
\end{smallmatrix}\right..
\end{equation}
Note that for any $a,b,c,d$ this is a linear system of rank 4 in 
six variables, and there is a solution space of dimension 2 
(Lemma~\ref{lem:freedom_1} discusses this freedom).
Then the weight of an eight-vertex model configuration $\tau$ is equal to $\sum_{\varphi \in \Phi(\tau)}\tilde{w}(\varphi)$. This is obtained by 
writing a term in the summation in (\ref{Z-defn}), which is a  product of sums by (\ref{eqn:weight}),
 as a sum of products.
Note that a single {\sl acp}  has
the same weight when it becomes directed regardless
which directed state the {\sl dacp} is in.

We will illustrate the above in detail by the examples in 
 \tabref{tab:decomposition_a} and \tabref{tab:decomposition_c}.
We assume the same constraint $(a, b, c, d)$ is applied at $u$ and
$v$.
The orientation at one vertex determines the other in this graph $G$.
There are  a total  8 valid configurations, 4 of which are total reversals
of the other 4.   
 $Z(G) = 2[a^2 + b^2 + c^2 + d^2]$.
When we expand $Z(G)$ using (\ref{eqn:weight})
we get a total of 72 terms. These correspond to 72 {\sl dacp}'s.
There are 9 ways to assign
a pairing at $u$ and at $v$. 
If we consider the configuration in \tabref{tab:decomposition_a},
these 9 ways are listed under $\Phi(\tau)$, where the local orientation also
determines a sign $\pm$ at both $u$ and $v$.  These are 9 {\sl acp}'s
(without direction).
For each {\sl acp}  $\varphi$, the weight 
 $\tilde{w}(\varphi)$ is defined (without referring to
the {\sl dacp}, or the state of orientation on these circuits).
Three of the {\sl acp}'s (in the diagonal positions)
define two distinct circuits while the other six define one circuit each.
For each 2-tuple of pairings $(\rho_u, \rho_v)$ that
results in two circuits, the only valid annotations
assign $(+,+)$ or $(-,-)$ at $(u,v)$, giving a total of
6 {\sl acp}'s. And since each has two circuits, there are
a total of $24$ {\sl dacp}'s.
For the other six (off-diagonal) 2-tuples of pairings $(\rho_u, \rho_v)$ that 
results in a single  circuit, each has 4 valid annotations,
giving a total of
24 {\sl acp}'s. But these have only one circuit and thus
give 48 {\sl dacp}'s.
To appreciate the ``quantum superposition'' of the decomposition,
note that the same {\sl acp} that has $(\sub{2}_+, \sub{2}_+)$
at $(u, v)$ appears in both decompositions for
the distinct configurations in \tabref{tab:decomposition_a}
and \tabref{tab:decomposition_c}.

\begin{remark}
While a weight function $w$ satisfying (\ref{eqn:weight}) is not unique,
there are some regions of $(a,b,c,d)$ that can be specified 
directly in terms of  $w$
by any weight function $w$ satisfying (\ref{eqn:weight}),
and the specification is independent of the choice of
the weight function. 
E.g., the region $\overline{\mathcal{F}_>}$ is specified 
by
$\left\{\begin{smallmatrix}
w(\subinmatrix{1}_+) + w(\subinmatrix{2}_-) + w(\subinmatrix{3}_-) \ge 0\\
w(\subinmatrix{1}_-) + w(\subinmatrix{2}_+) + w(\subinmatrix{3}_-) \ge 0\\
w(\subinmatrix{1}_-) + w(\subinmatrix{2}_-) + w(\subinmatrix{3}_+) \ge 0\\
w(\subinmatrix{1}_+) + w(\subinmatrix{2}_+) + w(\subinmatrix{3}_+) \ge 0\\
\end{smallmatrix}\right.$.
Also $\mathcal{A}_\le$ is specified
by
$w(\sub{1}_-) \le w(\sub{1}_+)$,
$\mathcal{B}_\le$ by 
$w(\sub{2}_-) \le w(\sub{2}_+)$,
and 
$\mathcal{C}_\le$
by
$w(\sub{3}_-) \le w(\sub{3}_+)$.
In \lemref{lem:freedom_1}, we will show that
a nonnegative weight function $w$ satisfying (\ref{eqn:weight})
exists iff 
 $(a, b, c, d) \in \overline{\mathcal{F}_>}$.
\end{remark}

\begin{remark}
Although the
association from even orientations to
{\sl dacp}'s is  $1$-to-$3^{|V|}$, non-overlapping, and surjective,
the association from even orientations to
{\sl acp}'s is overlapping. If an {\sl acp} 
has $k$ circuits, it will be associated with $2^k$  even orientations.
It is this many-to-many association, with corresponding
weights, between  even orientations  and
{\sl acp}'s, that we call a \emph{quantum decomposition} of
eight-vertex model configurations, and each
is expressed as a ``superposition'' (weighted sum) of {\sl acp}'s.
\end{remark}

\captionsetup[subfigure]{labelformat=parens}
\begin{figure}[h!]
\centering
\begin{subfigure}[b]{0.32\linewidth}
\centering\includegraphics[width=0.78\linewidth]{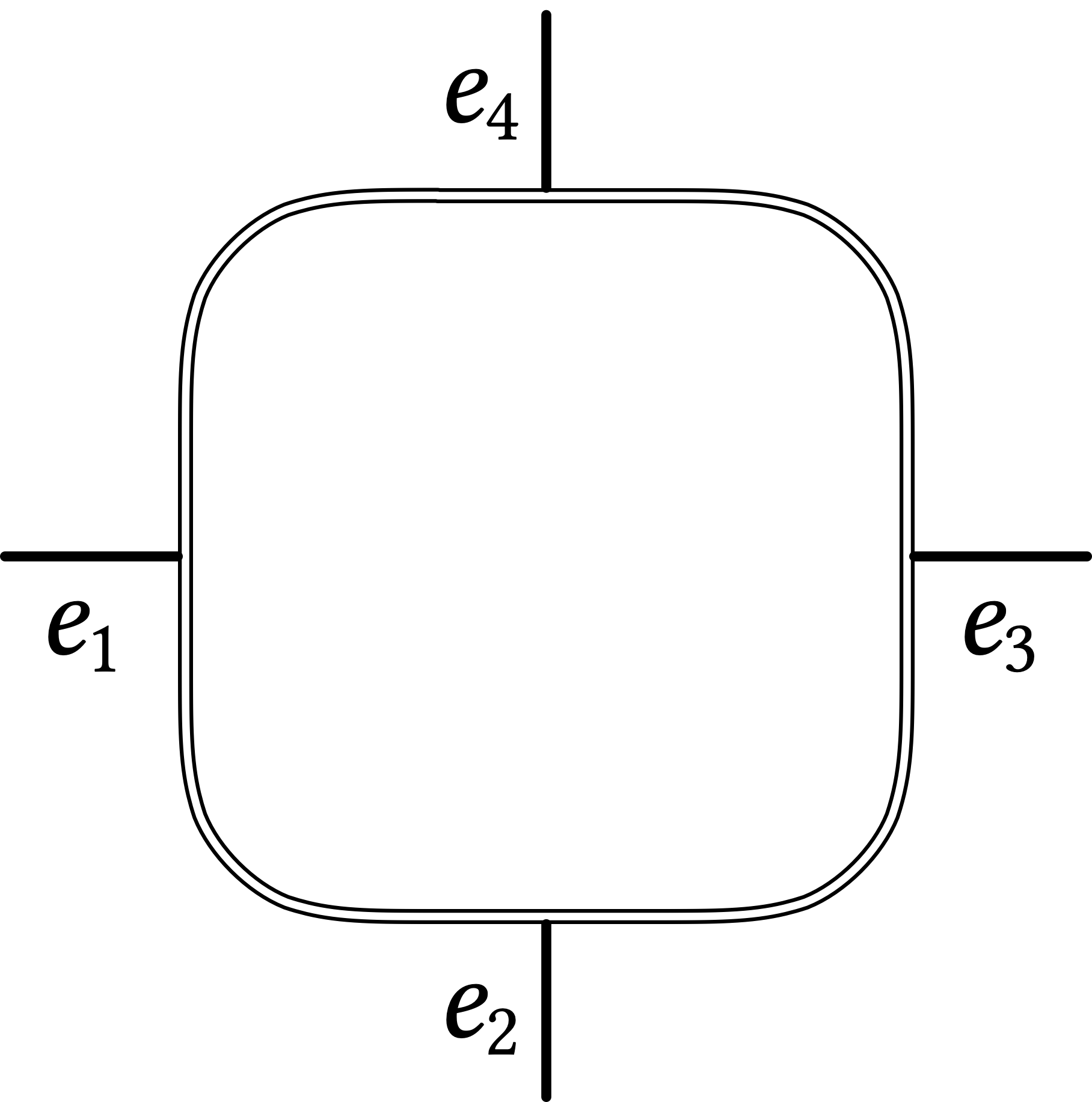}\caption{}\label{fig:gadget_plain}
\end{subfigure}
\begin{subfigure}[b]{0.32\linewidth}
\centering\includegraphics[width=0.82\linewidth]{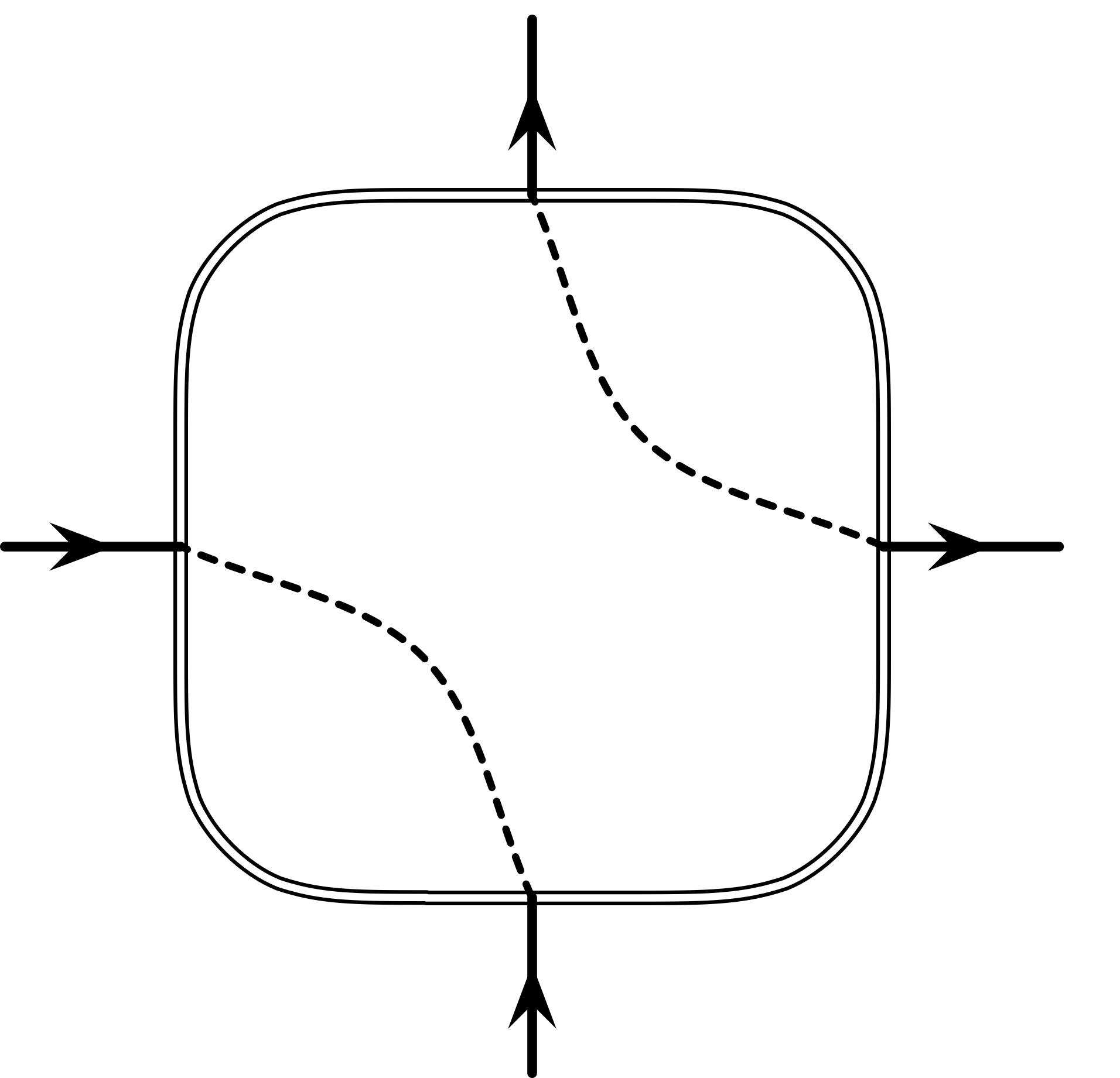}\caption{}\label{fig:gadget_a}
\end{subfigure}
\begin{subfigure}[b]{0.32\linewidth}
\centering\includegraphics[width=0.8\linewidth]{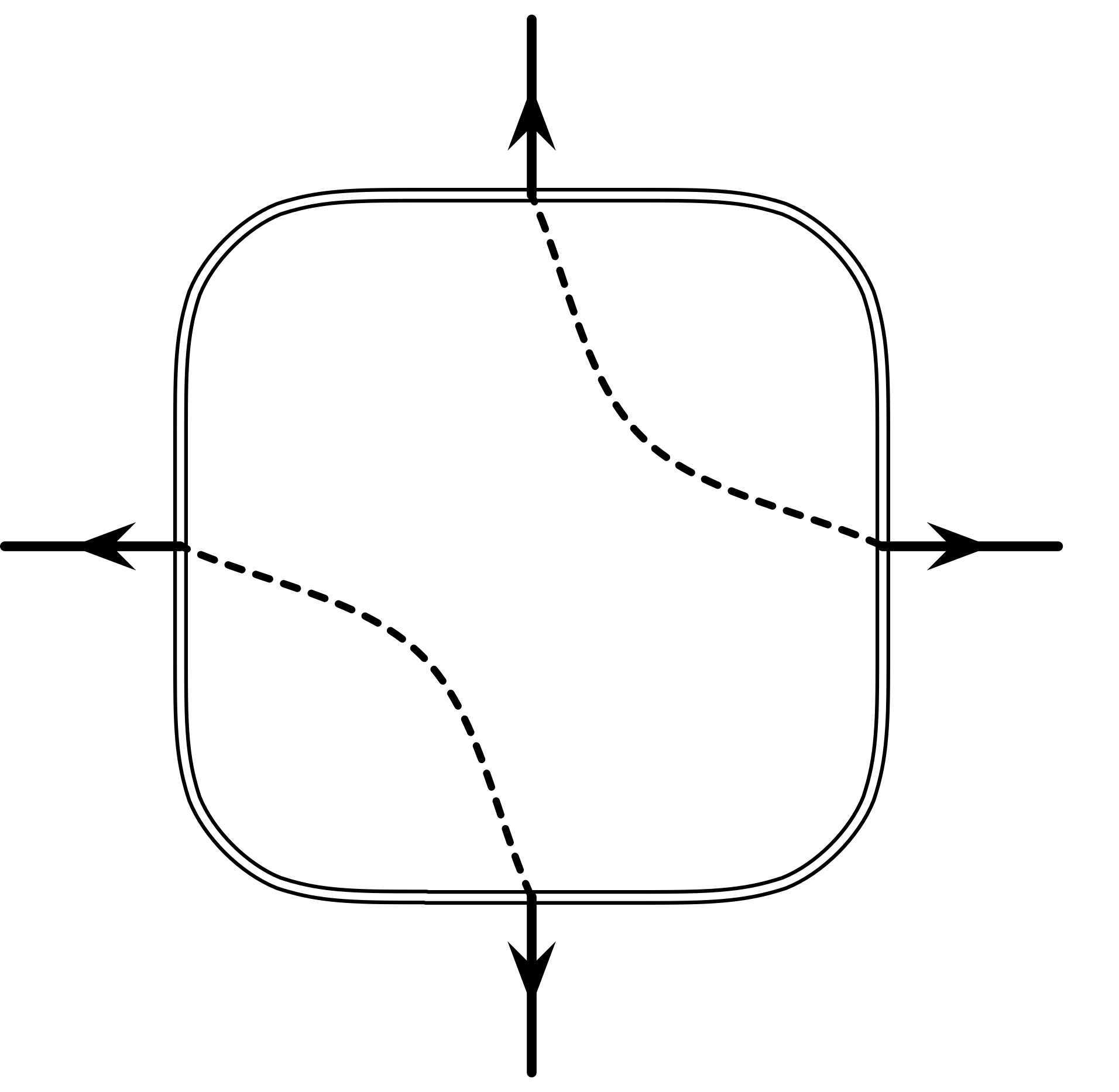}\caption{}\label{fig:gadget_d}
\end{subfigure}  
\caption{A 4-ary construction in the eight-vertex model.}\label{fig:gadget}
\end{figure}

A  \emph{4-ary construction} is 
a 4-regular graph $\Gamma$ having four ``dangling'' edges (\figref{fig:gadget_plain}),
and a constraint function on each node.
It defines a 4-ary constraint function with these four dangling
edges as input variables, when we sum 
the product of constraint function values on all vertices, over all
configurations on the internal edges of $\Gamma$.
If we imagine the graph $\Gamma$ is shrunken to a single point
 except the 4 dangling edges, then
 a 4-ary construction can be viewed as a virtual vertex with  
parameters  $(a', b', c', d')$ in the 
eight-vertex model, for some $a', b', c', d' \ge 0$.
This is proved in the following lemma.
A \emph{planar 4-ary construction} is a 4-regular plane graph 
with four dangling edges on the outer face ordered
counterclockwise $e_1, e_2, e_3, e_4$.
%
%

\begin{lemma}
If constraint functions in $\Gamma$ satisfy the even orientation rule and have arrow reversal symmetry, then the constraint function $f$ defined by $\Gamma$ also satisfies the even orientation rule and has arrow reversal symmetry.
\end{lemma}
\begin{proof}
Consider any even orientation on $\Gamma$,
and let $\Delta$ be the (sum of all in-degrees) $-$ (the sum of all out-degrees).
Each internal edge contrutes 0 to  $\Delta$.
By the even orientation rule, at every vertex
this difference is $0 \pmod 4$. Thus $\Delta \equiv 0 \pmod 4$.
Thus among the dangling edges, it also
satisfies the even orientation rule.

If we reverse all directions of an even orientation, which is
an involution,
each vertex contributes the same weight by arrow reversal symmetry.
Hence  $f$  also
has the arrow reversal symmetry.
%
%
\end{proof}

A \emph{trail \& circuit partition} ({\sl tcp}) for a 4-ary construction $\Gamma$ is a partition of the edges in $\Gamma$ into edge-disjoint circuits and exactly two \textit{trails} (walks with no repeated edges) which
end in the four dangling edges.
An \emph{annotated trail \& circuit partition} ({\sl atcp})
for $\Gamma$ is a {\sl tcp} with a valid annotation,
which assigns an \emph{even} number of $-$ sign along each circuit.
Like circuits, each trail in an {\sl atcp} has exactly two directed states.
If an {\sl atcp} $\varphi$ has $k$ circuits (and $2$ trails), then
$\varphi$ defines $2^{k+2}$ \emph{directed annotated trail \& circuit partitions}
({\sl datcp}'s).
The weight $\tilde{w}(\varphi)$ of
 an annotated trail \& circuit partition 
$\varphi$, either an {\sl atcp} or {\sl datcp},
can be similarly defined.
Again set the weight function as in (\ref{eqn:weight}).

Denote the constraint function of $\Gamma$ by $f$ and use the notations introduced in \secref{sec:prelim}.
Consider $f(0011)$.
Under the eight-vertex model, 
if a configuration $\tau$ of the 4-ary construction with constraint function $f$ 
has a nonzero contribution to $f(0011)$, it has $e_1, e_2$ coming in 
and $e_3, e_4$ going out. The contribution by $\tau$ is
a weighted sum over a set $\Phi_{0011}(\tau)$ of {\sl datcp}'s.
Each {\sl datcp} in $\Phi_{0011}(\tau)$ is 
captured in exactly one of the following three types,
according to how  $e_1, e_2, e_3, e_4$ are connected by
the two trails:
\vspace{-2mm}
\setlist[enumerate]{itemsep=-1mm}
\begin{enumerate}[(1)]
\item
$\{\ \xrightarrow{e_1} \Square \xleftarrow{e_2},~ \xleftarrow{e_3} \Square \xrightarrow{e_4} \}$
and on both trails
the numbers of minus pairings are odd; or
\item
$\{\ \xrightarrow{e_1} \Square \xrightarrow{e_4},~ \xrightarrow{e_2} \Square \xrightarrow{e_3} \}$
and on both trails 
the numbers of minus pairings are even (\figref{fig:gadget_a}); or
\item
$\{\ \xrightarrow{e_1} \Square \xrightarrow{e_3},~ \xrightarrow{e_2} \Square \xrightarrow{e_4} \}$
and on both trails 
the numbers of minus pairings are even.
\end{enumerate}
\vspace{-2mm}
Let $\Phi_{0011,\sub{1}_-}$,
 $\Phi_{0011,\sub{2}_+}$ and $\Phi_{0011,\sub{3}_+}$ be the subsets of {\sl datcp}'s 
contributing to $f(0011)$
defined in case (1), (2) and (3) respectively.
The value $f(0011)$ is a weighted sum of contributions according to $\tilde{w}$
 from these three disjoint sets.
Defining the weight of a set $\Phi$ of {\sl datcp}'s by $W(\Phi) = \sum_{\varphi \in \Phi} \tilde{w}(\varphi)$ yields $f(0011) = W(\Phi_{0011,\sub{1}_-}) + W(\Phi_{0011,\sub{2}_+}) + W(\Phi_{0011,\sub{3}_+})$.
Similarly we can define $\Phi_{1100,\sub{1}_-}, \Phi_{1100,\sub{2}_+}$
and $\Phi_{1100,\sub{3}_+}$, and get
 $f(1100) = W(\Phi_{1100,\sub{1}_-}) +  W(\Phi_{1100,\sub{2}_+}) + W(\Phi_{1100,\sub{3}_+})$.
Note that there is a bijective weight-preserving map between $\Phi_{0011,\sub{1}_-}$ and $\Phi_{1100,\sub{1}_-}$ by reversing the direction of every circuit and trail of a {\sl datcp}. Thus, $W(\Phi_{0011,\sub{1}_-}) = W(\Phi_{1100,\sub{1}_-})$, $W(\Phi_{0011,\sub{2}_+}) = W(\Phi_{1100,\sub{2}_+})$,
and $W(\Phi_{0011,\sub{3}_+}) = W(\Phi_{1100,\sub{3}_+})$.
Consequently $f(0011) = f(1100)$.
Similarly we have $f(0110) = f(1001)$, $f(0101) = f(1010)$ and $f(0000) = f(1111)$.

For any pairing $\varrho$,
and for every 4-bit pattern $b_1b_2b_3b_4 \in \{0, 1\}^4$,
we can define $\Phi_{b_1b_2b_3b_4, \varrho_+}$ 
if (both) paired $b_i \not = b_j$,
and  $\Phi_{b_1b_2b_3b_4, \varrho_-}$ 
if (both) paired $b_i = b_j$.
Then a
  further important observation is that for each {\sl datcp} in $\Phi_{0011,\sub{1}_-}$, if we only reverse every edge in the  trail between $\ \xrightarrow{e_1} \Square \xleftarrow{e_2}\ $
and keep the states of all circuits and the other trail unchanged, this {\sl datcp} has the same weight but now lies in $\Phi_{1111,\sub{1}_-}$.
This is because at every vertex $v$, reversing the orientation of any one branch of the given 
(annotated) pairing $\boldsymbol{\varrho}_v \in \{\sub{1}, \sub{2}, \sub{3}\} \times \{+,-\}$ does not change the value $w(\boldsymbol{\varrho}_v)$.
In this way, we set up a one-to-one weight-preserving map between $\Phi_{0011,\sub{1}_-}$ and $\Phi_{1111,\sub{1}_-}$, hence $W(\Phi_{0011,\sub{1}_-}) = W(\Phi_{1111,\sub{1}_-})$. 
Combining the result in the last paragraph we have proved the
first item below, and we name its common value $W(\sub{1}_-)$.
The other items  are proved similarly.
\vspace{-2mm}
\setlist[itemize]{itemsep=-1mm}
\begin{itemize}
\item
$W(\sub{1}_-) = W(\Phi_{0011,\sub{1}_-}) = W(\Phi_{1100,\sub{1}_-}) = W(\Phi_{0000,\sub{1}_-}) = W(\Phi_{1111,\sub{1}_-})$;
\item
$W(\sub{2}_-) = W(\Phi_{0110,\sub{2}_-}) = W(\Phi_{1001,\sub{2}_-}) = W(\Phi_{0000,\sub{2}_-}) = W(\Phi_{1111,\sub{2}_-})$;
\item
$W(\sub{3}_-) = W(\Phi_{0101,\sub{3}_-}) = W(\Phi_{1010,\sub{3}_-}) = W(\Phi_{0000,\sub{3}_-}) = W(\Phi_{1111,\sub{3}_-})$;
\item
$W(\sub{1}_+) = W(\Phi_{0110,\sub{1}_+}) = W(\Phi_{1001,\sub{1}_+}) = W(\Phi_{0101,\sub{1}_+}) = W(\Phi_{1010,\sub{1}_+})$;
\item
$W(\sub{2}_+) = W(\Phi_{0011,\sub{2}_+}) = W(\Phi_{1100,\sub{2}_+}) = W(\Phi_{0101,\sub{2}_+}) = W(\Phi_{1010,\sub{2}_+})$;
\item
$W(\sub{3}_+) = W(\Phi_{0011,\sub{3}_+}) = W(\Phi_{1100,\sub{3}_+}) = W(\Phi_{0110,\sub{3}_+}) = W(\Phi_{1001,\sub{3}_+})$.
\end{itemize}
\vspace{-2mm}
Consequently, $f$ has
parameters
$\left\{\begin{smallmatrix}
a' = W(\subinmatrix{1}_-) + W(\subinmatrix{2}_+) + W(\subinmatrix{3}_+) \\
b' = W(\subinmatrix{1}_+) + W(\subinmatrix{2}_-) + W(\subinmatrix{3}_+) \\
c' = W(\subinmatrix{1}_+) + W(\subinmatrix{2}_+) + W(\subinmatrix{3}_-) \\
d' = W(\subinmatrix{1}_-) + W(\subinmatrix{2}_-) + W(\subinmatrix{3}_-) \\
\end{smallmatrix}\right.$.

\begin{proof}[Proof of \thmref{thm:property_3}]
For any weight function satisfying (\ref{eqn:weight}), 
one can easily verify that
$(a,b,c,d) \in \overline{\mathcal{F}_>}$ iff the following inequalities hold:
$\left\{\begin{smallmatrix}
w(\subinmatrix{1}_+) + w(\subinmatrix{2}_-) + w(\subinmatrix{3}_-) \ge 0\\
w(\subinmatrix{1}_-) + w(\subinmatrix{2}_+) + w(\subinmatrix{3}_-) \ge 0\\
w(\subinmatrix{1}_-) + w(\subinmatrix{2}_-) + w(\subinmatrix{3}_+) \ge 0\\
w(\subinmatrix{1}_+) + w(\subinmatrix{2}_+) + w(\subinmatrix{3}_+) \ge 0\\
\end{smallmatrix}\right.$.

By \lemref{lem:freedom_1}, we can assume $w$ is a nonnegative weight function.
By definition,
each of the six quantities
$W(\sub{1}_+), W(\sub{1}_-),  W(\sub{2}_+),
 W(\sub{2}_-), W(\sub{3}_+)$ and $W(\sub{3}_-)$
is a sum over a set of {\sl datcp}'s of products of values of $w$,
and thus they are all nonnegative. Hence,
the constraint function $f$ defined by $\Gamma$ satisfies
$\left\{\begin{smallmatrix}
W(\subinmatrix{1}_+) + W(\subinmatrix{2}_-) + W(\subinmatrix{3}_-) \ge 0\\
W(\subinmatrix{1}_-) + W(\subinmatrix{2}_+) + W(\subinmatrix{3}_-) \ge 0\\
W(\subinmatrix{1}_-) + W(\subinmatrix{2}_-) + W(\subinmatrix{3}_+) \ge 0\\
W(\subinmatrix{1}_+) + W(\subinmatrix{2}_+) + W(\subinmatrix{3}_+) \ge 0\\
\end{smallmatrix}\right.$.
This is equivalent to the assertion that the parameters $(a',b',c',d')$ of $f$
belong to the region $\overline{\mathcal{F}_>}$.
\end{proof}


\begin{proof}[Proof of \thmref{thm:property_1}]
By definition 
$(a, b, c, d) \in \mathcal{A}_\le \bigcap \mathcal{B}_\le \bigcap \mathcal{C}_\le$
means that 
$\left\{\begin{smallmatrix}
a + d \le b + c\\
b + d \le a + c\\
c + d \le a + b\\
\end{smallmatrix}\right.$.
By the weight function $w$ defined in (\ref{eqn:weight}) this is equivalent to 
$\left\{\begin{smallmatrix}
w(\subinmatrix{1}_+) \ge w(\subinmatrix{1}_-)\\
w(\subinmatrix{2}_+) \ge w(\subinmatrix{2}_-)\\
w(\subinmatrix{3}_+) \ge w(\subinmatrix{3}_-)\\
\end{smallmatrix}\right.$.
Since $\mathcal{A}_\le \bigcap \mathcal{B}_\le \bigcap \mathcal{C}_\le \subset \overline{\mathcal{F}_>}$, by \lemref{lem:freedom_1}
we can assume $w$ is nonnegative.
To prove \thmref{thm:property_1} we only need to establish
$\left\{\begin{smallmatrix}
W(\subinmatrix{1}_+) \ge W(\subinmatrix{1}_-)\\
W(\subinmatrix{2}_+) \ge W(\subinmatrix{2}_-)\\
W(\subinmatrix{3}_+) \ge W(\subinmatrix{3}_-)\\
\end{smallmatrix}\right.$.
We prove $W(\sub{1}_+) \ge W(\sub{1}_-)$. Proof for the other two inequalities 
is symmetric.
%
 
%
An {\sl atcp} is a {\sl tcp} together with a valid annotation.
Consider the set  $\Psi$ of {\sl tcp}'s  such that the two 
(unannotated) trails connect
 $e_1$ with $e_2$, and $e_3$ with $e_4$.
Denote by $\chi_{12}$ (respectively $\chi_{34}$) the trail in $\psi$
 connecting $e_1$ and $e_2$ (respectively $e_3$ and $e_4$).
Each {\sl tcp} $\psi \in \Psi$ may have many valid annotations.

Since  $\Gamma$  is 4-regular,
any vertex inside $\Gamma$ appears exactly twice 
counting multiplicity in a {\sl tcp} $\psi$.
It appears either as a self-intersection point of
a trail or a circuit, or alternatively
 in
exactly two distinct trails/circuits.
So when traversed,
in total one encounters an even number of $-$  among all circuits and 
the two trails
in any valid annotation of $\psi$, and 
since one encounters an even number of $-$ along each circuit,
the numbers of $-$ along $\chi_{12}$ and $\chi_{34}$ have the same parity.
We say a valid annotation of $\psi$ is \emph{positive} 
if there is an even number of $-$ along $\chi_{12}$ (and $\chi_{34}$),
and \emph{negative} otherwise.

To prove $W(\sub{1}_+) \ge W(\sub{1}_-)$, it suffices to prove that for each {\sl tcp} 
$\psi \in \Psi$, the total weight $W_+$ contributed by the set of positive annotations of $\psi$ is 
at least the total weight $W_-$ contributed by the set of negative annotations of $\psi$.
We prove this nontrivial statement  by induction on the number $N$ of 
vertices shared by any two distinct circuits in $\psi$.

\medskip
\noindent
\textbf{Base case}: The base case is $N=0$. 
Let us first also assume that no trail or circuit is self-intersecting.
Then every vertex on any circuit $C$ of $\psi$
 is shared by $C$ and exactly one trail, $\chi_{12}$ or $\chi_{34}$.
Also, every vertex on $\chi_{12}$ or $\chi_{34}$ is shared with some circuit
or the other trail.

We will account for the product values of  $w(\boldsymbol{\varrho}_v)$
according to how $v$ is shared.
We first consider shared vertices of a circuit $C \in \psi$
with the trails.
Let $s, t \ge 0$ be the numbers of vertices $C$
 shares with $\chi_{12}$ and $\chi_{34}$, respectively. 
Let $x_i \ (1 \le i \le s)$ (if $s >0$) and $y_j \ (1 \le j \le t)$ (if $t >0$)
be these shared vertices respectively (for $s=0$ or $t=0$,  
the statements below are vacuously true).
For any $v$, if  
$\varrho$ is the pairing at $v$ according to $\psi$,
then let
  $w_{+}(v) = w(\varrho_+)$,
 and $w_{-}(v) = w(\varrho_-)$, both at $v$.
In any  valid annotation of $\psi$  (either positive or negative), 
one encounters an even number of $-$ on the vertices along $C$,
each of which is shared with exactly one of $\chi_{12}$ and $\chi_{34}$.
Hence the number of $-$ in $x_i \ (1 \le i \le s)$ has
the same parity as the number of $-$ in $y_j \ (1 \le j \le t)$. 
Other than having the same parity,
the annotation for $x_i \ (1 \le i \le s)$ 
is independent from the annotation for $y_j \ (1 \le j \le t)$
for a valid annotation, and from the annotations on other circuits.
Let $S_{+}(C)$ (respectively $S_{-}(C)$) be the sum of
products of $w(\boldsymbol{\varrho}_v)$ over 
  $v \in \{x_i \mid 1 \le i \le s\}$, summed over valid annotations 
 such that the number of $-$ in $x_i \ (1 \le i \le s)$ is even (respectively
odd). Similarly
let $T_{+}(C)$ (respectively $T_{-}(C)$) be the corresponding
sums for $y_j \ (1 \le j \le t)$.
We have
\[S_{+}(C) - S_{-}(C) = \prod_{i=1}^s \left(w_{+}(x_i) - w_{-}(x_i)\right) \ge 0, \;~~~ T_{+}(C) - T_{-}(C) = \prod_{j=1}^t \left(w_{+}(y_j) - w_{-}(y_j)\right) \ge 0.\]
Both differences are nonnegative by the hypothesis of
\thmref{thm:property_1}.



The product $S_{+}(C) T_{+}(C)$ is
the  sum over all valid annotations of vertices on $C$ such that
the numbers of $-$ on vertices shared by $\chi_{12}$ and $C$
and by $\chi_{34}$ and $C$ are both even.
Similarly $S_{-}(C) T_{-}(C)$
is the  sum over all valid annotations of vertices on $C$ such that 
the numbers of $-$ on vertices shared by $\chi_{12}$ and $C$ 
and by $\chi_{34}$ and $C$ are both odd.
We have $S_{+}(C) T_{+}(C) \ge S_{-}(C) T_{-}(C)$.

Next we also account for the vertices shared by $\chi_{12}$ and $\chi_{34}$
in $\psi$.
Let $p$ be this number and if $p>0$ let  $z_k \ (1 \le k \le p)$ be these
vertices.
Let $q$ be the number of circuits in $\psi$, denoted 
 by $C_l \ (1 \le l \le q)$.
Then we claim that
\[W_+ - W_- =  \prod_{k=1}^p \left(w_{+}(z_k) - w_{-}(z_k)\right) 
\prod_{l=1}^q \left(S_{+}(C_l) T_{+}(C_l) - S_{-}(C_l) T_{-}(C_l)\right),\]
and in particular $W_+ - W_-  \ge 0$.
To prove this claim we only need to expand the product,
and separately collect terms that have a $+$ sign and a $-$ sign.
In a product term in the fully expanded sum,
let $p'$ be the number of
$- w_{-}(z_k)$, and $q'$ be the number of 
$- S_{-}(C_l) T_{-}(C_l)$.
Then  a product term has a $+$ sign (and thus included in $W_+$)
iff $p' + q' \equiv 0 \pmod 2$.

Now let us deal with the case when there are self-intersecting trails or circuits.
Suppose  $v$  is a self-intersecting vertex.  Let its
four incident edges be $\{e, f, g, h\}$.
Without loss of generality we assume the pairing $\varrho_v$ in $\psi$
 is $\{e, f\}$ and $\{g, h\}$ (\figref{fig:deletion_before}).
Define $\Gamma'$ to be the 4-ary construction 
obtained from $\Gamma$ by deleting
$v$, and merging $e$ with $f$, and $g$ with $h$ (\figref{fig:deletion_1}).
Define $W'_+$ and $W'_-$ similarly for $\Gamma'$ with {\sl tcp} being $\psi'
 = \psi \setminus \{\varrho_v\}$.

Since $v$ contributes either zero or two $-$ to the trail or circuit
 it belongs to, an annotation is valid for $\psi$ iff its restriction 
on  $\Gamma'$ is valid for $\psi'$. 
Moreover, for every valid annotation of vertices in $\psi'$ 
contributing a factor to $W'_+$ (or $W'_-$), 
if we impose an arbitrary sign on $\varrho_v$,
we get
 a valid annotation 
for $\psi$ contributing a factor to $W_+$ (or $W_-$, respectively). 
If the sign of the annotation at $v$ is $+$ (or $-$ respectively) then
each product  term in $W'_+$ or $W'_-$
 gains the same extra factor $w_+(v)$.
If  the annotation at $v$ is $-$, then 
they gain the factor  $w_-(v)$.
Therefore, we have $W_+ - W_- = (w_+(v) + w_-(v)) (W'_+ - W'_-)$.
Hence $W_+ \ge W_-$ if $W'_+ \ge W'_-$.
Thus we have
reduced from $\Gamma$ to $\Gamma'$ which has one 
fewer
self-intersections. Repeating this finitely many times
we end up with no self-intersections.

\captionsetup[subfigure]{labelformat=parens}
\begin{figure}[h!]
\centering
\begin{subfigure}[b]{0.3\linewidth}
\centering\includegraphics[width=0.5\linewidth]{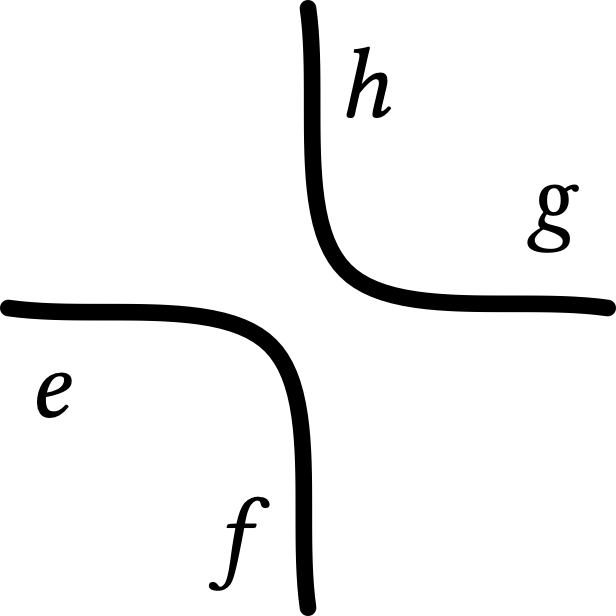}\caption{}\label{fig:deletion_before}
\end{subfigure}
\begin{subfigure}[b]{0.3\linewidth}
\centering\includegraphics[width=0.5\linewidth]{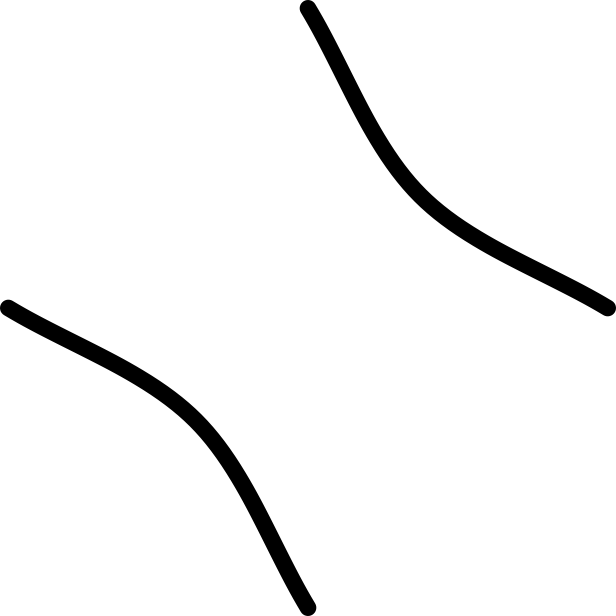}\caption{}\label{fig:deletion_1}
\end{subfigure}
\begin{subfigure}[b]{0.3\linewidth}
\centering\includegraphics[width=0.5\linewidth]{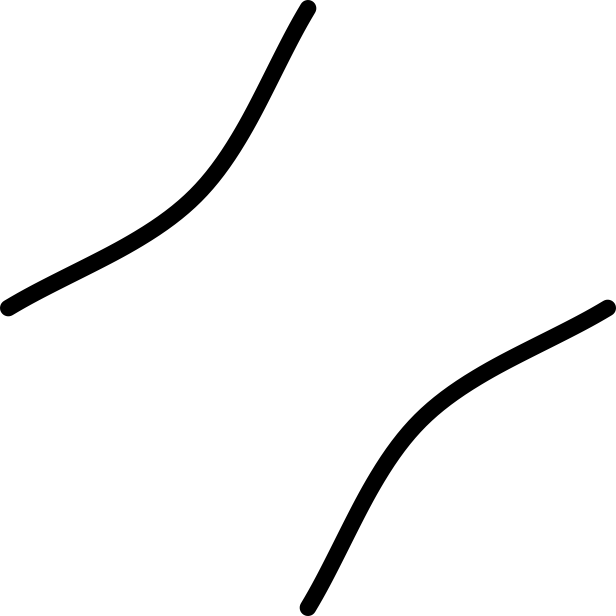}\caption{}\label{fig:deletion_2}
\end{subfigure}  
\caption{Possible ways of deleting a vertex. The vertex 
(not explicitly shown) at the center
of part (a) is removed in part (b) and (c).}\label{fig:deletion}
\end{figure}

\medskip
\noindent
\textbf{Induction step}:
Suppose $v$ is a shared vertex between two distinct circuits $C_1$ and $C_2$,
and
let $\{e, f, g, h\}$ be its incident edges in $\Gamma$.
We may assume the pairing $\varrho_v$ in $\psi$ is $\{e, f\}$ and $\{g, h\}$,
 and thus $e, f$ are in one circuit, say  $C_1$, while $g, h$
are in another circuit  $C_2$ (\figref{fig:deletion_before}).
Define $\Gamma'$ to be the 4-ary construction obtained
from $\Gamma$ by deleting $v$ and  merging $e$ with $f$, and $g$ with $h$ (\figref{fig:deletion_1}).
Define $\Gamma''$ to be the 4-ary construction obtained
from $\Gamma$ by deleting $v$ and merging $e$ with $h$, and $f$ with $g$ (\figref{fig:deletion_2}).
Note that in  $\Gamma'$, we have two circuits $C'_1$ and $C'_2$
 (each has one fewer vertex $v$ from $C_1$ and $C_2$),
 but in  $\Gamma''$ 
the two circuits are merged into one $C^*$.
Define $W'_+$ and $W'_-$ (respectively $W''_+$ and $W''_-$)
 similarly for $\Gamma'$ (respectively $\Gamma''$) 
 with {\sl tcp} being $\psi'
 = \psi \setminus \{\varrho_v\}$.

We can decompose $W_+ - W_-$ according to whether the sign on $\varrho_v$ is $+$ or $-$.
Recall that for any valid annotation of $\psi$, one encounters
an even number of $-$ along $C_1$ and $C_2$.
If the sign on $\varrho_v$ 
 is $+$, the number of $-$ along $C_1$ (and $C_2$)
at all vertices other than $v$ in any valid annotation is always even; 
if the sign on $\varrho_v$ is $-$, this number (for both $C_1$ and $C_2$) is
always odd.
$W_+ - W_-$ can be decomposed into two parts, corresponding to  terms
with  $\varrho_v$ being $+$ or  $-$ respectively.
All terms of the first (and second) part have
 a factor $w_+(v)$ (and $w_-(v)$ respectively).
And so we can write  
\begin{equation}\label{W+W--two-parts}
W_+ - W_- = w_+(v) [W_+ - W_-]_e + w_-(v) [W_+ - W_-]_o,
\end{equation}
where $[W_+ - W_-]_e$ and $[W_+ - W_-]_o$
collect terms in $W_+ - W_-$ in the first and second part respectively,
but without the factor at $v$.

However by considering valid annotations for $\Gamma'$ we also have
\begin{equation}\label{W+W-primes-even-part}
W'_+ - W'_- = [W_+ - W_-]_e,
\end{equation}
because a valid annotation  on both $C'_1$ and $C'_2$
is equivalent to
a valid annotation on both $C_1$ and $C_2$
with $v$ assigned $+$.
Similarly, by considering valid annotations for $\Gamma''$ we also have
\begin{equation}\label{W+W-primes-odd-part}
W''_+ - W''_- = [W_+ - W_-]_e + [W_+ - W_-]_o,
\end{equation}
because depending on whether $\varrho_v$ is assigned $+$ or $-$,
a valid annotation on both $C_1$ and $C_2$
gives either both an even or both an odd  number of $-$ on  $C_1 \setminus \{v\}$
and $C_2 \setminus \{v\}$, which is
equivalent to an even number of $-$ on the merged circuit $C^*$.

From (\ref{W+W--two-parts},\ref{W+W-primes-even-part},\ref{W+W-primes-odd-part})
we have
\[ W_+ - W_- =  ( w_+(v) - w_-(v) ) ( W'_+ - W'_- )
+  w_-(v) (W''_+ - W''_- ).\]
By induction, both $W'_+ \ge  W'_-$ and $W''_+  \ge W''_-$.
Since $w_+(v) \ge w_-(v)$ is given by hypothesis,
we get $W_+ \ge W_-$.
\end{proof}



\begin{proof}[Proof of \thmref{thm:property_2}]
By Lemma~\ref{lem:freedom_1}, for $(a,b,c,d) \in \overline{\mathcal{F}_>}$
we can choose a nonnegative function $w$ to satisfy (\ref{eqn:weight}).
It is easily verified that
for any weight function $w$ satisfying (\ref{eqn:weight}), 
$w(\sub{1}_+) \ge w(\sub{1}_-)$ iff $a+d \le b+c$
(in $\mathcal{A}_\le$),
$w(\sub{2}_+) \ge w(\sub{2}_-)$ iff $b+d \le a+c$
(in $\mathcal{B}_\le$),
$w(\sub{3}_+) \le w(\sub{3}_-)$ iff $c+d \ge a+b$
(in $\mathcal{C}_\ge$).
Since $(a,b,c,d) \in \mathcal{A}_\le \bigcap \mathcal{B}_\le \bigcap \mathcal{C}_\ge \bigcap \overline{\mathcal{F}_>}$
by hypothesis, 
we have a nonnegative function $w$ satisfying (\ref{eqn:weight}) 
and 
$\left\{\begin{smallmatrix}
w(\subinmatrix{1}_+) \ge w(\subinmatrix{1}_-)\\
w(\subinmatrix{2}_+) \ge w(\subinmatrix{2}_-)\\
w(\subinmatrix{3}_+) \le w(\subinmatrix{3}_-)\\
\end{smallmatrix}\right.$.

We say a {\sl tcp} $\psi$ of a  4-ary construction
has type-$\sub{1}$ if its two trails connect
dangling edges $e_1$ with $e_2$ and $e_3$ with $e_4$,
type-$\sub{2}$ if they connect
$e_1$ with $e_4$ and $e_2$ with $e_3$,
and type-$\sub{3}$  if they connect
$e_1$ with $e_3$ and $e_2$ with $e_4$.
Sometimes we also say a pairing $\varrho \in \{\sub{1}, \sub{2}, \sub{3}\}$ (without a sign) has type-$\varrho$.

We prove this theorem not only for 4-ary plane constructions, but for any 4-ary construction $\Gamma$ that satisfies the following property
\ref{cond:plane}.
\begin{equation} \label{cond:plane}
\text{\parbox{.9\textwidth}{For any {\sl tcp} $\psi$ of $\Gamma$ the number of vertices that have type-$\sub{3}$ pairings shared: (1) by any two distinct circuits is even; (2) by a trail and a circuit is even; (3) by two trails 
is even,
 if $\psi$  has type-$\sub{1}$
  or type-$\sub{2}$;
 and (4) by two trails is odd,
 if $\psi$  has  type-$\sub{3}$.}} \tag{$\mathcal{P}$}
\end{equation}
Observe that every 4-ary plane construction satisfies property \ref{cond:plane} by \emph{Jordan Curve Theorem}.

The structure of this proof is similar to that of the 
proof of \thmref{thm:property_1}, but the details are more delicate
because of the reversed inequality 
$w(\sub{3}_+) \le w(\sub{3}_-)$, which we need to use 
 property \ref{cond:plane} and a parity argument  to finesse.

Inheriting notations from the proof of \thmref{thm:property_1},
we prove that for any {\sl tcp} $\psi \in \Psi$, $W_+ \ge W_-$ if $\psi$ 
has type-$\sub{1}$ or type-$\sub{2}$; and $W_+ \le W_-$ if $\psi$ has type-$\sub{3}$.
We prove this statement still by induction on the number $N$ of 
vertices shared by any two distinct circuits in $\psi$.

\medskip
\noindent
\textbf{Base case}: The base case is $N=0$. 
Let us first assume that no trail or circuit is self-intersecting.
Consider the case $\psi$ has type-$\sub{1}$.
Then every vertex on any circuit $C$ of $\psi$
is shared by $C$ and exactly one trail, $\chi_{12}$ or $\chi_{34}$.
Also, every vertex on $\chi_{12}$ or $\chi_{34}$ is shared with some circuit
or the other trail.

For a circuit $C \in \psi$, by property \ref{cond:plane}
the number of vertices it shares with a trail that have
 a type-$\sub{3}$ pairing is even.
Denote the number of vertices it shares with $\chi_{12}$ that have a type-$\sub{1}$ or type-$\sub{2}$ pairing by $s$ and those that have a type-$\sub{3}$ pairing by $s'$,  and let the vertices be
$x_i \ (1 \le i \le s)$ and  $x'_i \ (1 \le i \le s')$ respectively; 
similarly denote the number of vertices it shares with $\chi_{34}$ that have a type-$\sub{1}$ or type-$\sub{2}$ pairing by $t$ and 
those that have a type-$\sub{3}$ pairing by $t'$, 
and let the vertices be
$y_j \ (1 \le j \le t)$ and $y'_j \ (1 \le j \le t')$
respectively (the following statement is still true if
there is any zero among $s,s', t, t'$).  Define the quantities
$S_+(C), S_-(C), T_+(C)$ and $T_-(C)$
as in the proof of \thmref{thm:property_1},
then we have
\[S_+(C) - S_-(C) = \prod_{i=1}^{s'} \left(w_+(x'_i) - w_-(x'_i)\right) \prod_{i=1}^s \left(w_+(x_i) - w_-(x_i)\right),\]
\[T_+(C) - T_-(C) = \prod_{j=1}^{t'} \left(w_+(y'_j) - w_-(y'_j)\right) \prod_{j=1}^t \left(w_+(y_j) - w_-(y_j)\right).\]
We have $S_+(C) \ge  S_-(C)$ because each $w_+(x_i) - w_-(x_i) \ge 0$
and each $w_+(x'_i) - w_-(x'_i) \le 0$ but  $s'$ is even.
By the same argument, $T_+(C) \ge T_-(C)$. 

Now we account for the shared vertices between the two trails.
According to  property \ref{cond:plane},
the number of vertices shared by $\chi_{12}$ and $\chi_{34}$ that have a type-$\sub{3}$ pairing must also be even.
Denote the number of vertices in $\psi$ shared by $\chi_{12}$ and $\chi_{34}$ that have a type-$\sub{1}$ or type-$\sub{2}$ pairing by $p$ and those that have a type-$\sub{3}$ pairing by $p'$,
 and  denote these vertices by $z_k \ (1 \le k \le p)$ and 
 $z'_k \ (1 \le k \le p')$ (again the following is still true if
 $p$ or $p'$ is 0).
Then, by the same proof,
\[W_+ - W_- =  \prod_{k=1}^{p'} \left(w_+(z'_k) - w_-(z'_k)\right) \prod_{k=1}^p \left(w_+(z_k) - w_-(z_k)\right) \prod_{l=1}^q \left(S_+(C_l) T_+(C_l) - S_-(C_l) T_-(C_l)\right).\]
Since $S_+(C_l) T_+(C_l) \ge S_-(C_l) T_-(C_l)$  ($1 \le l \le q$),
$w_+(z_k) - w_-(z_k) \ge 0$ ($ 1 \le k \le p$),
and $w_+(z'_k) - w_-(z'_k) \le 0$ ($ 1 \le k \le p'$ and $p'$ is even),
we get $W_+ \ge W_-$. 

The same proof applies for $\psi$ of  type-$\sub{2}$.
For type-$\sub{3}$ we have the corresponding $p'$ odd,
and thus $W_+ \le W_-$.

The way to deal with self-intersections is exactly the same as in the proof of \thmref{thm:property_1}.
We will not repeat here.

\medskip
\noindent
\textbf{Induction step}:
When the pairings at intersections between distinct circuits are all of type-$\sub{1}$ or type-$\sub{2}$ only, our proof is the same as the induction step of the proof of \thmref{thm:property_1}. We only note that the constructions of 
 $\Gamma'$ and  $\Gamma''$ in that proof preserve  property \ref{cond:plane}.
%
%
%
%
%
When there are type-$\sub{3}$ intersections between distinct circuits, we show how to reduce to the previous case by getting rid of all type-$\sub{3}$ intersections while preserving  property \ref{cond:plane}.

\captionsetup[subfigure]{labelformat=parens}
\begin{figure}[h!]
\centering
\begin{subfigure}[b]{0.3\linewidth}
\centering\includegraphics[width=0.7\linewidth]{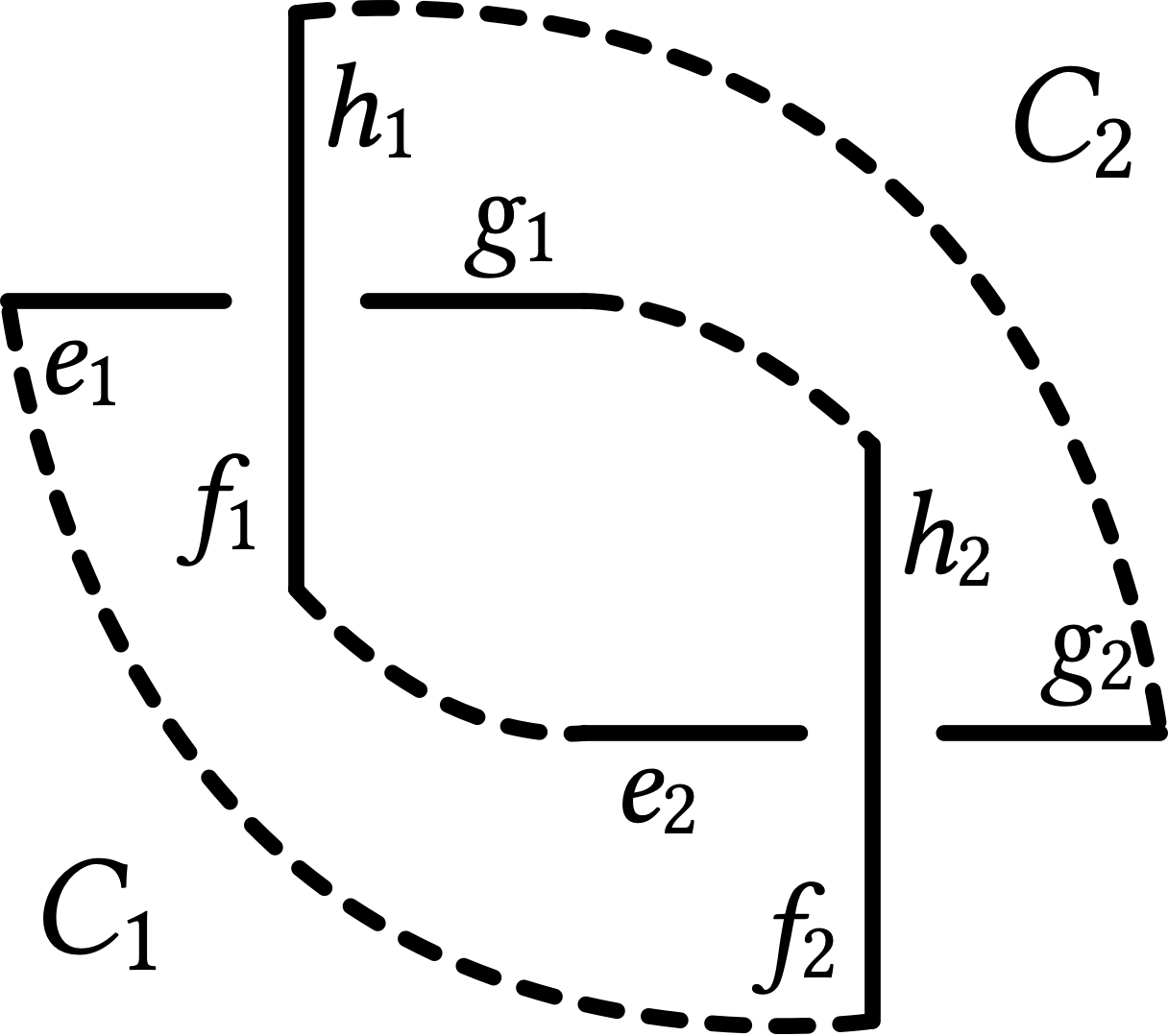}\caption{$\Gamma$}\label{fig:deletion_planar_before}
\end{subfigure}
\begin{subfigure}[b]{0.3\linewidth}
\centering\includegraphics[width=0.7\linewidth]{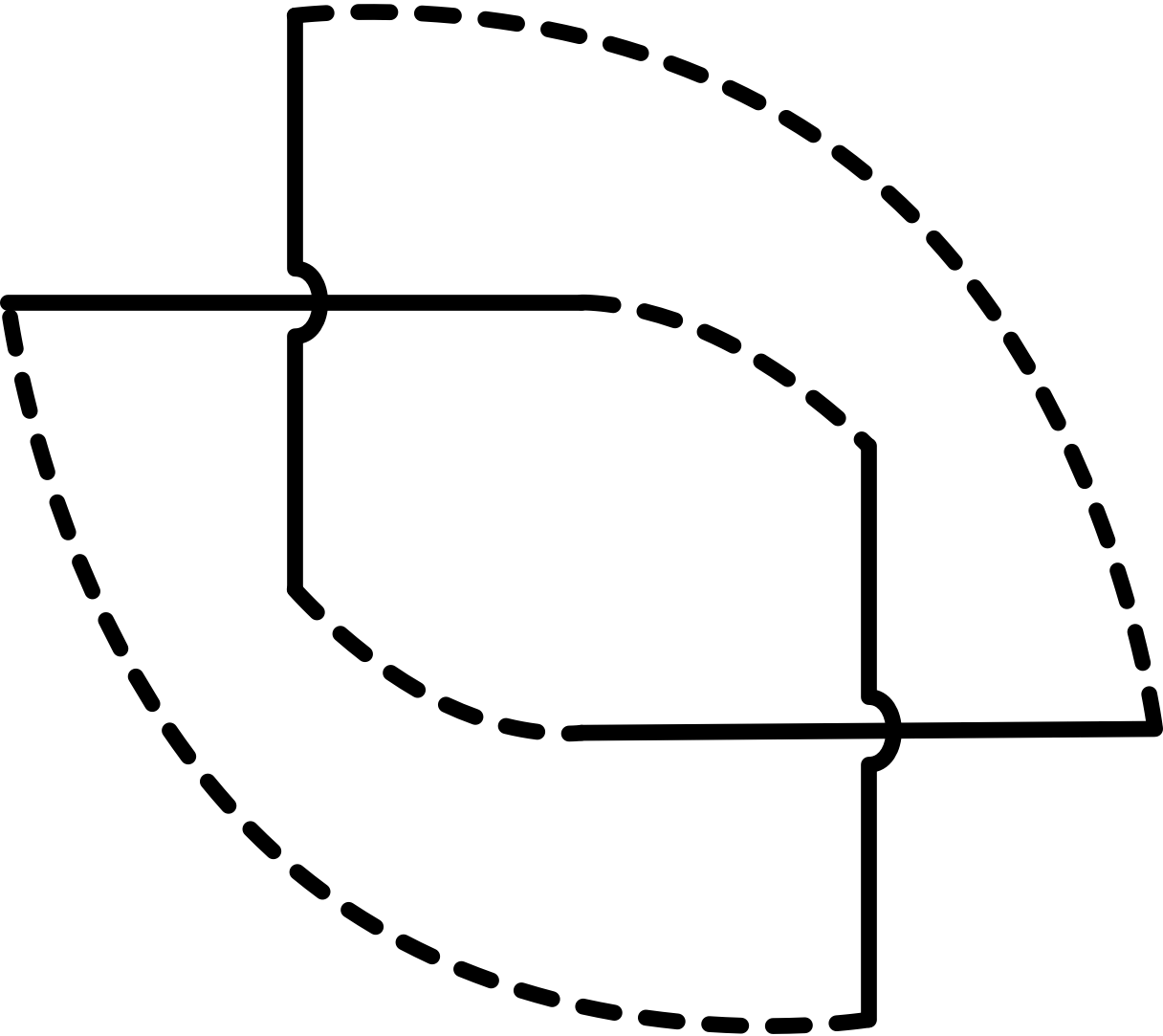}\caption{$\Gamma'$}\label{fig:deletion_planar_11}
\end{subfigure}
\begin{subfigure}[b]{0.3\linewidth}
\centering\includegraphics[width=0.7\linewidth]{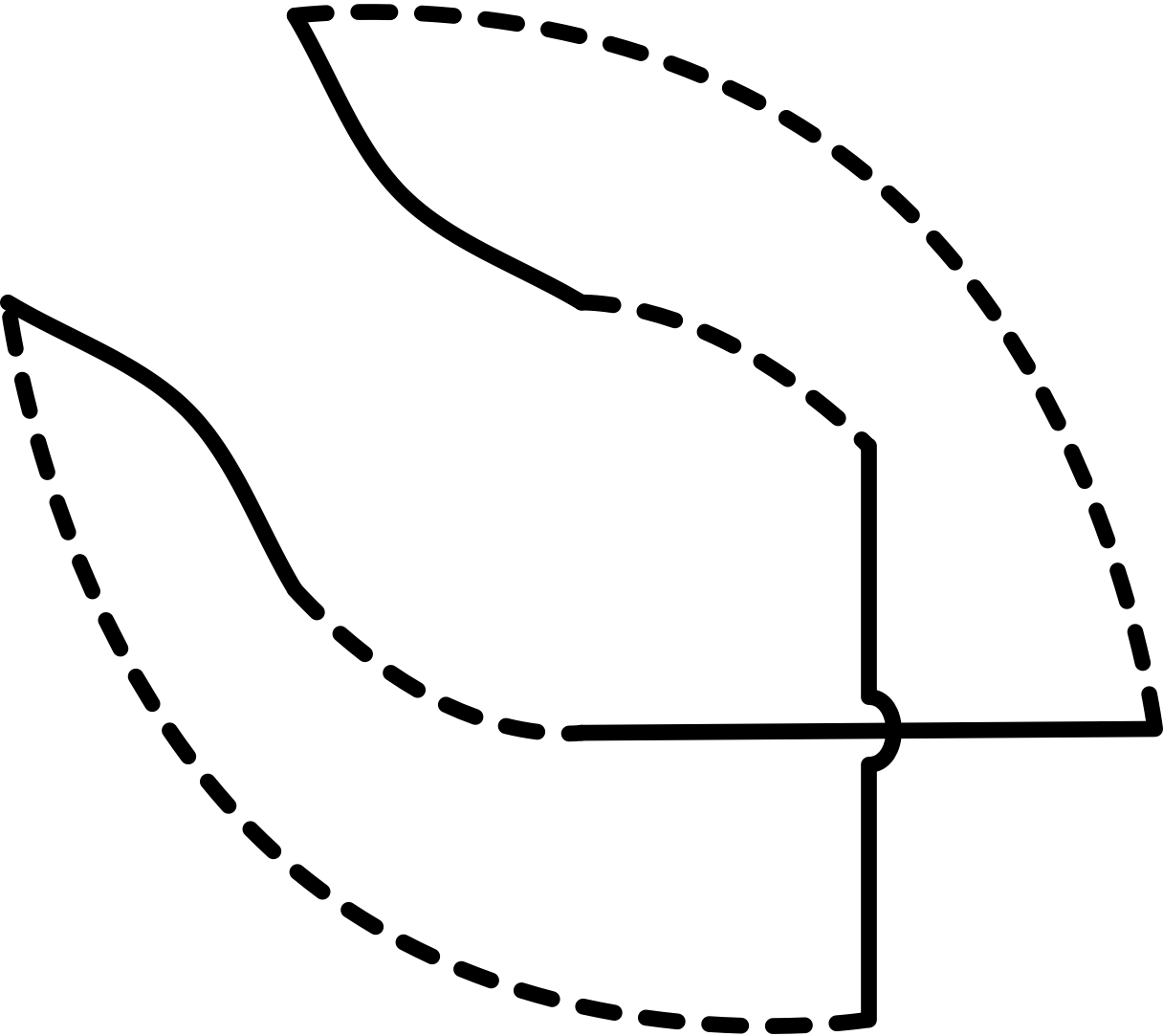}\caption{$\Gamma''$}\label{fig:deletion_planar_21}
\end{subfigure}  
\caption{}\label{fig:deletion}
\end{figure}

For any two circuits $C_1$ and $C_2$ in $\psi$, the number of intersections of type-$\sub{3}$ between these two circuits must be even (according to  property \ref{cond:plane}).
Suppose this number is not zero, let $u$ and $v$ be two vertices with intersections of type-$\sub{3}$ between $C_1$ and $C_2$.
For $b \in \{1, 2\}$, let 
$\{e_b, f_b, g_b, h_b\}$ be the edges incident to $u$ and $v$ in $\Gamma$
respectively.
We may assume the pairings at $u$ and $v$ are $\{e_b, g_b\}$ and $\{f_b, h_b\}$ 
 and thus $e_b, g_b$ are in one circuit, while $f_b, h_b$
are in another circuit.
Futhermore, we may name the edges
so that $e_1, g_1$ and $f_2, h_2$ are in the same circuit, say $C_1$,
and $e_2, g_2$ and $f_1, h_1$ are in another circuit (in this case $C_2$) (\figref{fig:deletion_planar_before}).
Define $\Gamma'$ to be the 4-ary construction obtained
from $\Gamma$ by deleting $u, v$ and merging $e_b$ with $g_b$, and $f_b$ with $h_b$ for $b \in \{1, 2\}$ (\figref{fig:deletion_planar_11}).
Define $\Gamma''$ to be the 4-ary construction obtained
from $\Gamma$ by deleting $u, v$ and merging $e_1$ with $f_1$, $g_1$ with $h_1$, $e_2$ with $g_2$, and $f_2$ with $h_2$ (\figref{fig:deletion_planar_21}).
Note that in  $\Gamma'$, we have two circuits $C'_1$ and $C'_2$
 (each has two fewer vertices $u$ and $v$ from $C_1$ and $C_2$),
 but in  $\Gamma''$ 
the two circuits are merged into one $C^*$.
Define $W'_+$ and $W'_-$ (respectively $W''_+$ and $W''_-$)
 similarly for $\Gamma'$ (respectively $\Gamma''$) 
 with {\sl tcp} being $\psi'
 = \psi \setminus \{\varrho_{u}, \varrho_{v}\}$.

We can decompose $W_+ - W_-$ according to whether the signs on $\varrho_{u}$ 
and $\varrho_{v}$ are $+$ or $-$.
Recall that for any valid annotation of $\psi$, one encounters
an even number of $-$ along $C_1$ and $C_2$.
If the signs on $\varrho_{u}$ and $\varrho_{v}$
are both $+$ or both $-$, the number of $-$ along $C_1$ (and $C_2$)
at all vertices other than $u$ and $v$ in any valid annotation is always even; 
if the signs on $\varrho_{u}$ and $\varrho_{v}$ are different (one $+$ and one $-$), this number (for both $C_1$ and $C_2$) is
always odd.
$W_+ - W_-$ can be decomposed into four parts, corresponding to terms
with the signs on $\varrho_{u}$ and $\varrho_{v}$ being $\pm$.
So we can write
\begin{eqnarray}\label{eqn:W-inplanarproof}
W_+ - W_- &=& \,\/~~w_+(u)w_+(v) [W_+ - W_-]_{++} + w_-(u)w_-(v) [W_+ - W_-]_{--} \\
&&+ w_+(u)w_-(v) [W_+ - W_-]_{+-} + w_-(u)w_+(v) [W_+ - W_-]_{-+},
\end{eqnarray}
where $[W_+ - W_-]_{\pm\pm}$ 
collect terms in $W_+ - W_-$ in the respective parts,
but without the factors at $u$ and $v$.

Let $X$ (respectively $X'$) be the set of vertices of $C_1$ (excluding $u,v$)
between $e_1$ and $f_2$ (respectively between $g_1$ and $h_2$).
Let $Y$ (respectively $Y'$) be the set of vertices of $C_2$ (excluding $u,v$)
between $h_1$ and $g_2$ (respectively between $f_1$ and $e_2$).
If we write $\sigma(x) =1$ if the annotation on $x$ is $-$,
and $\sigma(x) =0$ otherwise, then the requirement
for an annotation on $C'_1$ and $C'_2$ to be valid
is $\sum_{x \in X \cup X'} \sigma(x) \equiv
\sum_{x \in Y \cup Y'} \sigma(x) \equiv 0 \pmod 2$.
This is equivalent to requiring
an extension 
that assigns the same sign to both $u$ and $v$ (either $(++)$ or $(--)$)
 to be a valid  annotation on $C_1$ and $C_2$.
The latter  is just $\sum_{x \in X \cup X' \cup \{u, v\}} \sigma(x) \equiv
\sum_{x \in Y \cup Y'  \cup \{u, v\}} \sigma(x) \equiv 0 \pmod 2$,
conditioned on  $\sigma(u) = \sigma(v)$.
Hence
\begin{equation}\label{eqn:planar-Wprime}
W'_+ - W'_- = [W_+ - W_-]_{++} = [W_+ - W_-]_{--}
\end{equation}

The requirement for an  annotation  to be valid on $C^*$
is $\sum_{x \in X \cup X' \cup Y \cup Y'} \sigma(x) \equiv 0 \pmod 2$,
which is equivalent to
either $\sum_{x \in X \cup X'} \sigma(x)  \equiv 
\sum_{x \in Y \cup Y'} \sigma(x) \equiv 0 \pmod 2$,
or
 $\sum_{x \in X \cup X'} \sigma(x)  \equiv 
\sum_{x \in Y \cup Y'} \sigma(x) \equiv 1 \pmod 2$.
This is equivalent to combining two types of
extensions to a valid  annotation on $C_1$ and $C_2$,
where type (1) assigns the same sign to both 
$u$ and $v$ (either $(++)$ or $(--)$),
or type (2) assigns different signs to 
$u$ and $v$ (either $(+-)$ or $(-+)$).
Hence, in addition to (\ref{eqn:planar-Wprime}) we have
\begin{equation}\label{eqn:W+-issameas-+}
[W_+ - W_-]_{+-} = [W_+ - W_-]_{-+},
\end{equation}
and also
\begin{equation}\label{eqn:planar-Wprimeprime}
W''_+ - W''_- = [W_+ - W_-]_{++} + [W_+ - W_-]_{+-}.
\end{equation}


It follows from (\ref{eqn:W-inplanarproof}, \ref{eqn:W+-issameas-+},
 \ref{eqn:planar-Wprime},
\ref{eqn:planar-Wprimeprime}) that
\begin{eqnarray*}
&& W_+ - W_-\\
 &= & (w_+(u) w_+(v) + w_-(u) w_-(v)) [W_+ - W_-]_{++}
+ (w_+(u) w_-(v) + w_-(u) w_+(v) ) [W_+ - W_-]_{+-}\\
 &= &  
 [(w_+(u) - w_-(u)) (w_+(v) - w_-(v))] ( W'_+ - W'_- )
+ [w_+(u)w_-(v) + w_-(u)w_+(v)] (W''_+ - W''_- ).
\end{eqnarray*}


Note that if $\Gamma$ satisfies  property \ref{cond:plane}, $\Gamma'$ and $\Gamma''$ also satisfy  property \ref{cond:plane}, but with fewer 
intersections of type-$\sub{3}$ between distinct circuits.
By induction, both $W'_+ - W'_- \ge 0$ and $W''_+ - W''_- \ge 0$.
Since $w_+(u) \le w_-(u)$ and $w_+(v) \le w_-(v)$ are given by hypothesis,
the product $(w_+(u) - w_-(u)) (w_+(v) - w_-(v)) \ge 0$.
Also $w_+(u)w_-(v) + w_-(u)w_+(v) \ge 0$ as $w$ is nonnegative.
Therefore, $W_+ \ge  W_-$.


\medskip

We have finished the proof for $W(\sub{1}_+) \ge W(\sub{1}_-)$.
The proof for $W(\sub{2}_+) \ge W(\sub{2}_-)$ is the same.
The proof for $W(\sub{3}_+) \le W(\sub{3}_-)$ can be adapted. 
We only need to note that  the two trails in $\Gamma'$ and $\Gamma''$ 
are unchanged  from $\Gamma$, thus both are still of type-$\sub{3}$.
Thus inductively we have $W'_+ - W'_- \le 0$ and $W''_+ - W''_- \le 0$,
 and thus $W_+  - W_- \le 0$ as  a nonnegative combination of these two
quantities.
\end{proof}

\corref{cor:property_4} follows immediately from
\thmref{thm:property_1} and \thmref{thm:property_2}
since $\mathcal{C}_= = \mathcal{C}_\le \bigcap \mathcal{C}_\ge$
and $\mathcal{A}_\le \bigcap \mathcal{B}_\le \bigcap \mathcal{C}_\le
\subset \overline{\mathcal{F}_>}$, and therefore
the intersection of the two regions
$\mathcal{A}_\le \bigcap \mathcal{B}_\le \bigcap \mathcal{C}_\le$
and $\mathcal{A}_\le \bigcap \mathcal{B}_\le \bigcap \mathcal{C}_\ge \bigcap \overline{\mathcal{F}_>}$
is precisely 
$\mathcal{A}_\le \bigcap \mathcal{B}_\le \bigcap \mathcal{C}_=$.

\begin{lemma}\label{lem:freedom_0}
Suppose $x, x', y, y', z, z' \in \mathbb{R}$ satisfy
the eight inequalities: $X + Y + Z \ge 0$ where
$X \in \{x, x'\}, Y \in \{y, y'\}, Z \in \{z, z'\}$.
Then there exist \emph{nonnegative} 
$\tilde{x}, \tilde{x}', \tilde{y}, \tilde{y}', \tilde{z},
 \tilde{z}'$ such that all eight sums $X + Y + Z$ are unchanged 
when $x, x', y, y', z, z'$ are substituted by the respective values
$\tilde{x}, \tilde{x}', \tilde{y}, \tilde{y}', \tilde{z},
 \tilde{z}'$.
\end{lemma}
\begin{proof}
The condition is obviously symmetric so that there is a symmetry
group $S_2 \times S_2 \times S_2$ acting on $\{x, x', y, y', z, z'\}$.
Thus, we may assume without loss of generality that
$x \le x', y \le y', z \le z'$. 
Let $\alpha, \beta$ be
two distinct symbols among $x, y, z$. 
For any $c \in \mathbb{R}$,
if we add $c$ to $\alpha$ and $\alpha'$,
and subtract $c$ from $\beta$ and $\beta'$,
the  eight sums $X + Y + Z$ are unchanged,
because in each $X + Y + Z$  exactly one of $\alpha$ and $\alpha'$ appears once
and also exactly one of $\beta$ and $\beta'$ appears once.

Note that $x+y+z \ge 0$. In two steps we can replace
$\{x, x', y, y', z, z'\}$ by
\begin{eqnarray*}
x \rightarrow x+ y+z, && x' \rightarrow x'+ y+z\\
y \rightarrow y-y =0, && y' \rightarrow y'-y\\
z \rightarrow z-z =0, && z' \rightarrow z'-z
\end{eqnarray*}
This completes the proof.
\end{proof}

\begin{lemma}\label{lem:freedom_1}
The parameter setting
$(a, b, c, d)$ belongs to $\overline{\mathcal{F}_>}$ iff there exists
a nonnegative weight function $w$ satisfying (\ref{eqn:weight}). 
\end{lemma}
\begin{proof}
The assignment of the weight function $w$ satisfying (\ref{eqn:weight}) 
can be viewed as a linear system on  six variables
$(x, x', y, y', z, z') = 
(w(\sub{1}_+), w(\sub{1}_-),  w(\sub{2}_+),
 w(\sub{2}_-), w(\sub{3}_+), w(\sub{3}_-))$.
This linear system has rank 4 and therefore there is a nonempty
solution space of dimension 2.

Pick any solution to (\ref{eqn:weight}). 
Recall that  membership
$(a, b, c, d) \in \overline{\mathcal{F}_>}$ is characterized by
$\left\{\begin{smallmatrix}
w(\subinmatrix{1}_+) + w(\subinmatrix{2}_-) + w(\subinmatrix{3}_-) \ge 0\\
w(\subinmatrix{1}_-) + w(\subinmatrix{2}_+) + w(\subinmatrix{3}_-) \ge 0\\
w(\subinmatrix{1}_-) + w(\subinmatrix{2}_-) + w(\subinmatrix{3}_+) \ge 0\\
w(\subinmatrix{1}_+) + w(\subinmatrix{2}_+) + w(\subinmatrix{3}_+) \ge 0\\
\end{smallmatrix}\right.$.
We also have $a,b,c,d \ge 0$.
Hence we can apply  \lemref{lem:freedom_0}
and get a nonnegative valued $w$
satisfying (\ref{eqn:weight}). 

The reverse direction is obvious, once we realize that
membership in $\overline{\mathcal{F}_>}$ is characterized
by the four inequalities above, for any solution to (\ref{eqn:weight}).
\end{proof}

\begin{notation}
Fix for each vertex $v$ in a 4-regular graph  $G$ a 
 weight function $w$ on signed pairings (satisfying  (\ref{eqn:weight}) 
at $v$).
Let $Z_v(\boldsymbol{\varrho})$ be the weighted sum of the set of all $dacp$ having
the signed pairing $\boldsymbol{\varrho}$ at $v$.
\end{notation}

\begin{corollary} \label{cor:proportion}
If at each vertex in a 4-regular graph $G$ we have
 a nonnegative weight function $w$ such that $ w\left(\sub{1}_+\right) \ge w\left(\sub{1}_-\right)$, $ w\left(\sub{2}_+\right) \ge w\left(\sub{2}_-\right)$, and $ w\left(\sub{3}_+\right) \ge w\left(\sub{3}_-\right)$, then $Z_v\left(\sub{1}_+\right) \ge Z_v\left(\sub{1}_-\right)$, $Z_v\left(\sub{2}_+\right) \ge Z_v\left(\sub{2}_-\right)$, and $Z_v\left(\sub{3}_+\right) \ge Z_v\left(\sub{3}_-\right)$ at  each vertex $v$ in $G$.
\end{corollary}
\begin{proof}
Let  $(a, b, c, d)$ be the parameters of the constraint
function at a vertex $v$.
We first collect terms in the partition function  $Z(G)$  according to which
of the 8 local configurations in \figref{fig:orientations} 
$v$ is in. If
we remove the vertex $v$ from $G$, the rest of $G$ forms  
 a 4-ary construction whose dangling edges are those incident to $v$. Using
notations for 4-ary constructions, we can write $Z(G)$ as
{\footnotesize
\begin{align*}
& 2a \left[W\left(\Phi_{0011,\sub{1}_-}\right) + W\left(\Phi_{0011,\sub{2}_+}\right) + W\left(\Phi_{0011,\sub{3}_+}\right)\right]
+ 2b \left[W\left(\Phi_{0110,\sub{1}_+}\right) + W\left(\Phi_{0110,\sub{2}_-}\right) + W\left(\Phi_{0110,\sub{3}_+}\right)\right] + \\
&  2c \left[W\left(\Phi_{0101,\sub{1}_+}\right) + W\left(\Phi_{0101,\sub{2}_+}\right) + W\left(\Phi_{0101,\sub{3}_-}\right)\right]
+ 2d \left[W\left(\Phi_{0000,\sub{1}_-}\right) + W\left(\Phi_{0000,\sub{2}_-}\right) + W\left(\Phi_{0000,\sub{3}_-}\right)\right] \\
= & 2\left(w(\sub{1}_-) + w(\sub{2}_+) + w(\sub{3}_+)\right)\left[W(\sub{1}_-) + W(\sub{2}_+) + W(\sub{3}_+)\right] 
 + 2\left(w(\sub{1}_+) + w(\sub{2}_-) + w(\sub{3}_+)\right)\left[W(\sub{1}_+) + W(\sub{2}_-) + W(\sub{3}_+)\right] + \\
&  2\left(w(\sub{1}_+) + w(\sub{2}_+) + w(\sub{3}_-)\right)\left[W(\sub{1}_+) + W(\sub{2}_+) + W(\sub{3}_-)\right] 
 + 2\left(w(\sub{1}_-) + w(\sub{2}_-) + w(\sub{3}_-)\right)\left[W(\sub{1}_-) + W(\sub{2}_-) + W(\sub{3}_-)\right].
\end{align*}}
Now we collect terms according to the 6 signed
pairings $\boldsymbol{\varrho}$ at $v$.
 These are precisely $Z_v\left(\sub{1}_+\right)$, $Z_v\left(\sub{1}_-\right)$, $Z_v\left(\sub{2}_+\right)$, $Z_v\left(\sub{2}_-\right)$, $Z_v\left(\sub{3}_+\right)$, and $Z_v\left(\sub{3}_-\right)$
respectively, and $Z(G)$ is the sum of these 6 terms
{\footnotesize
\begin{align*}
 & 2w(\sub{1}_+) \left[2 W(\sub{1}_+) + W(\sub{2}_+) + W(\sub{2}_-) + W(\sub{3}_+) + W(\sub{3}_-)\right] 
+ 2w(\sub{1}_-) \left[2 W(\sub{1}_-) + W(\sub{2}_+) + W(\sub{2}_-) + W(\sub{3}_+) + W(\sub{3}_-)\right] +\\
&  2w(\sub{2}_+) \left[2 W(\sub{2}_+) + W(\sub{1}_+) + W(\sub{1}_-) + W(\sub{3}_+) + W(\sub{3}_-)\right] 
 + 2w(\sub{2}_-) \left[2 W(\sub{2}_-) + W(\sub{1}_+) + W(\sub{1}_-) + W(\sub{3}_+) + W(\sub{3}_-)\right] +\\
&  2w(\sub{3}_+) \left[2 W(\sub{3}_+) + W(\sub{1}_+) + W(\sub{1}_-) + W(\sub{2}_+) + W(\sub{2}_-)\right] 
 + 2w(\sub{3}_-) \left[2 W(\sub{3}_-) + W(\sub{1}_+) + W(\sub{1}_-) + W(\sub{2}_+) + W(\sub{2}_-)\right].
\end{align*}}
Notice the common multipliers  when comparing
the three pairs $Z_v\left(\sub{1}_+\right)$ vs.
$Z_v\left(\sub{1}_-\right)$,
$Z_v\left(\sub{2}_+\right)$ vs. $Z_v\left(\sub{2}_-\right)$, and 
$Z_v\left(\sub{3}_+\right)$ vs. $Z_v\left(\sub{3}_-\right)$.
The corollary follows because $ w\left(\sub{1}_+\right) \ge w\left(\sub{1}_-\right)$, $ w\left(\sub{2}_+\right) \ge w\left(\sub{2}_-\right)$, and $ w\left(\sub{3}_+\right) \ge w\left(\sub{3}_-\right)$ by the assumption, and $ W\left(\sub{1}_+\right) \ge W\left(\sub{1}_-\right)$, $ W\left(\sub{2}_+\right) \ge W\left(\sub{2}_-\right)$, and $ W\left(\sub{3}_+\right) \ge W\left(\sub{3}_-\right)$ by \thmref{thm:property_1}.
\end{proof}

\begin{corollary} \label{cor:proportion_planar}
If at each vertex in a 4-regular \emph{plane} graph $G$ we have
 a nonnegative weight function $w$ such that   $ w\left(\sub{1}_+\right) \ge w\left(\sub{1}_-\right)$, $ w\left(\sub{2}_+\right) \ge w\left(\sub{2}_-\right)$, 
and  $ w\left(\sub{3}_+\right) \le w\left(\sub{3}_-\right)$, 
 then $Z_v\left(\sub{1}_+\right) \ge Z_v\left(\sub{1}_-\right)$, $Z_v\left(\sub{2}_+\right) \ge Z_v\left(\sub{2}_-\right)$, and $Z_v\left(\sub{3}_+\right) \le Z_v\left(\sub{3}_-\right)$ at  each vertex $v$ in $G$.
%
\end{corollary}
\begin{proof}
The proof is similar to that  of \corref{cor:proportion}, with the only difference 
that we are given $ w\left(\sub{3}_+\right) \le w\left(\sub{3}_-\right)$,
and we have $ W\left(\sub{3}_+\right) \le W\left(\sub{3}_-\right)$ by \thmref{thm:property_2}.
\end{proof}

\section{FPRAS}\label{sec:fpras}
\documentclass[paper]{subfiles}

\begin{theorem}\label{thm:fpras}
There is an FPRAS for $Z(a, b, c, d)$ if $(a, b, c, d) \in \mathcal{F}_{\le^2} \bigcap \mathcal{A}_\le \bigcap \mathcal{B}_\le \bigcap \mathcal{C}_\le$.
\end{theorem}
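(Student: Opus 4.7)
The plan is the standard two-step recipe: reduce approximate counting of $Z(a,b,c,d)$ to approximate sampling from the associated Gibbs distribution, then construct and analyze a rapidly mixing Markov chain whose stationary distribution is that measure. The four hypotheses $\mathcal{F}_{\le^2}, \mathcal{A}_\le, \mathcal{B}_\le, \mathcal{C}_\le$ each encode a monotonicity or log-supermodularity inequality on the signature $(a,b,c,d)$, and together they should furnish the structural inequalities needed at every stage of the mixing analysis.

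First I would establish self-reducibility for $Z(a,b,c,d)$: on any instance the partition function factors as a telescoping product of conditional marginals along a fixed ordering of variables (or edges), so that an $(\varepsilon/\mathrm{poly}(n))$-approximate sampler yields an $\varepsilon$-approximate counter via the Jerrum--Valiant--Vazirani argument. I must check that along the self-reduction the restricted instances still lie inside $\mathcal{F}_{\le^2} \cap \mathcal{A}_\le \cap \mathcal{B}_\le \cap \mathcal{C}_\le$, since the conditions are inherited under pinning only if they are closed under the natural restriction operation; verifying this closure is the first nontrivial bookkeeping task.

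Next I would set up a heat-bath / Glauber-type chain on configurations and prove polynomial mixing. The hypotheses should be used as follows: $\mathcal{F}_{\le^2}$ provides the pairwise log-supermodularity (FKG-type) inequality needed to build a monotone coupling of two chains started from extremal configurations, while $\mathcal{A}_\le, \mathcal{B}_\le, \mathcal{C}_\le$ control the one-sided Lipschitz constants of the conditional single-site marginals, so that the coupling contracts either directly (path coupling) or through a spectral-independence / influence-matrix bound.

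The main obstacle is precisely this rapid-mixing step. Monotonicity alone does not yield polynomial contraction for a four-parameter signature, and the interactions governed by $a,b,c,d$ can conspire to create long-range correlations. The delicate point is to check that $\mathcal{F}_{\le^2} \cap \mathcal{A}_\le \cap \mathcal{B}_\le \cap \mathcal{C}_\le$ is exactly the regime in which all pairwise influences decay fast enough to close the coupling argument. If path coupling on the native state space does not contract, the fallback is a two-scale strategy: condition on a frozen skeleton of variables that absorbs the worst-case interactions, show that the conditional chain mixes rapidly by reduction to a smaller instance still inside the tractable region, and then lift the sampler via block dynamics.
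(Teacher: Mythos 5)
Your proposal identifies the right high-level recipe (sample-then-count with self-reduction) but misses the central structural obstruction and misidentifies the role of every hypothesis, so as written the argument would not go through.

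The first gap is foundational. You propose a heat-bath/Glauber chain ``on configurations,'' but single-site (single-edge) updates on valid eight-vertex configurations are not a connected Markov chain: every vertex enforces an Eulerian parity constraint, so flipping one edge always leaves at least one violated constraint. There is no state space on which a local chain is irreducible until you \emph{enlarge} it. The paper's chain $\mathcal{MC}$ runs on $\Omega_0 \cup \Omega_2$, the even orientations together with the near-even orientations carrying exactly two defective edges, exactly in the spirit of the perfect/near-perfect matching chain. Transitions create, shift, and annihilate pairs of defects. Without this enlargement your ``Glauber-type chain'' has no transitions at all, and no amount of coupling or spectral-independence machinery can rescue it.

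The second gap is that the hypotheses do not play the roles you assign them. You read $\mathcal{F}_{\le^2}$ as an FKG/log-supermodularity condition supporting a monotone coupling, and $\mathcal{A}_\le,\mathcal{B}_\le,\mathcal{C}_\le$ as Lipschitz bounds on conditional marginals. In fact, $\mathcal{F}_{\le^2}$ is a windability-type condition: it is precisely what guarantees that $a^2, b^2, c^2, d^2$ admit a nonnegative decomposition into weights $\mathfrak{w}$ on the six signed pairings at each vertex, and this decomposition is what lets one define canonical paths (indexed by annotated trail-and-circuit partitions of a symmetric difference) whose flow decomposes multiplicatively, yielding the congestion bound of \lemref{lem:congestion}. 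The conditions $\mathcal{A}_\le,\mathcal{B}_\le,\mathcal{C}_\le$ are used for an entirely different purpose: via the closure property that any 4-ary construction obtained by cutting two edges still satisfies $a'+d' \le b'+c'$, they bound the ratio $\mathcal{Z}(\Omega_2)/\mathcal{Z}(\Omega_0) \le \binom{|E|}{2}$ (\corref{cor:ratio}). This ratio control is what makes the congestion polynomial and also ensures that even orientations are not negligible in the enlarged state space. Your proposed coupling/spectral-independence route has no apparent way to exploit these specific inequalities, and there is no indication that path coupling contracts here.

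Finally, the self-reduction is not a generic pinning of edges. Conditioning on an edge orientation would change the signatures at the two incident vertices in ways that need not stay inside the tractable region, and you flag this as a ``bookkeeping'' issue without resolving it. The paper instead does self-reduction in the signed-pairing/dacp framework: it estimates the probability of a particular signed pairing $\boldsymbol{\varrho}$ at a vertex $v$, then \emph{splits} $v$ into two degree-2 ``1-in-1-out'' vertices rewired according to $\boldsymbol{\varrho}$, reducing to a strictly smaller instance whose base case is a disjoint union of cycles. Proving that some $\boldsymbol{\varrho} \in \{\sub{1}_+,\sub{2}_+,\sub{3}_+\}$ appears with probability at least $1/6$ again invokes the closure properties. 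So the quantum decomposition/closure machinery is load-bearing in all three parts (mixing, ratio bound, self-reduction), not just in one auxiliary lemma.
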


\begin{theorem}\label{thm:fpras_planar}
There is an FPRAS for $Z(a, b, c, d)$ on planar graphs if $(a, b, c, d) \in \mathcal{F}_{\le^2} \bigcap \mathcal{A}_\le \bigcap \mathcal{B}_\le \bigcap \mathcal{C}_\ge$.
\end{theorem}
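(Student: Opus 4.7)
The plan is to reduce the planar case of Theorem~\ref{thm:fpras_planar} to the already-established Theorem~\ref{thm:fpras} by a planarity-specific transformation that flips the $\mathcal{C}$-condition while preserving the other three. The natural candidate is planar duality (or, more generally, a holographic reduction valid only on planar graphs): on the dual $G^{*}$ of a planar input $G$, faces become vertices, so face weights become vertex weights, and in four-parameter families of this type the net effect is typically to exchange a single pair of parameters. The fact that only the $\mathcal{C}$-condition changes between the two theorem statements is exactly consistent with such a swap.

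Concretely, I would first establish an identity of the form
\[
  Z_{G}(a,b,c,d) \;=\; M(G)\, Z_{G^{*}}(a', b', c', d'),
\]
where $M(G)$ is an explicit, deterministically and efficiently computable factor (depending only on combinatorial data such as $|V(G)|$, $|E(G)|$, $|F(G)|$, and possibly the degree sequence), and $(a', b', c', d')$ is an explicit permutation/rescaling of the input parameters. I would then prove the crucial \emph{parameter-transfer step}: if $(a,b,c,d) \in \mathcal{F}_{\le^{2}} \cap \mathcal{A}_{\le} \cap \mathcal{B}_{\le} \cap \mathcal{C}_{\ge}$, then $(a', b', c', d') \in \mathcal{F}_{\le^{2}} \cap \mathcal{A}_{\le} \cap \mathcal{B}_{\le} \cap \mathcal{C}_{\le}$. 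Given these two ingredients, the FPRAS is immediate: compute a planar embedding and $G^{*}$ in linear time, compute $M(G)$ exactly, invoke the FPRAS of Theorem~\ref{thm:fpras} on the instance $(G^{*}, a', b', c', d')$ to obtain a $(1\pm\varepsilon)$-approximation of $Z_{G^{*}}$, and rescale. Relative error is preserved under multiplication by a deterministic factor, so the overall algorithm is an FPRAS for $Z_{G}$.

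The main obstacle will be the parameter-transfer step. The regions $\mathcal{A}_{\le}$ and $\mathcal{B}_{\le}$ should be stable under the relevant coordinate swap by symmetry, and $\mathcal{F}_{\le^{2}}$, which appears to be a symmetric second-order constraint, should likewise be preserved; the whole point of the construction is that $\mathcal{C}_{\ge}$ is carried to $\mathcal{C}_{\le}$. Verifying all four statements simultaneously requires unpacking the algebraic definitions of the regions and may well require a small gauge transformation to realize duality cleanly: if the bare transformation does not literally exchange (say) $c$ and $d$ but acts by a nontrivial linear change of variables, one must check that each inequality transforms exactly as advertised. I expect the bulk of the work, and the only place where real ingenuity is needed, to lie in that algebraic check; once it is in place, the planarity-specific FPRAS follows by a purely mechanical reduction to Theorem~\ref{thm:fpras}.
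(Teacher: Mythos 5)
Your proposal takes a genuinely different route from the paper, and unfortunately it has a gap that I believe is fatal as stated. You propose to reduce the planar case to the general case of \thmref{thm:fpras} via a planar-duality (or planar-only holographic) transformation that sends $(a,b,c,d)$ to some $(a',b',c',d')$ flipping $\mathcal{C}_\ge$ to $\mathcal{C}_\le$. The immediate obstruction is structural, before one even gets to the ``algebraic check'' you flag as the hard part: the eight-vertex model is defined on $4$-regular graphs with constraint functions on degree-$4$ vertices, but the planar dual $G^*$ of a $4$-regular plane graph $G$ is a graph in which every face has size $4$, not a graph in which every vertex has degree $4$. So $Z_{G^*}(a',b',c',d')$ is not in general an instance of the eight-vertex model at all, and the identity $Z_G = M(G)\, Z_{G^*}$ does not have a well-defined right-hand side. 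Relatedly, in this model the weights live on vertices, not faces, so ``faces become vertices, so face weights become vertex weights'' does not describe the setup. Known holographic transformations that swap $c$ and $d$ (or $a,b,c,d$ among themselves) act on general graphs and do not distinguish planar from non-planar; the planarity-specific phenomena in this family concern crossings, not face/vertex exchange. You would need to exhibit an explicit, partition-function-preserving construction turning a planar $4$-regular instance into another $4$-regular instance with the advertised parameter change, and no such construction is supplied or obviously available.

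The paper does something quite different and inherently local. It re-runs the Markov-chain-plus-self-reduction strategy of \thmref{thm:fpras} directly on the planar graph, with two planarity-specific ingredients. First, the conductance and proportion bounds are replaced by planar variants (\corref{cor:ratio_planar}, \corref{cor:proportion_planar}) derived from a closure property (\thmref{thm:property_2}) that only holds on planar graphs; this is what lets $\mathcal{C}_\ge$ substitute for $\mathcal{C}_\le$. Second, the self-reduction in the planar case is allowed to split a vertex according to the pairing $\sub{3}_-$, which creates degree-$2$ vertices obeying a ``2-in/2-out'' rule and introduces a non-planar crossing; correctness of the resulting base case (disjoint cycles, each carrying an even number of such ``2-in/2-out'' vertices) is established via the Jordan Curve Theorem, because these special vertices are in bijection with crossings along each cycle. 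None of this is captured by a global duality reduction; in particular, the intermediate graphs in the recursion are themselves no longer planar, and the paper has to argue separately (via a virtual-vertex substitution) that the needed closure property persists. If you want to salvage a reduction-based proof you would have to, at minimum, (i) identify a concrete $4$-regular-to-$4$-regular transformation valid only on plane graphs, (ii) verify the parameter transfer for all four regions including $\mathcal{F}_{\le^2}$, and (iii) show it composes correctly with the recursive structure of the algorithm; I do not see how to do (i).
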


\begin{remark}
Our FPRAS result is actually stronger.
The FPRAS in \thmref{thm:fpras} for general graphs (including planar graphs) works even if different constraint functions from
 $\mathcal{F}_{\le^2} \bigcap \mathcal{A}_\le \bigcap \mathcal{B}_\le \bigcap \mathcal{C}_\le$ are assigned at different vertices. 
Similarlry the FPRAS in \thmref{thm:fpras_planar} for (only) planar graphs works even if different constraint functions from
$\mathcal{F}_{\le^2} \bigcap \mathcal{A}_\le \bigcap \mathcal{B}_\le \bigcap \mathcal{C}_\ge$ are assigned at different vertices. (For logical reasons
concerning models of computation, the functions should 
take values in algebraic numbers, and if these functions are not
chosen from a fixed finite set then the description of each
constraint function used must be included in the input.
In this section for simplicity, we assume all constraint functions are from a fixed finite subset.)
\end{remark}

We design our FPRAS using the common approach of approximately counting via almost uniformly sampling~\cite{JERRUM1986169, doi:10.1137/0218077, Dyer:1991:RPA:102782.102783, Sinclair92improvedbounds, Jerrum-book} by showing that a Markov chain designed for the six-vertex model can be adapted for the eight-vertex model.
The Markov chain we adapt is the \emph{directed-loop algorithm} which was invented by Rahman and Stillinger~\cite{doi:10.1063/1.1678874} and
is widely used for the six-vertex model (e.g., \cite{YANAGAWA1979329, Barkema98montecarlo, PhysRevE.70.016118}). The state space of our Markov chain $\mathcal{MC}$ for the eight-vertex model consists of \emph{even orientations} and \emph{near-even orientations}, which is an extension of the space of valid configurations; the transitions of this algorithm are composed of creating, shifting, and merging of the two defects on edges.
Some examples of the states in the directed-loop algorithm are shown in \figref{fig:moves} where the state in \figref{fig:moves_0} is an even orientation and the state in \figref{fig:moves_2_1} and the state in \figref{fig:moves_2_2} are near-even orientations with exactly two defects. Some typical moves in the directed-loop algorithm are as follows: the transition from the state in \figref{fig:moves_0} to the state in \figref{fig:moves_2_1} creates two defects; the transition from the state in \figref{fig:moves_2_1} to the state in \figref{fig:moves_0} merges two defects; the transitions between \figref{fig:moves_2_1} and \figref{fig:moves_2_2} shift one of the defects.
(Formal description of this Markov chain will be given shortly.)

\captionsetup[subfigure]{labelformat=parens}
\begin{figure}[h!]
\centering
\begin{subfigure}[b]{0.3\linewidth}
\centering\includegraphics[width=0.6\linewidth]{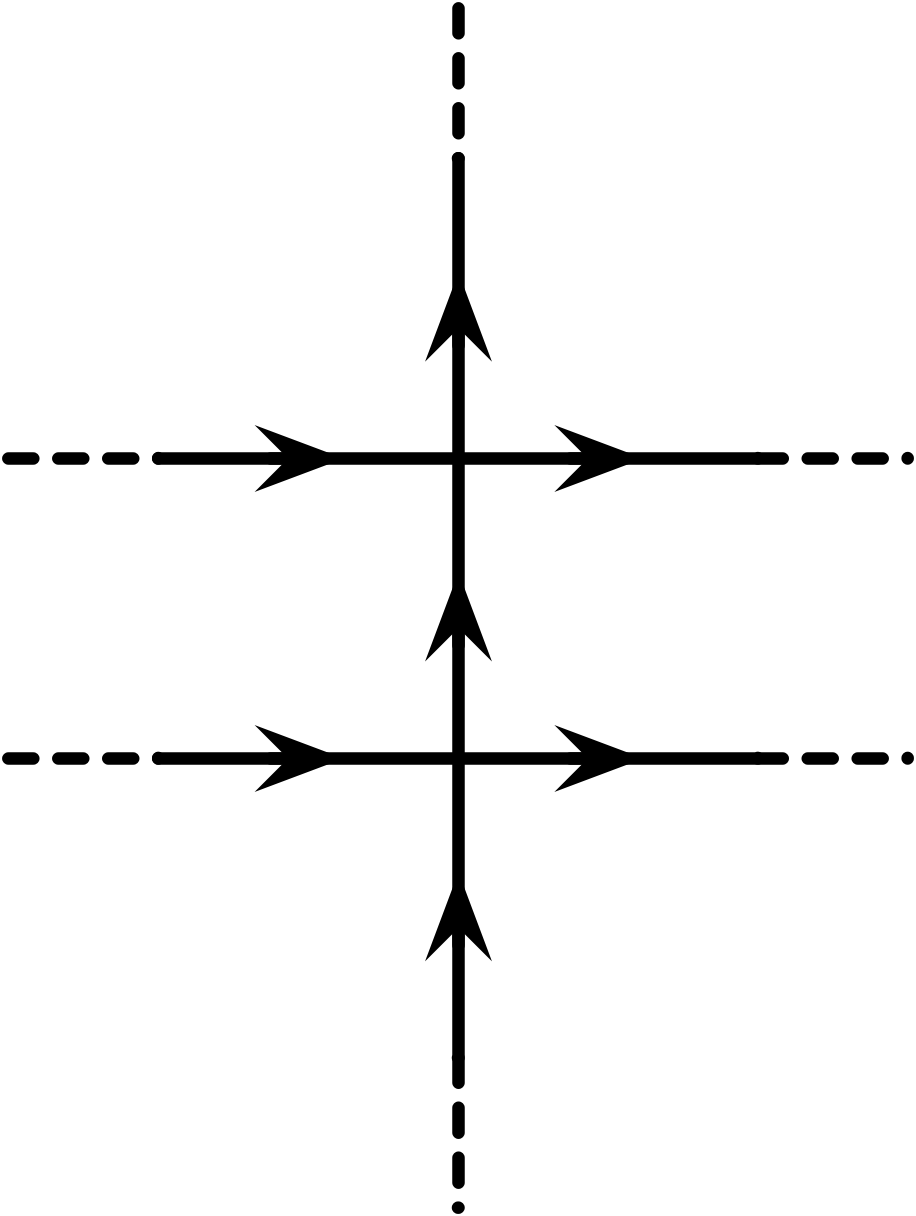}\caption{}\label{fig:moves_0}
\end{subfigure}
\begin{subfigure}[b]{0.3\linewidth}
\centering\includegraphics[width=0.6\linewidth]{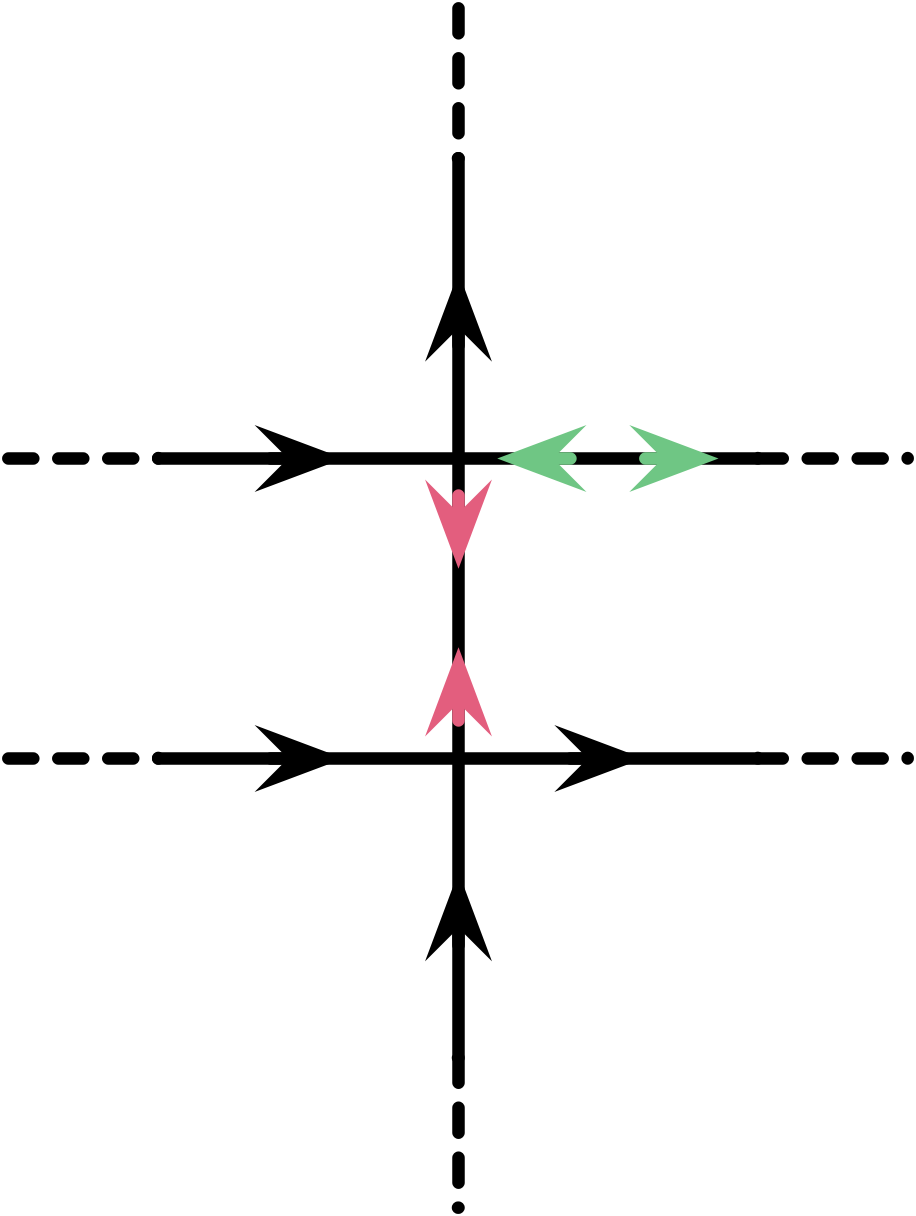}\caption{}\label{fig:moves_2_1}
\end{subfigure}
\begin{subfigure}[b]{0.3\linewidth}
\centering\includegraphics[width=0.6\linewidth]{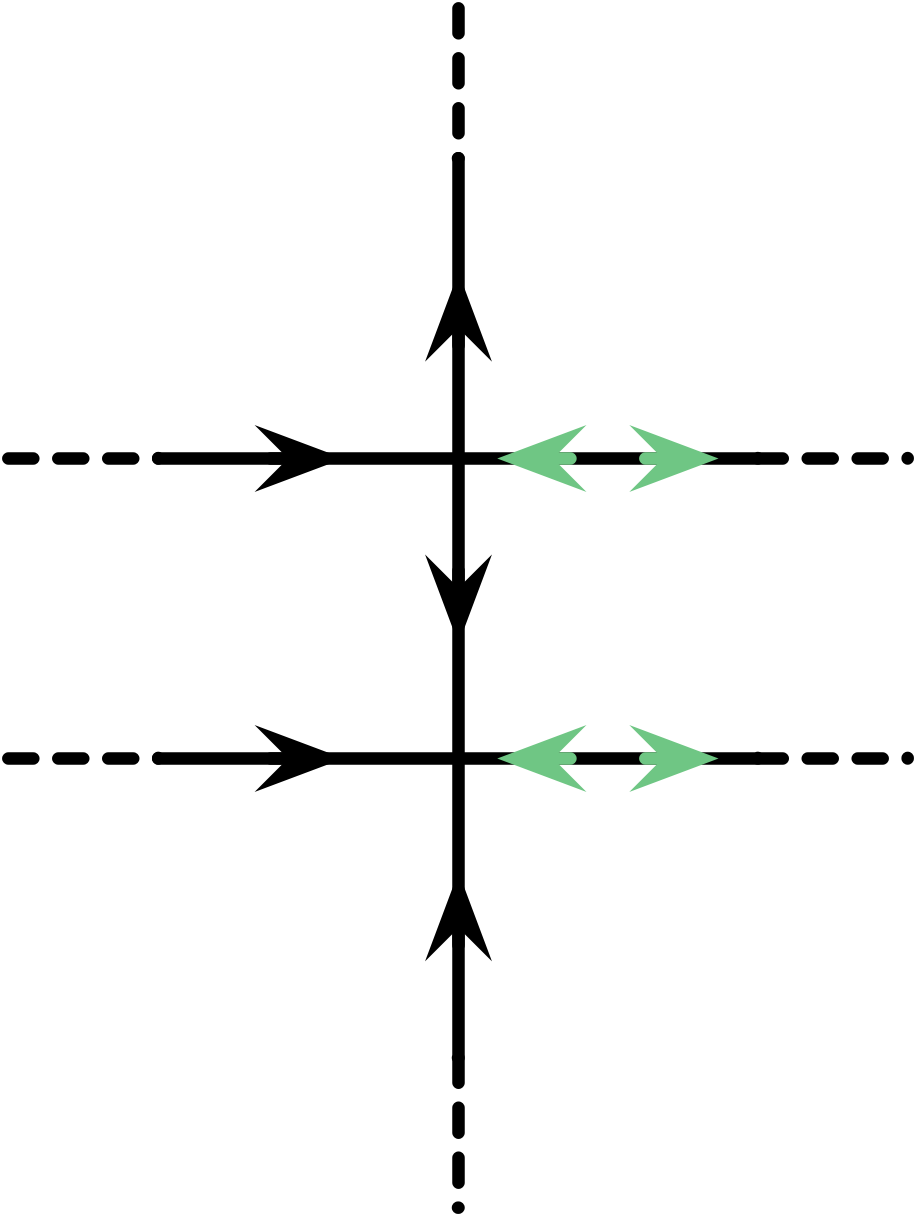}\caption{}\label{fig:moves_2_2}
\end{subfigure}  
\caption{Examples of the states in the directed-loop algorithm.}\label{fig:moves}
\end{figure}

\begin{notation}
For a 4-regular graph, denote the set of even orientations by $\Omega_0$ and the set of near-even orientations by $\Omega_2$. The state space of $\mathcal{MC}$ is $\Omega = \Omega_0 \cup \Omega_2$. Let $\mathcal{Z}(S)$ be the weighted sum of states in the set $S$.
\end{notation}

\begin{proof}[Proof of \thmref{thm:fpras}]
We will show (later) that $\mathcal{MC}$ is \emph{irreducible} and \emph{aperiodic}, and it satisfies the \emph{detailed balance condition} under the Gibbs distribution.
By the theory of Markov chains, we have an almost uniform sampler of $\Omega_0 \cup \Omega_2$.
This sampler is efficient if $\mathcal{MC}$ is rapidly mixing.
In this proof we show that for a 4-regular graph, if all constraint functions
used in an instance belong to $\mathcal{F}_{\le^2} \bigcap \mathcal{A}_\le \bigcap \mathcal{B}_\le \bigcap \mathcal{C}_\le$, then
\vspace{-2mm}
\setlist[enumerate]{itemsep=-1mm}
\begin{enumerate}[(1)]
\item
the $\mathcal{MC}$ is rapidly mixing via a \emph{conductance argument}~\cite{doi:10.1137/0218077, Dyer:1991:RPA:102782.102783, Sinclair92improvedbounds, Jerrum-book};
\item
even orientations take a non-negligible proportion in the state space;
\item
there exists a 
self-reduction (to reduce the computation of the partition function of a graph to that of a ``smaller'' graph)~\cite{JERRUM1986169}.
\end{enumerate}
\vspace{-2mm}
We remark that all three parts (1)(2)(3) depend on the idea of quantum decomposition and the closure properties shown in \secref{sec:properties}.

According to \lemref{lem:congestion}, when $(a, b, c, d) \in \mathcal{F}_{\le^2}$, the conductance of this $\mathcal{MC}$ is polynomially bounded \emph{if} $\frac{\mathcal{Z}(\Omega_2)}{\mathcal{Z}(\Omega_0)}$ is polynomially bounded.
According to \corref{cor:ratio},
when $(a, b, c, d) \in 
\mathcal{A}_\le \bigcap \mathcal{B}_\le \bigcap \mathcal{C}_\le$, $\frac{\mathcal{Z}(\Omega_2)}{\mathcal{Z}(\Omega_0)}$ is polynomially bounded, which proves part (2) above.
Combining \lemref{lem:congestion} and \corref{cor:ratio}, we can also conclude part (1).
As a consequence of (1) and (2), we are able to efficiently sample valid eight-vertex configurations according to the Gibbs measure on $\Omega_0$ (almost uniformly), and in the following algorithm we only work with states in $\Omega_0$, the set of even orientations.

Before we state the algorithm, we need to extend the type of vertices a graph can have in the eight-vertex model.
Previously, a graph can only have degree 4 vertices, on each of which a constraint function satisfies the even orientation rule and arrow reversal symmetry.
Now, a graph can also have degree 2 vertices, on each of which the constraint function satisfies the ``1-in-1-out'' rule and both valid local configurations have weight $1$.
Both \lemref{lem:congestion} and \corref{cor:ratio} still hold with this extension, because such a degree 2 vertex and its two incident edges just work together as a single edge.

We design the following algorithm to approximately compute the partition function $Z(G)$ via sampling with the directed-loop algorithm $\mathcal{MC}$.
As we have argued in \secref{sec:properties}, the partition function of the eight-vertex models can be viewed as the weighted sum over a set of {\sl dacp}'s. Since every constraint function belongs to $\overline{\mathcal{F}_>}$, by \lemref{lem:freedom_1} for  each vertex 
we can choose a nonnegative weight function $w$ on signed pairings at $v$.
 For a vertex $v \in V$, the ratios among different signed pairings $\{\sub{1}, \sub{2}, \sub{3}\} \times \{+, -\}$ in weighted {\sl dacp}'s can be uniquely determined by the ratios among different orientations (represented by $a$, $b$, $c$, and $d$) at $v$.
For example, if we express $Z(G)$ as $2aA + 2bB + 2cC + 2dD$ according to the
local orientation configuration at $v$,
as in the proof of \corref{cor:proportion}, we see that indeed
$2w(\sub{1}_-) ( A + D)$ is the weight for finding the
signed pairing $\sub{1}_-$ at $v$.
As long as the partition function is not zero (this can be easily tested in polynomial time), there is a signed pairing $\boldsymbol{\varrho}$ showing up at $v$ with probability at least $\frac{1}{6}$ among all six signed pairings.
Moreover, according to \corref{cor:proportion}, one of the pairings
in  $\{\sub{1}_+, \sub{2}_+, \sub{3}_+\}$ shows up at $v$ with probability at least $\frac{1}{6}$.
Therefore, running $\mathcal{MC}$ on $G$, we can approximate, with a sufficient $1/{\rm poly}(n)$ precision, the probability of having $\boldsymbol{\varrho} \in \{\sub{1}_+, \sub{2}_+, \sub{3}_+\}$ at $v$, denoted by $\Pr_v(\boldsymbol{\varrho})$.
Denote by $G_{v, \boldsymbol{\varrho}}$ the graph with $v$ being split into $v_1$ and $v_2$ 
and the edges reconnected according to $\boldsymbol{\varrho}$. Recall that the degree 2 vertices $v_1$ and $v_2$ must satisfy the ``1-in-1-out'' rule in any valid configuration.
Write the partition function of $G_{v, \boldsymbol{\varrho}}$ as $Z(G_{v, \boldsymbol{\varrho}})$, we have $\Pr_v (\boldsymbol{\varrho}) = w(\boldsymbol{\varrho})Z(G_{v, \boldsymbol{\varrho}}) / Z(G)$ which means $Z(G) = w(\boldsymbol{\varrho})Z(G_{v, \boldsymbol{\varrho}}) / \Pr_v (\boldsymbol{\varrho})$. To approximate $Z(G)$ it suffices to approximate $Z(G_{v, \boldsymbol{\varrho}})$, which can be done by running $\mathcal{MC}$ on $G_{v, \boldsymbol{\varrho}}$ and recursing.
Repeating this process for $|V|$ steps we decompose the graph $G$ into the base case, a set of disjoint cycles.
The partition function of this cycle graph is just $2^C$ where $C$ is the number of cycles. By this self-reduction, the partition function $Z(G)$ can be approximated.
\end{proof}

\begin{proof}[Proof of \thmref{thm:fpras_planar}]
The proof is similar to that of \thmref{thm:fpras}, with the help of \corref{cor:ratio_planar} and \corref{cor:proportion_planar}, two corollaries of the closure property \thmref{thm:property_2} which 
holds on planar graphs. 

Given a plane graph $G$ with a constraint function on every vertex from $\mathcal{F}_{\le^2} \bigcap \mathcal{A}_\le \bigcap \mathcal{B}_\le \bigcap \mathcal{C}_\ge$, we can still efficiently sample even orientations according to the Gibbs measure. However, in order to do self-reduction, we have to prove something more. 

To make our algorithm work, we need to extend the type of vertices in the eight-vertex model again.
Previously in the proof of \thmref{thm:fpras}, a graph can have degree 4 vertices, on each of which the constraint function satisfies the even orientation rule and arrow reversal symmetry, and degree 2 vertices, on each of which the constraint function satisfies the ``1-in-1-out'' rule and both valid local configurations have weight $1$.
Now, a graph can also have degree 2 vertices, on each of which the constraint function satisfies the ``2-in/2-out'' rule and both valid local configurations have weight $1$.
One can check that \lemref{lem:congestion} still holds even with this extension.

The self-reduction still processes one vertex $v$ at a time.
As long as the partition function is not zero, there is a
signed pairing $\boldsymbol{\varrho}$ showing up at $v$ with probability at least $\frac{1}{6}$ among all six signed pairings.
Moreover, according to \corref{cor:proportion_planar}, one of the pairings $\{\sub{1}_+, \sub{2}_+, \sub{3}_-\}$ shows up at $v$ with probability at least $\frac{1}{6}$. If $\boldsymbol{\varrho}$ is $\sub{1}_+$ or $\sub{2}_+$, let
 $G_{v, \boldsymbol{\varrho}}$ be the graph with $v$ being split into $v_1$ and $v_2$ 
and the edges reconnected according to $\boldsymbol{\varrho}$. The degree 2 vertices $v_1$ and $v_2$ must satisfy the ``1-in-1-out'' rule in any valid configuration, just as in the proof of \thmref{thm:fpras}.

If $\boldsymbol{\varrho}$ is $\sub{3}_-$, let $G_{v, \sub{3}_-}$ be the graph with $v$ being split into $v_1$ and $v_2$  and the edges reconnected according to $\sub{3}_-$. This time, the degree 2 vertices $v_1$ and $v_2$ must satisfy the ``2-in/2-out'' rule in any valid configuration.
Observe that
\thmref{thm:property_2} holds for $G_{v, \sub{3}_-}$ if and only if it holds for $G'_{v, \sub{3}_-}$, which is 
obtained from $G$ by replacing  $v$ by a virtual vertex $v'$ with parameter setting $(a, b, c, d) = (0, 0, 1, 1)$ (this is equivalent to 
choosing $w(\sub{3}_-) = 1$ and $w$ being 0 on the other five signed pairings,
 for a nonnegative $w$ at $v'$).
Since $(0, 0, 1, 1) \in \mathcal{A}_\le \bigcap \mathcal{B}_\le \bigcap \mathcal{C}_\ge \bigcap \overline{\mathcal{F}_>}$, \thmref{thm:property_2} and consequently \corref{cor:ratio_planar} still hold for $G'_{v, \sub{3}_-}$ thus also for $G_{v, \sub{3}_-}$.
(Note that this  $G'_{v, \sub{3}_-}$ is not involved algorithmically
in subsequent steps; its only purpose is to show that
\thmref{thm:property_2} holds for $G_{v, \sub{3}_-}$, on which
the algorithm continues.)

The subsequent steps in the self-reduction step for $v$
 are the same as in the proof of \thmref{thm:fpras}.
The base case is a decomposition of $G$ into a set of disjoint cycles with an even number of degree 2 vertices that satisfy the ``2-in/2-out'' rule.
This is proved by using the  \emph{Jordan Curve Theorem}:
The graph is initially planar. Any step replacing  $v$ with $v_1$ and $v_2$
for
 $\sub{1}_+$ or $\sub{2}_+$ in
 $G_{v, \boldsymbol{\varrho}}$ does not create any non-planar crossings
nor vertices satisfying the ``2-in/2-out'' rule.
Only the third type of steps  replacing  $v$ with $v_1$ and $v_2$
for
$\sub{3}_-$ in
 $G_{v, \sub{3}_-}$ create a non-planar crossing
and also a vertex satisfying the ``2-in/2-out'' rule at each
crossing locally at each branch of the crossing.
Thus at the end we are left with a set of disjoint cycles 
where along each cycle degree 2 vertices 
 satisfying the ``2-in/2-out'' rule are in 1-1
correspondence with non-planar crossings. By 
the \emph{Jordan Curve Theorem} this number is even, for every cycle. 
The partition function of this cycle graph is just $2^C$ where $C$ is the number of cycles. Again, the partition function $Z(G)$ can be approximated.
\end{proof}

\begin{corollary}\label{cor:ratio}
Given a 4-regular graph $G = (V, E)$, if the constraint function on every vertex is from $\mathcal{A}_\le \bigcap \mathcal{B}_\le \bigcap \mathcal{C}_\le$, then $\frac{\mathcal{Z}(\Omega_2)}{\mathcal{Z}(\Omega_0)} \le \binom{|E|}{2}$.
\end{corollary}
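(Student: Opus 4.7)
The plan is to bound $\mathcal{Z}(\Omega_2)$ by partitioning $\Omega_2$ according to the positions of the two defect edges, and then comparing each restricted sum to $\mathcal{Z}(\Omega_0)$ vertex by vertex through the quantum decomposition developed in Section \ref{sec:properties}. A near-even orientation has exactly two defect edges, so I can write
\[
\Omega_2 \;=\; \bigsqcup_{\{e_1,e_2\}\subseteq E}\Omega_2^{\{e_1,e_2\}},
\]
where $\Omega_2^{\{e_1,e_2\}}$ collects those near-even orientations whose defects are on $e_1$ and $e_2$. Since there are exactly $\binom{|E|}{2}$ such pairs, the corollary reduces to the pointwise inequality
\[
\mathcal{Z}\bigl(\Omega_2^{\{e_1,e_2\}}\bigr) \;\le\; \mathcal{Z}(\Omega_0) \quad \text{for every } \{e_1,e_2\}.
\]

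To establish this, I would expand both sides via the quantum decomposition. Under the hypothesis $(a,b,c,d)\in\mathcal{A}_\le\cap\mathcal{B}_\le\cap\mathcal{C}_\le$ (which sits inside $\overline{\mathcal{F}_>}$), Lemma~\ref{lem:freedom_1} provides a non-negative weighting $w$ on the six signed pairings $\{\sub{1}_\pm,\sub{2}_\pm,\sub{3}_\pm\}$ at each vertex. This rewrites $\mathcal{Z}(\Omega_0)$ as a sum, over global signed-pairing assignments $\boldsymbol{\varrho}:V\to\{\sub{1}_\pm,\sub{2}_\pm,\sub{3}_\pm\}$, of $\prod_v w(\boldsymbol{\varrho}(v))$ times the number of closed orientations compatible with $\boldsymbol{\varrho}$. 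The same recipe applies to $\mathcal{Z}(\Omega_2^{\{e_1,e_2\}})$, except that the compatibility count is for orientations that break the 2-in/2-out rule precisely at $e_1$ and $e_2$.

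The core step is then the per-assignment comparison. Fixing $\boldsymbol{\varrho}$, the pairings partition the edges of $G$ into a union of disjoint cycles (possibly signed); orientations compatible with $\boldsymbol{\varrho}$ correspond to choices of direction on each cycle. A near-even completion with defects at $e_1,e_2$ exists iff $e_1$ and $e_2$ lie on the same cycle, and the number of such completions along a given cycle is bounded by the number of even completions on that cycle; combined with the defining inequalities of $\mathcal{A}_\le$, $\mathcal{B}_\le$, and $\mathcal{C}_\le$, which bound the "defect" signed-pairing weights against the "even" ones for each of the three pairing types $\sub{1},\sub{2},\sub{3}$, one obtains the vertex-wise inequality that collapses to $\mathcal{Z}(\Omega_2^{\{e_1,e_2\}})\le\mathcal{Z}(\Omega_0)$ when multiplied out across all $v$.

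The main obstacle I expect is the combinatorial bookkeeping in the per-assignment comparison: one must describe a natural map (essentially, reversing the segment of the $\boldsymbol{\varrho}$-cycle between $e_1$ and $e_2$) that sends each near-even configuration with prescribed defects to an even configuration, and then argue that the accumulated local weight ratio along the reversed path is $\le 1$. Verifying this ratio bound uniformly for all three classes $\mathcal{A}_\le$, $\mathcal{B}_\le$, $\mathcal{C}_\le$ simultaneously is the delicate point; the tensor-product structure of the quantum decomposition is what allows the per-vertex inequalities to compose into the global bound.
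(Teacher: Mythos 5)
Your decomposition of $\Omega_2$ into the sets $\Omega_2^{\{e_1,e_2\}}$ indexed by the defect pair, and the consequent reduction to the pointwise inequality $\mathcal{Z}(\Omega_2^{\{e_1,e_2\}}) \le \mathcal{Z}(\Omega_0)$, matches the paper exactly. The gap is in how you try to establish that pointwise inequality. The paper's key step, which you are missing, is to ``cut open'' the two edges $e_1$ and $e_2$ to obtain a $4$-ary construction $\Gamma$ with four dangling half-edges, whose constraint function has parameters $(a',b',c',d')$. Under this cutting, every even orientation of $G$ corresponds to one of the $2(b'+c')$ weighted ways of recombining the dangling half-edges in a $1$-in-$1$-out fashion at each cut, while every near-even orientation with defects at $e_1,e_2$ corresponds to one of the $2(a'+d')$ weighted ways of recombining with both half-edges in (or both out) at each cut. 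Thus $\mathcal{Z}(\Omega_2^{\{e_1,e_2\}})=2(a'+d')$, $\mathcal{Z}(\Omega_0)=2(b'+c')$, and the desired inequality $a'+d'\le b'+c'$ follows immediately from the closure property \thmref{thm:property_1}, which says that the classes $\mathcal{A}_\le,\mathcal{B}_\le,\mathcal{C}_\le$ are preserved under forming $4$-ary constructions from vertices in those classes.

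Your proposal instead tries to prove the pointwise inequality from scratch by expanding both sides over global signed-pairing assignments $\boldsymbol{\varrho}$ and constructing a cycle-reversal map between near-even and even completions. As you yourself flag, the ``delicate point'' --- bounding the accumulated per-vertex weight ratio along the reversed segment uniformly across $\mathcal{A}_\le,\mathcal{B}_\le,\mathcal{C}_\le$ --- is left unresolved, and it is genuinely nontrivial: the per-$\boldsymbol{\varrho}$ compatibility count is not simply ``choose a direction per cycle'' but carries parity constraints from the $\pm$ decorations (each circuit must see an even number of $-$'s), so the claim that near-even completions are bounded by even completions does not hold term by term without a careful matching. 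In effect you are attempting to re-derive a special case of \thmref{thm:property_1} rather than invoke it, and that re-derivation is precisely the content you have not supplied. Recognizing the $4$-ary construction interpretation and delegating the inequality to the established closure theorem is what makes the paper's proof short; without that, the proposal has a real hole at its core step.
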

\begin{proof}
\captionsetup[subfigure]{labelformat=parens}
\begin{figure}[h!]
\centering
\begin{subfigure}[b]{0.45\linewidth}
\centering\includegraphics[width=0.9\linewidth]{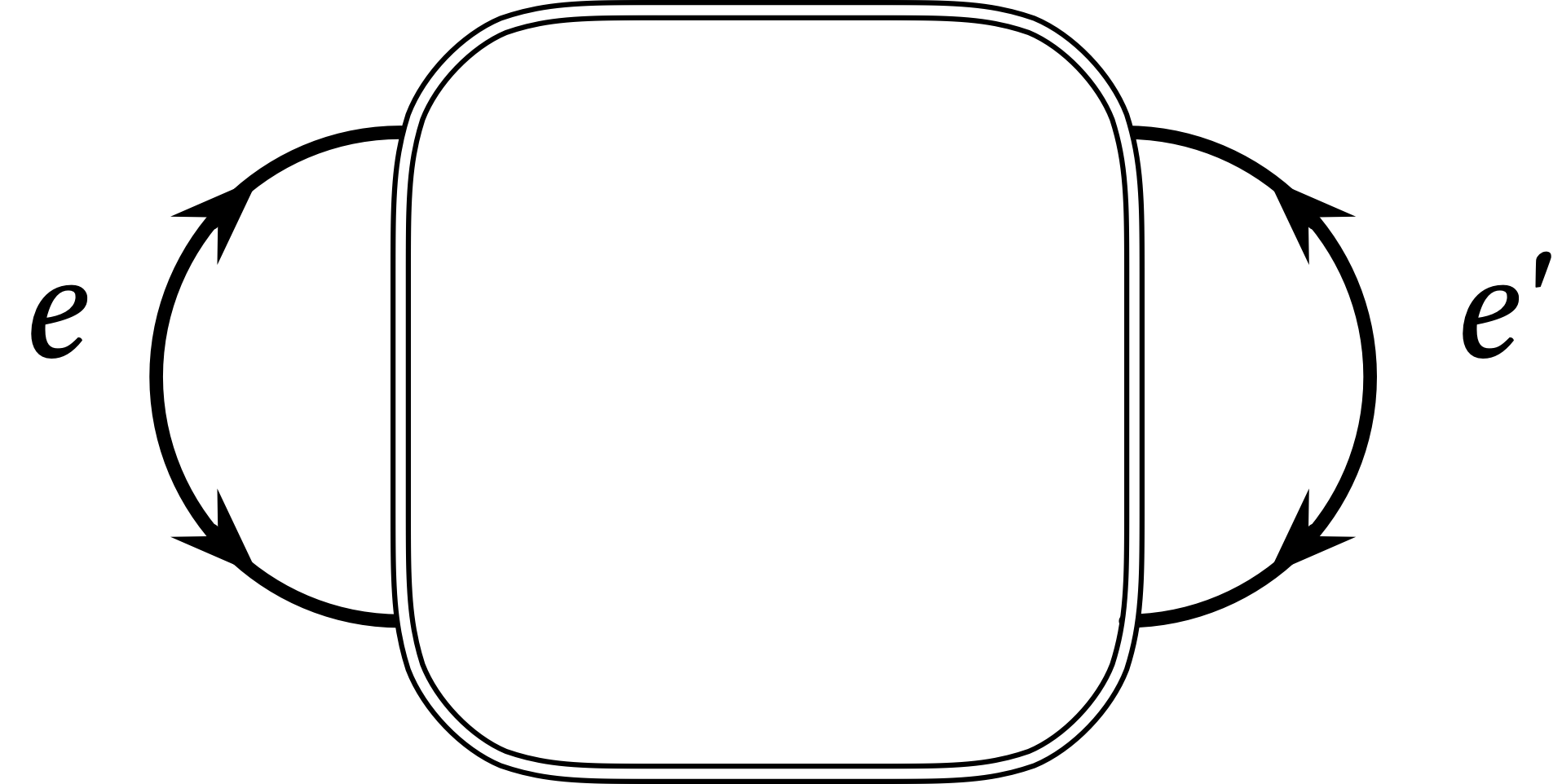}\caption{A near-even orientation with defects at $e$ and $e'$.}\label{fig:congestion_state}
\end{subfigure}
\hspace{0.05\linewidth}
\begin{subfigure}[b]{0.45\linewidth}
\centering\includegraphics[width=0.71\linewidth]{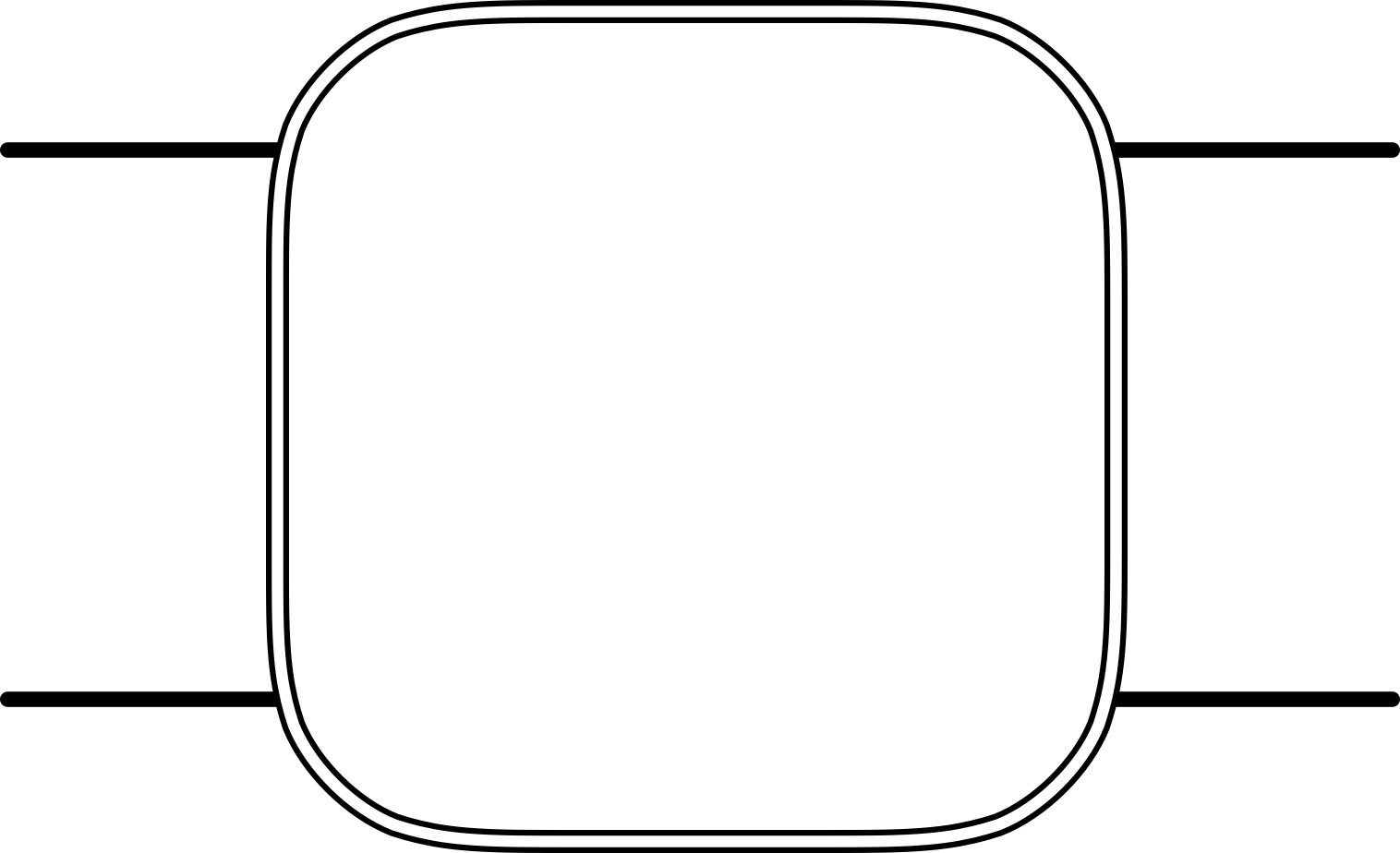}\caption{A 4-ary construction by cutting open $e$ and $e'$.}\label{fig:congestion_gadget}
\end{subfigure}
\caption{}\label{fig:congestion}
\end{figure}

For each near-even orientation, there are exactly two defective edges.
Let $\Omega_2^{\{e, e'\}} \subseteq \Omega_2$ be 
the set of near-even orientations in which $e, e'$ are these two defective edges.
We have $\frac{\mathcal{Z}(\Omega_2)}{\mathcal{Z}(\Omega_0)} = 
\sum_{\{e, e'\} \ \in \ {E \choose 2}}\frac{\mathcal{Z}(\Omega_2^{\{e, e'\}})}{\mathcal{Z}(\Omega_0)}$.

For any $\tau \in \Omega_2$, each of $e$ and $e'$ 
may have both half-edges coming in or going out, with 4 possibilities.
An example is in \figref{fig:congestion_state} where both $e$ and $e'$ have their half-edges going out.
If we ``cut open'' $e$ and $e'$ as shown in \figref{fig:congestion_gadget},
we get a 4-ary construction $\Gamma$ using degree 4 vertices with constraint functions in $\mathcal{A}_\le \bigcap \mathcal{B}_\le \bigcap \mathcal{C}_\le$.
Denote the constraint function of $\Gamma$ by $(a', b', c', d')$, with the input order being counter-clockwise starting from the upper-left edge.
For this 4-ary construction $\Gamma$ we observe that: the set of near-even orientations in $\Omega_2^{\{e, e'\}}$ contributes a total weight $(a' + a' + d' + d')$, i.e. $\mathcal{Z}(\Omega_2^{\{e, e'\}}) = 2(a' + d')$; the set of even orientations in $\Omega_0$ has a total weight $\mathcal{Z}(\Omega_0) = 2(b' + c')$.
By \thmref{thm:property_1} we know that for the 4-ary construction $\Gamma$, $a' + d' \le b' + c'$. Therefore, $\frac{\mathcal{Z}(\Omega_2^{\{e, e'\}})}{\mathcal{Z}(\Omega_0)} \le 1$. In total, $\frac{\mathcal{Z}(\Omega_2)}{\mathcal{Z}(\Omega_0)} \le \binom{|E|}{2}$.
\end{proof}

\begin{corollary}\label{cor:ratio_planar}
Given a 4-regular plane graph $G = (V, E)$, if the constraint function on every vertex is from $\mathcal{A}_\le \bigcap \mathcal{B}_\le \bigcap \mathcal{C}_\ge \bigcap \overline{\mathcal{F}_>}$, then $\frac{\mathcal{Z}(\Omega_2)}{\mathcal{Z}(\Omega_0)} \le \binom{|E|}{2}$.
\end{corollary}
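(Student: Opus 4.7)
The plan is to follow the proof of \corref{cor:ratio} almost verbatim, replacing the single appeal to \thmref{thm:property_1} by an appeal to the planar closure property \thmref{thm:property_2}, and verifying that each step of the construction respects planarity.

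First I would perform the same edge-pair decomposition
\[
\frac{\mathcal{Z}(\Omega_2)}{\mathcal{Z}(\Omega_0)} = \sum_{\{e, e'\} \in \binom{E}{2}} \frac{\mathcal{Z}(\Omega_2^{\{e, e'\}})}{\mathcal{Z}(\Omega_0)},
\]
where $\Omega_2^{\{e,e'\}} \subseteq \Omega_2$ is the set of near-even orientations whose two defective edges are $e$ and $e'$. For each unordered pair $\{e, e'\}$, I cut open $e$ and $e'$ exactly as depicted in \figref{fig:congestion_gadget} to obtain a 4-ary construction $\Gamma$. Since $G$ is a plane graph and cutting is local, $\Gamma$ inherits a planar embedding with the four dangling half-edges on its outer face.

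Next I would transport the weight identities from \corref{cor:ratio} unchanged. Writing the constraint function of $\Gamma$ in the same counter-clockwise order as $(a', b', c', d')$, the identities $\mathcal{Z}(\Omega_2^{\{e, e'\}}) = 2(a' + d')$ and $\mathcal{Z}(\Omega_0) = 2(b' + c')$ continue to hold, because they only record how the arrows at $e$ and $e'$ combine with orientations inside $\Gamma$ and make no reference to planarity. The critical step is then to apply \thmref{thm:property_2} to $\Gamma$: every internal vertex of $\Gamma$ carries a constraint from $\mathcal{A}_\le \bigcap \mathcal{B}_\le \bigcap \mathcal{C}_\ge \bigcap \overline{\mathcal{F}_>}$ and $\Gamma$ is planar, so the theorem yields $a' + d' \le b' + c'$. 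Hence $\mathcal{Z}(\Omega_2^{\{e, e'\}}) / \mathcal{Z}(\Omega_0) \le 1$ for every pair, and summing over the $\binom{|E|}{2}$ pairs gives the required bound.

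The main obstacle I anticipate is confirming that the planar 4-ary construction $\Gamma$ meets the exact hypothesis of \thmref{thm:property_2}. In particular one must check that the natural cutting in \figref{fig:congestion_gadget} places the four external half-edges on the outer face of $\Gamma$ in the cyclic order required for the conclusion $a' + d' \le b' + c'$; a different embedding could permute the external arities and effectively swap the roles of $(a', b', c', d')$, possibly reversing the inequality. Once this embedding compatibility is verified, the argument reduces to the non-planar proof of \corref{cor:ratio} with the single substitution of \thmref{thm:property_2} for \thmref{thm:property_1}.
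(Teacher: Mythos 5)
There is a genuine gap, and it sits exactly where you flag your ``main obstacle,'' but your diagnosis of that obstacle is off. You assert that ``since $G$ is a plane graph and cutting is local, $\Gamma$ inherits a planar embedding with the four dangling half-edges on its outer face.'' This is false in general. Cutting open an edge $e$ merges the two faces that bordered $e$ into a single face, and the two resulting half-edges both lie on that merged face; likewise for $e'$. But the merged face coming from $e$ and the merged face coming from $e'$ are typically two \emph{different} faces of $\Gamma$. So the picture is not four dangling edges around a common outer face in some cyclic order; it is two dangling edges on one face and two on another. Consequently $\Gamma$ is not a planar 4-ary construction in the sense required by \thmref{thm:property_2}, and the theorem cannot be invoked as-is. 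Your worry about a possible cyclic permutation of $(a',b',c',d')$ is not the real danger; the real danger is that the hypothesis of \thmref{thm:property_2} simply does not hold for $\Gamma$.

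The paper's proof explicitly records the correct geometry --- it names $e_1,e_2$ the two half-edges cut from one edge and $e_3,e_4$ from the other, and observes that $e_1,e_2$ share one face while $e_3,e_4$ share another --- and then states that one must \emph{modify the proof} of \thmref{thm:property_2} (not just apply the theorem) to still conclude $a' + d' \le b' + c'$ in this non-standard configuration. Everything else in your argument (the decomposition over $\binom{E}{2}$, the identities $\mathcal{Z}(\Omega_2^{\{e,e'\}}) = 2(a'+d')$ and $\mathcal{Z}(\Omega_0) = 2(b'+c')$, and the final summation) matches the paper. To repair the proposal you would need to either (i) prove that \thmref{thm:property_2}'s argument survives when the four external legs are split two-and-two across two faces, or (ii) find a planarity-preserving rewiring that moves all four legs onto a single face without changing $(a',b',c',d')$ in a way that flips the inequality. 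Simply asserting the outer-face embedding exists is not enough.
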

\begin{proof}
For any 4-regular plane graph $G = (V, E)$, 
if we cut the
two defective edges of $\tau \in \Omega_2$,
we obtain a planar $\Gamma$ with 4 dangling edges
using constraint
functions from $\mathcal{A}_\le \bigcap \mathcal{B}_\le \bigcap \mathcal{C}_\ge \bigcap \overline{\mathcal{F}_>}$.
We name  $e_1$ and $e_2$ the two dangling edges cut from one edge in $G$,
and  $e_3$ and $e_4$ cut from the other.
Both $e_1$ and $e_2$ now reside in a single face of $\Gamma$,
and so do $e_3$ and $e_4$. We can modify the proof of
 \thmref{thm:property_2} to establish that for $\Gamma$,
we still have $a' + d' \le b' + c'$.
%
\end{proof}



Although $\mathcal{MC}$ runs on the even orientations and near-even orientations of a 4-regular graph $G$, it is formally defined and analyzed using the edge-vertex incidence graph $G'$ of $G$ introduced in \secref{sec:prelim}.

Let $G' = (V,U,E)$ be the edge-vertex incidence graph of
$G$, an instance of $Z(a,b,c,d)$.
Each vertex in $V$ is assigned  $(\neq_2)$;
each vertex  $u \in U$ is assigned a constraint function 
$f_u \in \mathcal{F}_{\le^2}$.  An \textit{assignment} $\sigma$ assigns a
value in $\{0, 1\}$ to each edge $e \in E$.
The  state space of $\mathcal{MC}$ is $\Omega = \Omega_0 \cup \Omega_2$, which 
consists of  ``perfect''  or ``near-perfect'' assignments
to $E$, defined as follows: all assignments satisfy the ``two-0-two-1/four-0/four-1'' rule at every vertex $u \in U$ of degree 4;
  all assignments satisfy the ``one-0-one-1''
at every $v \in V$ with \emph{possibly exactly two} exceptions.
Assignments in $\Omega_0$ have no exceptions, and are ``perfect'' (corresponding to the even orientations in $G$).
Assignments in $\Omega_2$ have exactly two exceptions, and are ``near-perfect'' (corresponding to the near-even orientations in $G$).
Thus any $\sigma \in \Omega_0$ sastifies all $(\neq_2)$ on $V$,
and  any $\sigma \in \Omega_2$ sastifies all $(\neq_2)$ on $V -\{v', v''\}$
for some two vertices $v', v'' \in V$ where it satisfies $(=_2)$
(which outputs 1 on inputs 00, 11 and outputs 0 on 01, 10).
For any assignment $\sigma \in \Omega$ and any subset $S \subseteq \Omega$,
define the \textit{weight} function $\mathcal{W}$ by 
 $\mathcal{W}(\sigma) = \prod_{u \in U} f_u(\sigma |_{E(u)})$ and $\mathcal{Z}(S) = \sum_{\sigma \in S} \mathcal{W}(\sigma)$.
Then the \textit{Gibbs measure} for $\Omega$ is defined
by $\pi(\sigma) = \frac{\mathcal{W}(\sigma)}{\mathcal{Z}(\Omega)}$,
assuming $\mathcal{Z}(\Omega)>0$.

Transitions in $\mathcal{MC}$ are comprised of three types of moves.
Suppose $\sigma \in \Omega_0$.
An  $\Omega_0$-to-$\Omega_2$ move from $\sigma$
 takes a degree 4 vertex $u \in U$ and
two incident edges $e' = (v', u), e'' = (v'', u) \in V \times U$,
and changes it to $\sigma_2 \in \Omega_2$
which flips both $\sigma(e')$ and $\sigma(e'')$.
The effect is that 
at $v'$ and $v''$, $\sigma_2$ satisfies $(=_2)$ instead of $(\neq_2)$.
An $\Omega_2$-to-$\Omega_0$ move is the opposite.
An $\Omega_2$-to-$\Omega_2$ move is, intuitively,  to 
\emph{shift} one $(=_2)$ from one vertex  $v' \in V$ 
to another $v^* \in V$, where for some $u \in U$, 
 $v'$ and $v^*$ are both incident to $u$
 and the ``two-0-two-1/four-0/four-1'' rule at $u$ is preserved.
Formally, let  $\sigma \in \Omega_2$ be a near-perfect assignment
with $v', v'' \in V$ being the two exceptional vertices (i.e.,
$\sigma$ satisfies $(=_2)$ at $v'$ and $v''$). 
Let $v^* \in V - \{v', v''\}$ be such that
for some $u \in U$, both $e' = (v', u), e^* = (v^*, u) \in E$.
Then an $\Omega_2$-to-$\Omega_2$ move changes $\sigma$ to $\sigma^*$
by flipping  both $\sigma(e')$ and $\sigma(e^*)$.
The effect is that 
$\sigma^*$ satisfies $(\neq_2)$ at $v'$ 
and  $(=_2)$  at $v^*$. Note that $\sigma^*$ continues to satisfy
 $(=_2)$ at $v''$.

The above describes a symmetric binary relation \emph{neighbor} ($\sim$)
 on $\Omega$.  No two states in $\Omega_0$ are neighbors.
Set $n = |U|$.
The number of neighbors of a $\Omega_0$-state is at most $6n$ (by first picking a vertex and then picking a pair of edges incident to this vertex) and the number of neighbors of a $\Omega_2$-state is at most a constant.
The transition probabilities $P(\cdot, \cdot)$ of $\mathcal{MC}$ are \textit{Metropolis} moves between neighboring states:
\[
P(\sigma_1, \sigma_2) = 
\left\{
\begin{array}{ll}
\frac{1}{12n}\min\left(1, \frac{\pi(\sigma_2)}{\pi(\sigma_1)}\right) & \text{if }\sigma_2 \sim \sigma_1; \\
1 - \frac{1}{12n}\sum_{\sigma' \sim \sigma_1}\min\left(1, \frac{\pi(\sigma')}{\pi(\sigma_1)}\right) & \text{if } \sigma_1 = \sigma_2; \\
0 & \text{otherwise.}
\end{array}
\right.
\]
$\mathcal{MC}$ is aperiodic due to the ``lazy'' movement; one can verify that $\mathcal{MC}$ is irreducible by creating, shifting, and merging two $(=_2)$'s; as the transitions are Metropolis moves, detailed balance conditions are satisfied with regard to $\pi$. 
By results from~\cite{doi:10.1137/0218077, Sinclair92improvedbounds},
such a Markov chain is rapidly mixing if there is a \textit{flow} whose congestion can be bounded by a polynomial in $n$.
\begin{lemma}\label{lem:congestion}
Assume $\mathcal{Z}(\Omega_0) > 0$. Given $f_u \in \mathcal{F}_{\le^2}$ for every vertex $u \in U$, there is a flow on $\Omega$ with congestion at most $O\left(n^2 \left(\frac{\mathcal{Z}(\Omega)}{\mathcal{Z}(\Omega_0)}\right)^2\right)$, using paths of length $O(n)$.
\end{lemma}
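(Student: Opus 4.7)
The plan is to construct a multicommodity flow $f$ on $\Omega$ in Sinclair's canonical-paths framework and bound its congestion $\rho(f)$. I first specify canonical paths between \emph{perfect} assignments $\sigma,\tau \in \Omega_0$; when an endpoint lies in $\Omega_2$ I prepend (resp.\ append) an $O(1)$-length detour that merges the exceptional $(=_2)$'s into a canonical nearby $\Omega_0$-state. For $\sigma,\tau \in \Omega_0$, consider the symmetric difference $D = \{e \in E : \sigma(e)\neq\tau(e)\}$. At each $v \in V$ both $\sigma,\tau$ satisfy $(\neq_2)$, so $v$ has degree $0$ or $2$ in $D$. At each $u \in U$ the hypothesis $f_u \in \mathcal{F}_{\le^2}$ forces $\sigma|_{E(u)}$ and $\tau|_{E(u)}$ to have the same (even) Hamming parity, so $u$ has degree $0$, $2$, or $4$ in $D$. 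Picking at every degree-$4$ vertex a canonical pairing of the four $D$-edges, $D$ decomposes edge-disjointly into simple cycles $C_1,\dots,C_k$ of $G'$; fix a canonical ordering of cycles and, within each, a starting $U$-vertex $u_i^\star$ and direction.

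The canonical path $\gamma_{\sigma,\tau}$ then flips the cycles in order. Flipping a single cycle $C$ of length $\ell$ is the length-$\ell$ sequence: one $\Omega_0\!\to\!\Omega_2$ move at $u^\star$ creating two $(=_2)$ defects at the two $V$-neighbors on $C$, then $\ell-2$ $\Omega_2\!\to\!\Omega_2$ shifts walking one defect around $C$, and finally an $\Omega_2\!\to\!\Omega_0$ merge when the defects meet. Total length is $\sum_i \ell_i = |D| = O(n)$, and each canonical path is weighted by $\pi(\sigma)\pi(\tau)$ to define $f$.

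To bound congestion through a fixed Markov-chain edge $(\eta,\eta')$ I use the standard encoding trick: associate to each $(\sigma,\tau)$ whose canonical path uses $(\eta,\eta')$ the encoding $\eta^c := \sigma\oplus\tau\oplus\eta$ (XOR on $E$), together with $O(\log n)$ bits identifying which cycle and which position along it. From $\eta,\eta',\eta^c$ and this bookkeeping one recovers $(\sigma,\tau)$, so $(\sigma,\tau)\mapsto \eta^c$ is injective on the set of paths through $(\eta,\eta')$. The symmetric-difference structure guarantees $\eta^c$ carries at most four $V$-defects, i.e., it lies in the bounded extension $\Omega^\flat$ of assignments with up to four exceptional $V$-vertices. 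A local case analysis at every $U$-vertex where $\sigma,\tau,\eta,\eta^c$ may disagree, grounded in the quantum decomposition developed in \secref{sec:properties}, produces a constant $c>0$ depending only on the finite family of constraints such that
\[
\mathcal{W}(\eta)\,\mathcal{W}(\eta^c) \;\ge\; c\,\mathcal{W}(\sigma)\,\mathcal{W}(\tau).
\]
Plugging $\pi(\eta)P(\eta,\eta') \ge \tfrac{1}{12n}\min(\pi(\eta),\pi(\eta'))$, the weight inequality, and the injection into Sinclair's congestion formula, the inner sum telescopes to $O(\mathcal{Z}(\Omega^\flat)/\mathcal{Z}(\Omega))$. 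A twofold application of the argument of \corref{cor:ratio} (once per extra defect pair that $\eta^c$ can carry beyond those of $\eta$) bounds $\mathcal{Z}(\Omega^\flat)/\mathcal{Z}(\Omega_0)$ by a polynomial times $(\mathcal{Z}(\Omega)/\mathcal{Z}(\Omega_0))^2$. With $|\gamma|=O(n)$ and the Metropolis factor of $12n$, this yields the claimed $\rho(f)=O(n^2(\mathcal{Z}(\Omega)/\mathcal{Z}(\Omega_0))^2)$.

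The hardest step will be the local weight inequality $\mathcal{W}(\eta)\mathcal{W}(\eta^c)\ge c\,\mathcal{W}(\sigma)\mathcal{W}(\tau)$ at each $U$-vertex during the defect-shifting phase, where the current state $\eta$ agrees with neither $\sigma$ nor $\tau$ on the active vertex. The mere support condition $f_u\in\mathcal{F}_{\le^2}$ is not enough; one must exploit the multiplicative structure of the quantum decomposition of $f_u$ from \secref{sec:properties} to rearrange the four local weights and keep $c$ bounded away from zero uniformly over all transitions along all canonical paths.
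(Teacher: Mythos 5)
Your plan follows the same high-level blueprint as the paper — symmetric difference decomposed via pairings at degree-4 vertices, the XOR encoding $\eta^c = \sigma \oplus \tau \oplus \eta$, paths of length $O(n)$, then a weight comparison plugged into Sinclair's congestion formula — but the crucial step fails, and it fails for a reason that a ``careful local case analysis'' cannot repair.

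The gap is in the single canonical path together with the inequality $\mathcal{W}(\eta)\mathcal{W}(\eta^c) \ge c\,\mathcal{W}(\sigma)\mathcal{W}(\tau)$. At a degree-4 vertex $u$ of the symmetric difference, $\sigma$ and $\tau$ are total reversals of one another at $u$, so $f_u(\sigma|_{E(u)})f_u(\tau|_{E(u)})$ is $a^2$ (say), and likewise $f_u(\eta|_{E(u)})f_u(\eta^c|_{E(u)})$ is one of $b^2, c^2, d^2$ determined by which canonical pairing you fixed at $u$. Even if you pick the best pairing, the decomposition $a^2 = \mathfrak{w}(\sub{1}_-) + \mathfrak{w}(\sub{2}_+) + \mathfrak{w}(\sub{3}_+)$ only guarantees that the chosen term is $\ge a^2/3$, so the per-vertex factor is at best $1/3$ (and can be $0$ if a ``bad'' pairing is fixed, e.g.\ when $b=d=0$ but $a=c>0$). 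With $\Theta(n)$ degree-4 vertices in $\sigma\oplus\tau$, the cumulative constant is $3^{-\Theta(n)}$, which destroys the polynomial bound. This is precisely why the paper does \emph{not} route a single canonical path: it distributes the flow from $\sigma_2$ to $\sigma_0$ over all $3^{|U_4|}$ annotated trail-and-circuit partitions $\varphi$, with fractional weight $\mathfrak{W}(\sigma_2,\sigma_0,\varphi)/\mathcal{Z}(\Omega)^2$ on path $p_\varphi$. This replaces the product $a^2$ at a degree-4 vertex by exactly its summand $\mathfrak{w}(\varphi(u))$, making the intermediate comparison an identity $\mathfrak{W}(\sigma_2, \sigma_2\oplus\eta, \varphi) = \mathfrak{W}(\sigma', \sigma'\oplus\eta, \varphi)$, not an inequality that bleeds a constant per vertex. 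The quantum decomposition is not a tool for proving a pointwise weight inequality — it is what tells you how to \emph{split} the flow so that no inequality is needed.

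Two smaller remarks. First, the paper routes flow directly between $\Omega_2$ and $\Omega_0$ (so $\sigma_2 \oplus \sigma_0$ has one trail plus circuits), then handles $\Omega_0$-to-$\Omega_0$ and $\Omega_2$-to-$\Omega_2$ by random two-hop routing through the opposite set; your $\Omega_0$-to-$\Omega_0$ routing with $O(1)$-step detours for $\Omega_2$ endpoints can be made to work structurally but is not what the paper does. Second, your bound on $\mathcal{Z}(\Omega^\flat)/\mathcal{Z}(\Omega_0)$ by ``twofold application of Cor.~\ref{cor:ratio}'' is too vague; the encoding $\eta^c$ can land in $\Omega_4$, and the paper controls $\mathcal{Z}(\Omega_4)$ through the standard log-concavity-type inequality $\mathcal{Z}(\Omega_4)/\mathcal{Z}(\Omega_2) \le \mathcal{Z}(\Omega_2)/\mathcal{Z}(\Omega_0)$, not through \corref{cor:ratio} (which is a statement about $\mathcal{Z}(\Omega_2)/\mathcal{Z}(\Omega_0)$ under additional hypotheses that the lemma does not assume).
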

\begin{proof}
The idea is to design a flow $\mathfrak{F}: \mathcal{P} \rightarrow \mathbb{R}^+$ from $\Omega_2$ to $\Omega_0$ which satisfies
\[\sum_{p \in \mathcal{P}_{\sigma_2\sigma_0}} \mathfrak{F}(p) = \pi(\sigma_2)\pi(\sigma_0)\text{, \; for all } \sigma_2 \in \Omega_2, \sigma_0 \in \Omega_0,\]
where $\mathcal{P}_{\sigma_2\sigma_0}$ is defined to be a set of simple directed paths from $\sigma_2$ to $\sigma_0$ in $\mathcal{MC}$
and $\mathcal{P} = \bigcup_{\sigma_2 \in \Omega_2, \sigma_0 \in \Omega_0}\mathcal{P}_{\sigma_2\sigma_0}$. 
Once the congestion of $\mathfrak{F}$ from $\Omega_2$ to $\Omega_0$ is polynomially bounded, so is the flow from $\Omega_0$ to $\Omega_2$ by symmetric construction. Moreover, there is a flow from $\Omega_2$ to $\Omega_2$ (or from $\Omega_0$ to $\Omega_0$) whose congestion can also be polynomially bounded by randomly picking an intermediate state in $\Omega_0$ (or $\Omega_2$, respectively). Thus we have a flow on $\Omega$ with polynomially bounded congestion. This technique has been used in \cite{Jerrum:2004:PAA:1008731.1008738, DBLP:journals/corr/abs-1301-2880}.
In the following we show that the congestion of $\mathfrak{F}$
from  $\Omega_2$ to $\Omega_0$ is
 bounded by $O(n^2)\frac{\mathcal{Z}(\Omega_2)}{\mathcal{Z}(\Omega_0)}$.
Then the bound in the lemma for a flow on $\Omega$  follows. 

To describe the flow $\mathfrak{F}$, we first specify the
sets of paths that are going to take the flow.
In line with the definition of $\Omega_0$ and $\Omega_2$, 
we define $\Omega_4$ to be the set of assignments where there are exactly four violations of $(\neq_2)$ in $V$. Let $\Omega' = \Omega_0 \cup \Omega_2 \cup \Omega_4$.
For $\sigma, \sigma' \in \Omega'$, let  $\sigma \oplus \sigma'$
denote the  \textit{symmetric difference} (or bitwise  XOR),
where we view $\sigma$ and $\sigma'$ as two bit strings in $\{0, 1\}^{|E|}$.
This is a 0-1 assignment to the edge set of the edge-vertex incidence graph
$G' = (V, U, E)$ of $G$. We also treat $\sigma \oplus \sigma'$ as an edge subset
of $E$ (corresponding to bit 
positions having bit 1, where  $\sigma$ and $\sigma'$
assign opposite values), and this defines an edge-induced
subgraph of $G'$, which we will just call it $\sigma \oplus \sigma'$.
 Since at every $u \in U$ of degree 4, the 
``two-0-two-1/four-0/four-1'' rule is satisfied by both $\sigma$ and $\sigma'$,
this edge-induced subgraph has even degree (0, 2, or 4) at every $u \in U$.

Let us introduce the set of {\sl atcp}'s (annotated trail \& circuit partitions) for the symmetric difference $\sigma \oplus \sigma'$.
It is similar to the notions of {\sl acp} for 4-regular graphs and {\sl atcp} for 4-ary constructions defined in \secref{sec:properties}.
Let us assume $\sigma \in  \Omega_0$ and $\sigma' \in  \Omega_2$, and the set of {\sl atcp}'s for $\sigma \oplus \sigma'$ in general cases when $\sigma, \sigma' \in \Omega'$ can be similarly defined.
If $\sigma \in  \Omega_0$ and $\sigma' \in  \Omega_2$,
on the edge where $\sigma'$ is defective (but $\sigma \in  \Omega_0$ is
not), $\sigma \oplus \sigma'$ has a degree 1 vertex.
First we assign a pairing (that groups four incident edges
into two unordered pairs) at every vertex of degree 4
in $\sigma \oplus \sigma'$. This partitions the edges of $\sigma \oplus \sigma'$
into a set of  edge-disjoint circuits and 
exactly one
 \emph{trail} which ends in the two vertices in $V$ of degree 1.
Then we affix a $\pm$ at every vertex $u \in U$ of degree 2 or degree 4 
in $\sigma \oplus \sigma'$
as follows: If $u \in U$ has degree 4 in $\sigma \oplus \sigma'$
then $\sigma$ and $\sigma'$ represent total reversal orientations of each other 
at $u$, and thus the pairing at $u$ has the same sign
according to \tabref{tab:correspondence} for $\sigma$ and $\sigma'$. We affix
this sign at $u$.
If $u \in U$ has degree 2 in $\sigma \oplus \sigma'$,
then $\sigma$ and $\sigma'$
disagree on exactly two edges. On these two edges,
if one assigns 01 the other assigns 10
(and vice versa), and if one assigns 00 the other assigns 11 (and vice versa).
We  affix $+$ at $u$ in the first case, and $-$ in the second case.
One can check that for any {\sl atcp} $\varphi$ of $\sigma \oplus \sigma'$, one encounters an even number of $-$ along any circuit of $\varphi$.

Denote by $U_4 \subseteq U$ the degree-4 vertices in $\sigma \oplus \sigma'$.
Then there are exactly $3^{|U_4|}$ {\sl atcp}'s for $\sigma \oplus \sigma'$.
Note that an {\sl atcp} of $\sigma \oplus \sigma'$ is uniquely determined by a family of signed pairings on $U_4$.
This is a 1-1 correspondence and we will identify the two sets.
For any signed pairing in $\{\sub{1}, \sub{2}, \sub{3}\} \times \{+, -\}$ on a vertex
$u$ with constraint matrix $M(f_u) = \left[\begin{smallmatrix} d & & & a \\ & b & c & \\ & c & b & \\ a & & & d \end{smallmatrix}\right]$, define
the weight function $\mathfrak{w}$ for signed pairings as follows,
$\left\{\begin{smallmatrix}
a^2 = \mathfrak{w}(\subinmatrix{1}_-) + \mathfrak{w}(\subinmatrix{2}_+) + \mathfrak{w}(\subinmatrix{3}_+) \\
b^2 = \mathfrak{w}(\subinmatrix{1}_+) + \mathfrak{w}(\subinmatrix{2}_-) + \mathfrak{w}(\subinmatrix{3}_+) \\
c^2 = \mathfrak{w}(\subinmatrix{1}_+) + \mathfrak{w}(\subinmatrix{2}_+) + \mathfrak{w}(\subinmatrix{3}_-) \\
d^2 = \mathfrak{w}(\subinmatrix{1}_-) + \mathfrak{w}(\subinmatrix{2}_-) + \mathfrak{w}(\subinmatrix{3}_-) \\
\end{smallmatrix}\right.$.
Note that $\mathfrak{w}$ has a \textit{nonnegative} solution if and only if $f_u \in \mathcal{F}_{\le^2}$ by a proof similar to that of \lemref{lem:freedom_1}.
Let $\Phi_{\sigma \oplus \sigma'}$ be the set of {\sl atcp}'s for $\sigma \oplus \sigma'$.
For $\varphi \in \Phi_{\sigma \oplus \sigma'}$, define
\[\mathfrak{W}(\sigma, \sigma', \varphi) := \left(\prod_{u \in U\setminus U_4}f_u\left(\sigma |_{E(u)}\right)f_u\left(\sigma' |_{E(u)}\right)\right)\left(\prod_{u \in U_4}\mathfrak{w}(\varphi(u))\right),\]
where $\varphi(u)$ is the signed pairing given by $\varphi$ at $u$.
Then for all distinct $\sigma, \sigma' \in \Omega'$, we have
\begin{align*}
\sum_{\varphi \in \Phi_{\sigma \oplus \sigma'}}\mathfrak{W}(\sigma, \sigma', \varphi)
& = \sum_{\varphi \in \Phi_{\sigma \oplus \sigma'}} \left(\prod_{u \in U\setminus U_4}f_u\left(\sigma |_{E(u)}\right)f_u\left(\sigma' |_{E(u)}\right)\right)\left(\prod_{u \in U_4}\mathfrak{w}(\varphi(u))\right) \\
& = \left(\prod_{u \in U\setminus U_4}f_u\left(\sigma |_{E(u)}\right)f_u\left(\sigma' |_{E(u)}\right)\right) \left(\sum_{\varphi \in \Phi_{\sigma \oplus \sigma'}} \prod_{u \in U_4}\mathfrak{w}(\varphi(u)) \right) \\
& = \left(\prod_{u \in U\setminus U_4}f_u\left(\sigma |_{E(u)}\right)f_u\left(\sigma' |_{E(u)}\right)\right) \left(\prod_{u \in U_4}f_u\left(\sigma |_{E(u)}\right)f_u\left(\sigma' |_{E(u)}\right)\right) \\
& = \prod_{u \in U} f_u\left(\sigma |_{E(u)}\right)f_u\left(\sigma' |_{E(u)}\right) \\
& = \mathcal{W}(\sigma)\mathcal{W}(\sigma').
\end{align*}
The equality from line 2 to line 3 is due to
the following: when the degree (in the
induced subgraph  $\sigma \oplus \sigma'$)  of a vertex $u \in U$ is 4, $\sigma$ and $\sigma'$ must take the same value at $u$,
since one represents a total reversal of all arrows of another; 
thus $f_u\left(\sigma |_{E(u)}\right)f_u\left(\sigma' |_{E(u)}\right)$ is in $\{a^2, b^2, c^2, d^2\}$. Then
\[\prod_{u \in U_4}
f_u\left(\sigma |_{E(u)}\right)f_u\left(\sigma' |_{E(u)}\right)
=  \sum_{\varphi \in \Phi_{\sigma \oplus \sigma'}} \prod_{u \in U_4}\mathfrak{w}(\varphi(u))\]
 is obtained by using the sum expressions for $a^2$, $b^2$, $c^2$, and $d^2$ in terms of 
$\mathfrak{w}(\sub{1}_+)$, $\mathfrak{w}(\sub{1}_-)$,
$\mathfrak{w}(\sub{2}_+)$, $\mathfrak{w}(\sub{2}_-)$,
$\mathfrak{w}(\sub{3}_+)$, and $\mathfrak{w}(\sub{3}_-)$,
 and then expressing the product-of-sums as a sum-of-products.

Now we are ready to specify the ``paths'' which take nonzero flow from $\sigma_2 \in \Omega_2$ to $\sigma_0 \in \Omega_0$.
In order to transit from $\sigma_2$ to $\sigma_0$, paths in $\mathcal{P}_{\sigma_2\sigma_0}$ go through states in $\Omega$ that gradually decrease the number of conflicting assignments along trails and circuits in $\sigma_2 \oplus \sigma_0$.
We first specify a total order on $E$, the set of edges of $G'$.
This induces a total order on circuits by lexicographic order. 
In the induced subgraph $\sigma_2 \oplus \sigma_0$, exactly
 two vertices in $V$ have degree 1 (called \textit{endpoints}) and
all other vertices have degree 2 or degree 4.
The set of paths in $\mathcal{P}_{\sigma_2\sigma_0}$ are  designed to be in 1-to-1 correspondence with elements in $\Phi_{\sigma_2 \oplus \sigma_0}$.
Given any family of signed pairings $\varphi \in \Phi_{\sigma_2 \oplus \sigma_0}$,
we have a unique decomposition of the induced 
subgraph $\sigma_2 \oplus \sigma_0$ as an edge disjoint union
of one trail 
$[e_1] (v_1, e'_1, u_1, e_2, v_2, e'_2, u_2, \ldots, e_k, v_k)  [e'_k]$
(where $e_1$ and $e'_k$ are not part of the trail),
 and zero or more edge disjoint circuits, which are
ordered lexicographically.
Here $v_i \in V$ and $u_i \in U$,
and the two exceptional vertices are $v_1$ and $v_k$ where
$\sigma_2$ satisfies $(=_2)$.
The unique path $p_\varphi$ first reverses all arrows along the trail, starting from the smaller of $e'_1$ and $e_k$.
If we assume, without loss of generality, $e'_1$ is the smaller one,
then $p_\varphi$ ``pushes''  the $(=_2)$
from $v_1$, to $v_2$, then to $v_3, \ldots, v_{k-1}$, and then ``merge'' 
at $v_{k}$, arriving at a configuration in $\Omega_0$.
Next  $p_\varphi$  reverses all arrows on each
circuit in lexicographic order, and within each circuit $C$
it starts at the least edge $e$ (according to the edge order)
and reverses all arrows on $C$ in a cyclic order starting in the direction 
indicated by $\sigma_2$ on $e$.
(Technically it flips a pair of incident edges to vertices in $U$
in each step.)
Such paths  $p_\varphi$  are well-defined and
are valid paths in $\mathcal{MC}$ since along any path every state is in $\Omega = \Omega_0 \cup \Omega_2$ and every move is a valid transition defined in $\mathcal{MC}$.
With regard to the flow distribution, the  flow value
put on $p_\varphi$ is $\frac{\mathfrak{W}(\sigma_2, \sigma_0, \varphi)}{\left(\mathcal{Z}(\Omega)\right)^2}$, making
the following hold for all $\sigma_2 \in \Omega_2, \sigma_0 \in \Omega_0$:
\begin{align*}
\sum_{p_\varphi \in \mathcal{P}_{\sigma_2\sigma_0}} \mathfrak{F}(p_\varphi) & = \sum_{\varphi \in \Phi_{\sigma_2 \oplus \sigma_0}}\frac{\mathfrak{W}(\sigma_2, \sigma_0, \varphi)}{\left(\mathcal{Z}(\Omega)\right)^2} \\
& = \frac{\mathcal{W}(\sigma_2)\mathcal{W}(\sigma_0)}{\left(\mathcal{Z}(\Omega)\right)^2} \\
& = \pi(\sigma_2)\pi(\sigma_0).
\end{align*}

Note that in each path, no edge is flipped more than once, so the length is $O(n)$.
For any transition $(\sigma', \sigma'')$ where $\sigma' \neq \sigma''$, we have $P(\sigma', \sigma'') = \frac{1}{12n}\min\left(1, \frac{\pi(\sigma'')}{\pi(\sigma')}\right) = \Omega\left(\frac{1}{n}\right)$, as $\frac{\pi(\sigma'')}{\pi(\sigma')}$ is a constant. (This is a constant because we have restricted
the constraint function $f_u$ to be from a fixed finite set $\mathcal{F}$.)
 Let $H_{\sigma'} = \{\sigma_2 \oplus \sigma_0 \ |\ \sigma_2 \in \Omega_2, \sigma_0 \in \Omega_0, \exists \varphi \in \Phi_{\sigma_2 \oplus \sigma_0} \text{ s.t. } \sigma' \in p_\varphi\}$.
The congestion of $\mathfrak{F}$ is
\begin{align*}
& \max_{\text{transition }(\sigma', \sigma'')}\frac{1}{\pi(\sigma')P(\sigma', \sigma'')}
\sum_{\substack{\sigma_2 \in \Omega_2 \\ \sigma_0 \in \Omega_0}}
\sum_{\substack{p_\varphi \in \mathcal{P}_{\sigma_2 \sigma_0} \\ p_\varphi \ni (\sigma',\sigma'')}}\frac{\mathfrak{W}(\sigma_2, \sigma_0, \varphi)}{\left(\mathcal{Z}(\Omega)\right)^2} \\
\le & \max_{\sigma' \in \Omega}\frac{O(n)}{\mathcal{W}(\sigma') \mathcal{Z}(\Omega)}
\sum_{\substack{\sigma_2 \in \Omega_2 \\ \sigma_0 \in \Omega_0}}
\sum_{\substack{\varphi \in \Phi_{\sigma_2 \oplus \sigma_0} \\ p_\varphi \ni \sigma'}}
\mathfrak{W}(\sigma_2, \sigma_0, \varphi) \\
\le & \max_{\sigma' \in \Omega}\frac{O(n)}{\mathcal{W}(\sigma') \mathcal{Z}(\Omega)}
\sum_{\sigma_2 \in \Omega_2}\sum_{\eta \in H_{\sigma'}}
\sum_{\varphi \in \Phi_\eta}
\mathfrak{W}(\sigma_2, \sigma_2 \oplus \eta, \varphi)\\
= & \max_{\sigma' \in \Omega}\frac{O(n)}{\mathcal{W}(\sigma') \mathcal{Z}(\Omega)}
\sum_{\eta \in H_{\sigma'}}
\sum_{\varphi \in \Phi_\eta}
\sum_{\sigma_2 \in \widetilde{\Omega}_2}
\mathfrak{W}(\sigma_2, \sigma_2 \oplus \eta, \varphi).
\end{align*}
On the last line above we exchange the order of summations
where $\widetilde{\Omega}_2$ is the subset of  $\Omega_2$-states 
of the form $\sigma_2 = \eta \oplus \sigma_0$, for some
$\sigma_0 \in \Omega_0$ such that 
$p_{\varphi}$ (which passes through $\sigma'$) 
goes from $\sigma_2$ to $\sigma_0$.
These are $\Omega_2$-states
 ``compatible'' with the symmetric difference $\eta$ and its {\sl atcp} $\varphi$.
The number of states in $\widetilde{\Omega}_2$ is bounded by the length of the longest path $O(n)$ because $\sigma'$ is an intermediate state on a path.
Fix any $\sigma' \in \Omega$.
For any $\sigma_2 \in \Omega_2$, and $\eta \in H_{\sigma'}$ 
consisting of exactly one connected component with two endpoints of degree 1
and all other vertices having even degree
(and zero or more connected components of even degree vertices), observe that $\sigma' \oplus \eta \in \Omega'$. Indeed, if $\sigma' \in \Omega_0$ then $\sigma' \oplus \eta \in \Omega_2$; if $\sigma' \in \Omega_2$ then 
depending on whether $\sigma'$ 
\vspace{-2mm}
\setlist[enumerate]{itemsep=-1mm}
\begin{enumerate}[(1)]
\item is $\sigma_2$, or
\item appears in the process of reversing arrows on the trail with two endpoints, or
\item appears after reversing arrows on the trail with endpoints,
\end{enumerate}
\vspace{-2mm}
$\sigma' \oplus \eta$ lies in $\Omega_0$, $\Omega_2$, or $\Omega_4$, respectively.
For the edges not in $\eta$, $\sigma'$ agrees with $\sigma_2$ and $\sigma_2 \oplus \eta$ as the path $p_\varphi$ never ``touches'' them, and so does $\sigma' \oplus \eta$.
Recall that
\[\mathfrak{W}(\sigma_2, \sigma_2 \oplus \eta, \varphi) = \left(\prod_{u \in U\setminus U_4}f_u\left(\sigma_2 |_{E(u)}\right)f_u\left( (\sigma_2 \oplus \eta) |_{E(u)}\right)\right)\left(\prod_{u \in U_4}\mathfrak{w}(\varphi(u))\right).\] 
For every vertex $u \in U$ that is not in $\eta$, $f_u$ takes the same value in all $\sigma_2$, $\sigma_2 \oplus \eta$, $\sigma'$, and $\sigma' \oplus \eta$.
For every vertex $u \in U$ that is degree-2 in $\eta$, assuming $M(f_u) = \left[\begin{smallmatrix} d & & & a \\ & b & c & \\ & c & b & \\ a & & & d\end{smallmatrix}\right]$,
$f_u\left(\sigma_2 |_{E(u)}\right)$ and $f_u\left( (\sigma_2 \oplus \eta) |_{E(u)}\right)$ take two different elements in $\{a, b, c, d\}$.  Meanwhile, $f_u\left(\sigma' |_{E(u)}\right)$ and $f_u\left(\sigma' \oplus \eta |_{E(u)}\right)$ also take these two elements (possibly in the opposite order).
For example, at the vertex $u$ shown in \figref{fig:xor_degree2},
$f_u\left(\sigma_2 |_{E(u)}\right) = a$ and $f_u\left(\sigma_2 \oplus \eta |_{E(u)}\right) = c$.
The two solid edges are in $\eta$ and assignments on the two dotted edges are 
shared by $\sigma_2$ and $\sigma_2 \oplus \eta$, as well as
$\sigma'$ and $\sigma' \oplus \eta$.
On the path $p_{\varphi}$ from $\sigma_2$ to $\sigma_2 \oplus \eta$ decided by $\varphi$: if $\sigma'$ appears before reversing the two solid edges, then $\sigma'$ agrees with $\sigma_2$ on them ($f_u\left(\sigma' |_{E(u)}\right) = a$) and $\sigma' \oplus \eta$ agrees with $\sigma_2 \oplus \eta$ on them ($f_u\left(\sigma' \oplus \eta |_{E(u)}\right) = c$); if $\sigma'$ appears after reversing the two solid edges, then $\sigma'$ agrees with $\sigma_2 \oplus \eta$ on them ($f_u\left(\sigma' |_{E(u)}\right) = c$) and $\sigma' \oplus \eta$ agrees with $\sigma_2$ on them ($f_u\left(\sigma' \oplus \eta |_{E(u)}\right) = a$).
For every vertex $u \in U$ that is degree-4 in $\eta$, $\mathfrak{w}(\varphi(u))$ takes the same value in $\mathfrak{W}(\sigma_2, \sigma_2 \oplus \eta, \varphi)$ and $\mathfrak{W}(\sigma', \sigma' \oplus \eta, \varphi)$ as the weight only 
depends on $\varphi(u)$, the signed pairing at $u$.

\captionsetup[subfigure]{labelformat=parens}
\begin{figure}[h!]
\centering
\begin{subfigure}[b]{0.15\linewidth}
\centering\includegraphics[width=\linewidth]{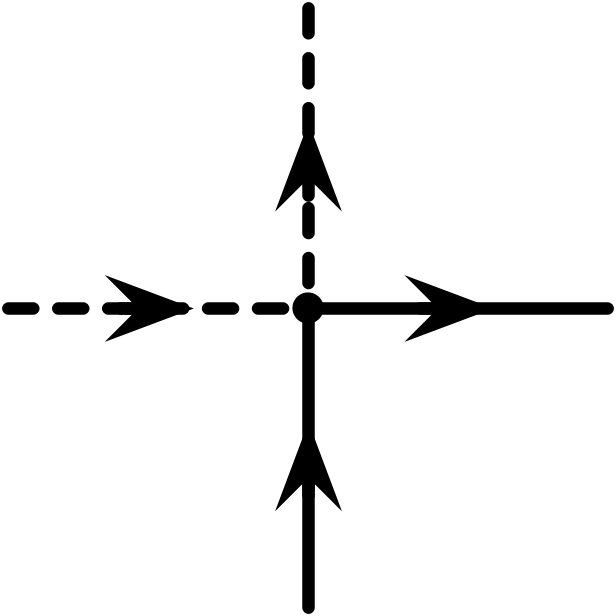}\caption{$\sigma_2$.}\label{fig:xor_degree2_before}
\end{subfigure}
\hspace{0.2\linewidth}
\begin{subfigure}[b]{0.15\linewidth}
\centering\includegraphics[width=\linewidth]{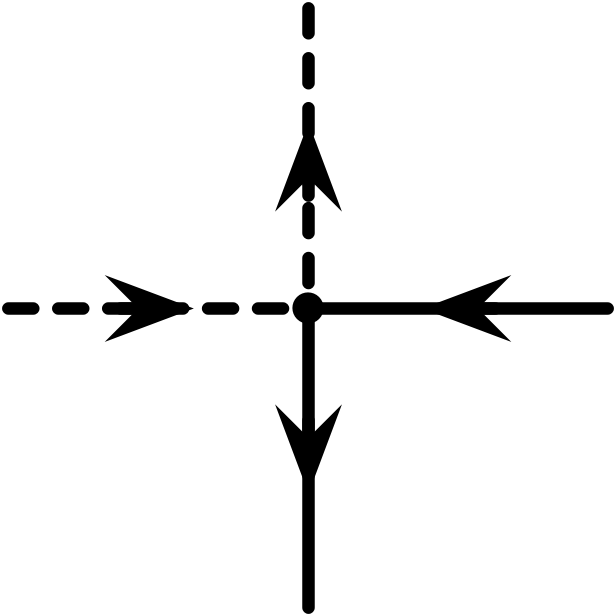}\caption{$\sigma_2 \oplus \eta$.}\label{fig:xor_degree2_after}
\end{subfigure}
\caption{}\label{fig:xor_degree2}
\end{figure}

By the above argument, we established that $\mathfrak{W}(\sigma_2, \sigma_2 \oplus \eta, \varphi) = \mathfrak{W}(\sigma', \sigma' \oplus \eta, \varphi)$.
Therefore, the congestion of $\mathfrak{F}$ can be bounded by
\begin{align*}
& \max_{\sigma' \in \Omega}\frac{O(n)}{\mathcal{W}(\sigma') \mathcal{Z}(\Omega)}
\sum_{\eta \in H_{\sigma'}}
\sum_{\varphi \in \Phi_\eta}
\sum_{\sigma_2 \in \widetilde{\Omega}_2}
\mathfrak{W}(\sigma', \sigma' \oplus \eta, \varphi) \\
\le & \max_{\sigma' \in \Omega}\frac{O(n^2)}{\mathcal{W}(\sigma') \mathcal{Z}(\Omega)}
\sum_{\eta \in H_{\sigma'}}
\sum_{\varphi \in \Phi_\eta}
\mathfrak{W}(\sigma', \sigma' \oplus \eta, \varphi) \\
\le & \max_{\sigma' \in \Omega}\frac{O(n^2)}{\mathcal{W}(\sigma') \mathcal{Z}(\Omega)}
\sum_{\eta \in H_{\sigma'}}
\mathcal{W}(\sigma')\mathcal{W}(\sigma' \oplus \eta) \\
= & \max_{\sigma' \in \Omega}\frac{O(n^2)}{\mathcal{Z}(\Omega)}
\sum_{\eta \in H_{\sigma'}}\mathcal{W}(\sigma' \oplus \eta) \\
\le & \frac{O(n^2)}{\mathcal{Z}(\Omega)}
\sum_{\sigma \in \Omega'}\mathcal{W}(\sigma) \\
= & O(n^2)\frac{\mathcal{Z}(\Omega')}{\mathcal{Z}(\Omega)}.
\end{align*}
By a standard argument as in \cite{doi:10.1137/0218077, Mihail1996, DBLP:journals/corr/abs-1301-2880}, $\frac{\mathcal{Z}(\Omega_4)}{\mathcal{Z}(\Omega_2)} \le \frac{\mathcal{Z}(\Omega_2)}{\mathcal{Z}(\Omega_0)}$. Therefore, the congestion is bounded by $O(n^2)\frac{\mathcal{Z}(\Omega_2)}{\mathcal{Z}(\Omega_0)}$.
\end{proof}
\begin{remark}
\lemref{lem:congestion} can be alternatively derived using the notion of ``windability''~\cite{DBLP:journals/corr/abs-1301-2880}.
\end{remark}

\section{Hardness}\label{sec:hardness}
\documentclass[paper]{subfiles}

\begin{theorem}
	If $(a, b, c, d) \in \mathcal{F}_>$, then $Z(a,b,c,d)$ does not have an FPRAS unless RP=NP.
\end{theorem}

\begin{remark}
For any $(a, b, c, d) \in \mathcal{F}_>$, there are at least two nonzero numbers among $a$, $b$, $c$, and $d$.
The case $d = 0$ and $a, b, c >0$ was proved in ~\cite{DBLP:journals/corr/abs-1712-05880}.
The case $d = 0$ and one of $a, b, c$ is zero can be proved by a reduction from computing the partition function of the anti-ferromagnetic Ising model on 3-regular graphs;
 we postpone this proof to an expanded  version of this paper.
In this section, we prove the theorem when $d > 0$
and at least one of $a, b, c$ is positive.
\end{remark}
\begin{remark}
The construction in our proof for the cases when 
$a > b+c+d$, or $b > a+c+d$, or $c > a+b+d$,
 is in fact a bipartite graph.
This means that approximating $Z(a,b,c,d)$ in those cases is NP-hard even for bipartite graphs.
\end{remark}
\begin{proof}
Let 3-MAX CUT denote the NP-hard problem of computing the cardinality of a maximum cut in a 3-regular graph~\cite{DBLP:conf/stoc/Yannakakis78}. We reduce 3-MAX CUT to approximating $Z(a,b,c,d)$.
We first prove the case when $a > b + c + d$, then adapt our proof to the case when $d > a + b + c$.
Since the proof of NP-hardness for $Z(a,b,c,d)$ is for general (i.e., not necessarily planar) graphs, we can permute the parameters $a, b, c$. Thus the proof for $b > a + c + d$ and $c > a + b + d$ is symmetric to the first case.

Before proving the theorem we briefly state our idea.
Denote an instance of 3-MAX CUT by $G = (V, E)$.
Given $V_+ \subseteq V$ and $V_- = V \setminus V_+$, an edge $\{u, v\} \in E$ is in the cut between $V_+$ and $V_-$ if and only if $(u \in V_+, v \in V_-)$ or $(u \in V_-, v \in V_+)$.
The maximum cut problem favors the partition of $V$ into $V_+$ and $V_-$ so that there are as many edges in $V_+ \times V_-$ as possible.
We want to encode this local preference on each edge by a local fragment of a graph $G'$ in terms of configurations in the eight-vertex model.

\captionsetup[subfigure]{labelformat=parens}
\begin{figure}[h!]
\centering
\begin{subfigure}[b]{0.32\linewidth}
\centering\includegraphics[width=0.7\linewidth]{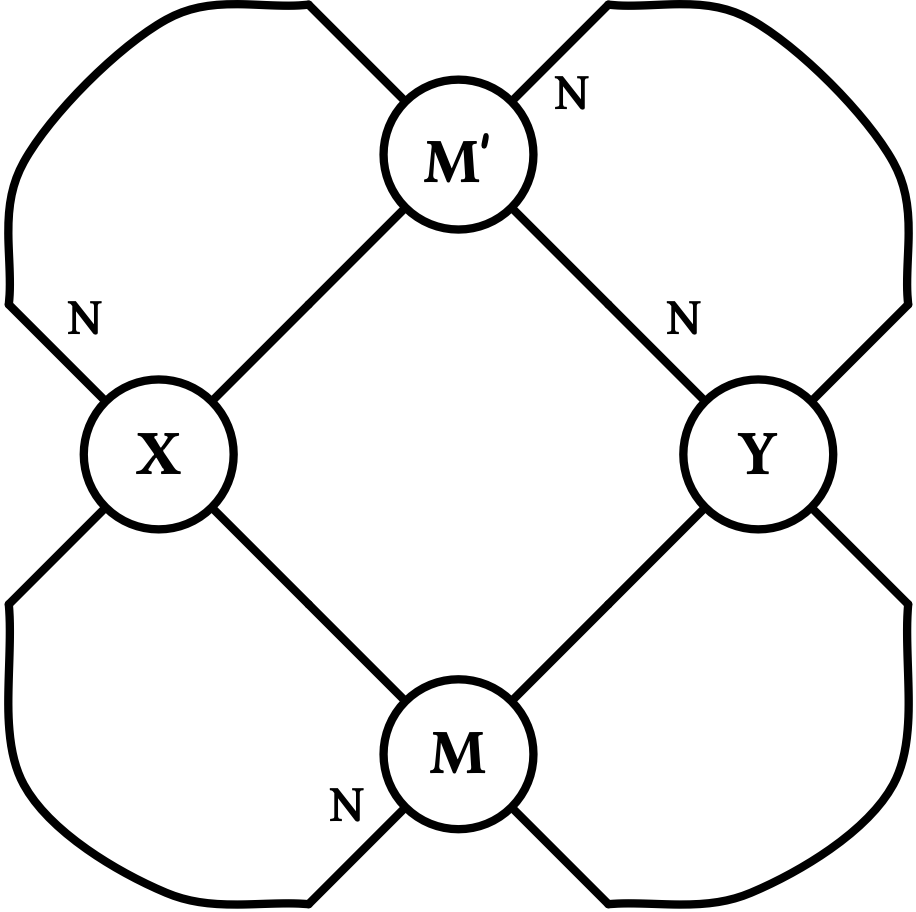}\caption{}\label{fig:hardness_single_a}
\end{subfigure}
\begin{subfigure}[b]{0.32\linewidth}
\centering\includegraphics[width=0.77\linewidth]{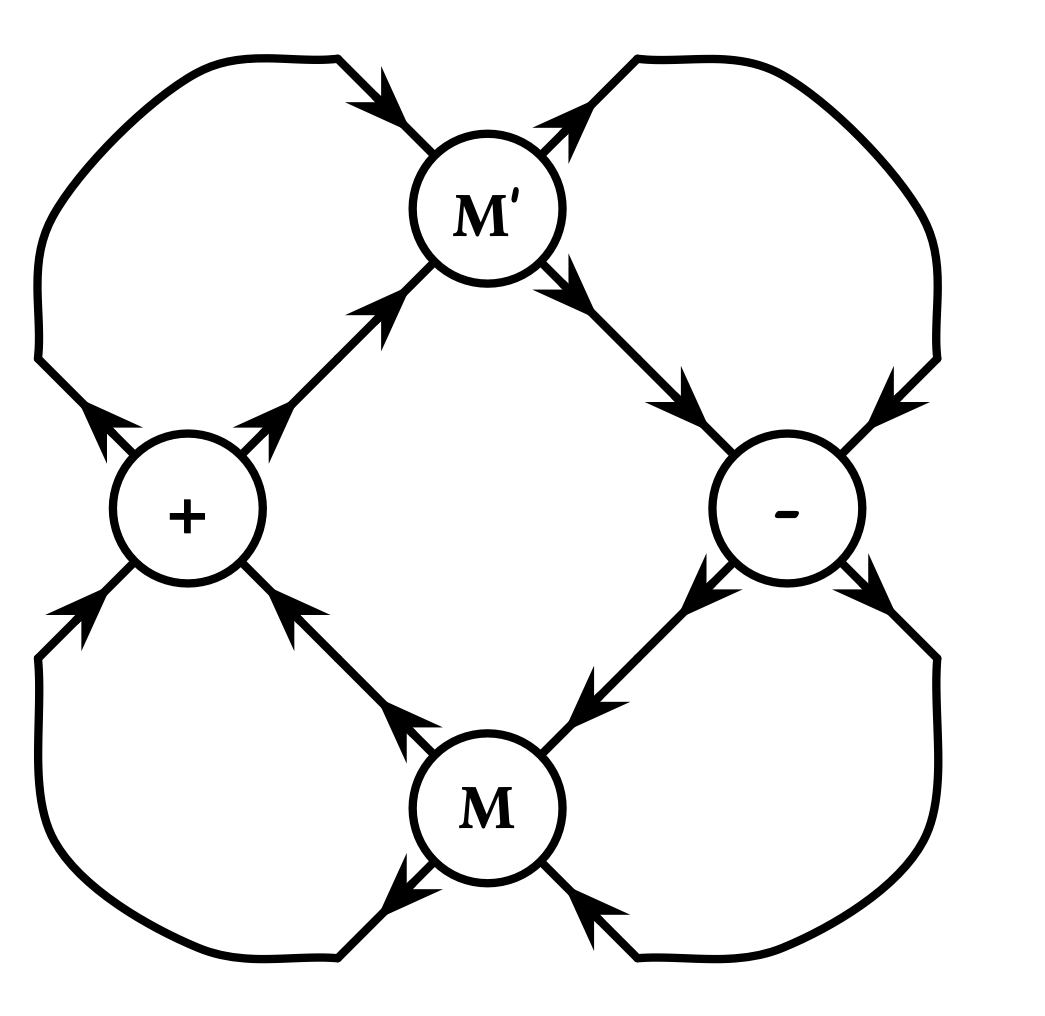}\caption{}\label{fig:hardness_single_a_+-}
\end{subfigure}
\begin{subfigure}[b]{0.32\linewidth}
\centering\includegraphics[width=0.77\linewidth]{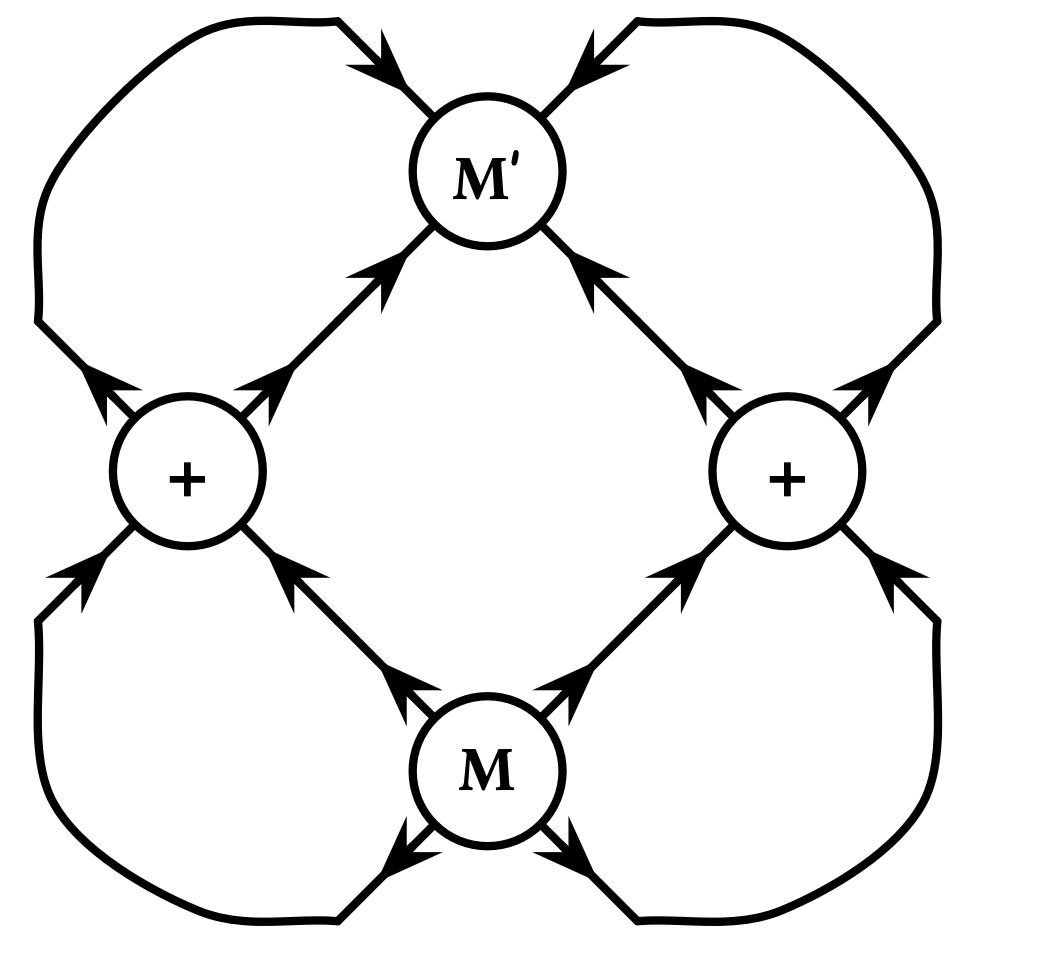}\caption{}\label{fig:hardness_single_a_++}
\end{subfigure}
\caption{A four-way connection implementing a single edge in 3-MAX CUT.}
\end{figure}

Let us start with the case when $a > b+c+d$. Recall that we require $d > 0$.
First we show how to implement a toy example{\textemdash}a single edge $\{u, v\}${\textemdash}by a construction in the eight-vertex model.
Suppose there are four vertices $X, Y, M, M'$ connected as in \figref{fig:hardness_single_a} shows.
The order of the 4 edges at each vertex is aligned to Figure~\ref{fig:orientations} by a rotation so that the edge marked by ``N'' corresponds to the north edge in \figref{fig:orientations}.
Let us impose the virtual constraint on $X$ and $Y$ so that the parameter setting on each of them is $\check{a} > \check{b} = \check{c} = \check{d} = 0$. (We will show how to implement this virtual constraint in the sense of approximation later.) In other words, the four edges incident on $X$ can only be in two possible configurations, \figref{fig:orientations_1} or \figref{fig:orientations_2}.
The same is true for $Y$.
We say $X$ (and similarly $Y$) is in state $+$ if its local configuration is in \figref{fig:orientations_1} (with the ``top'' two edges going out and the ``bottom'' two edges coming in); it is in state $-$ if its local configuration is in \figref{fig:orientations_2} (with the ``top'' two edges coming in and the ``bottom'' two edges going out).
Hence there are a total of 4 valid configurations given the virtual constraints.
When $(X, Y)$ is in state $(+, -)$ (or $(-,+)$), $M$ and $M'$ have 
local configurations both being  \figref{fig:orientations_1} (or 
both being  \figref{fig:orientations_2}), 
with weight $a$ (\figref{fig:hardness_single_a_+-});
when $(X, Y)$ is in state $(+,+)$ (or $(-,-)$),
$M$ and $M'$ have 
local configurations both being 
 \figref{fig:orientations_7} or \figref{fig:orientations_8},
 with weight $d < a$ (\figref{fig:hardness_single_a_++}).
This models how two adjacent vertices interact in 3-MAX CUT.
We will call the connection pattern 
described in \figref{fig:hardness_single_a}
between the set of 4 external edges incident to $X$ and the set of 
4 external edges incident to $Y$ (each with two on ``top'' and two
on ``bottom'')
a \emph{four-way connection}.

\captionsetup[subfigure]{labelformat=parens}
\begin{figure}[h!]
\centering
\begin{subfigure}[b]{0.32\linewidth}
\centering\includegraphics[width=0.8\linewidth]{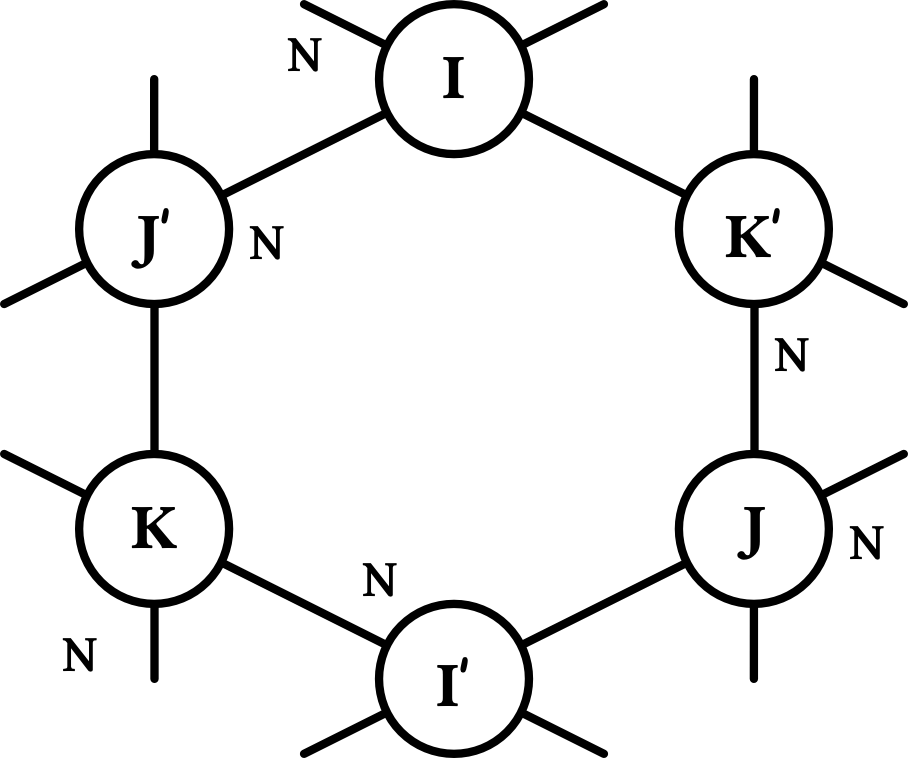}\caption{}\label{fig:hardness_multi}
\end{subfigure}
\begin{subfigure}[b]{0.32\linewidth}
\centering\includegraphics[width=0.95\linewidth]{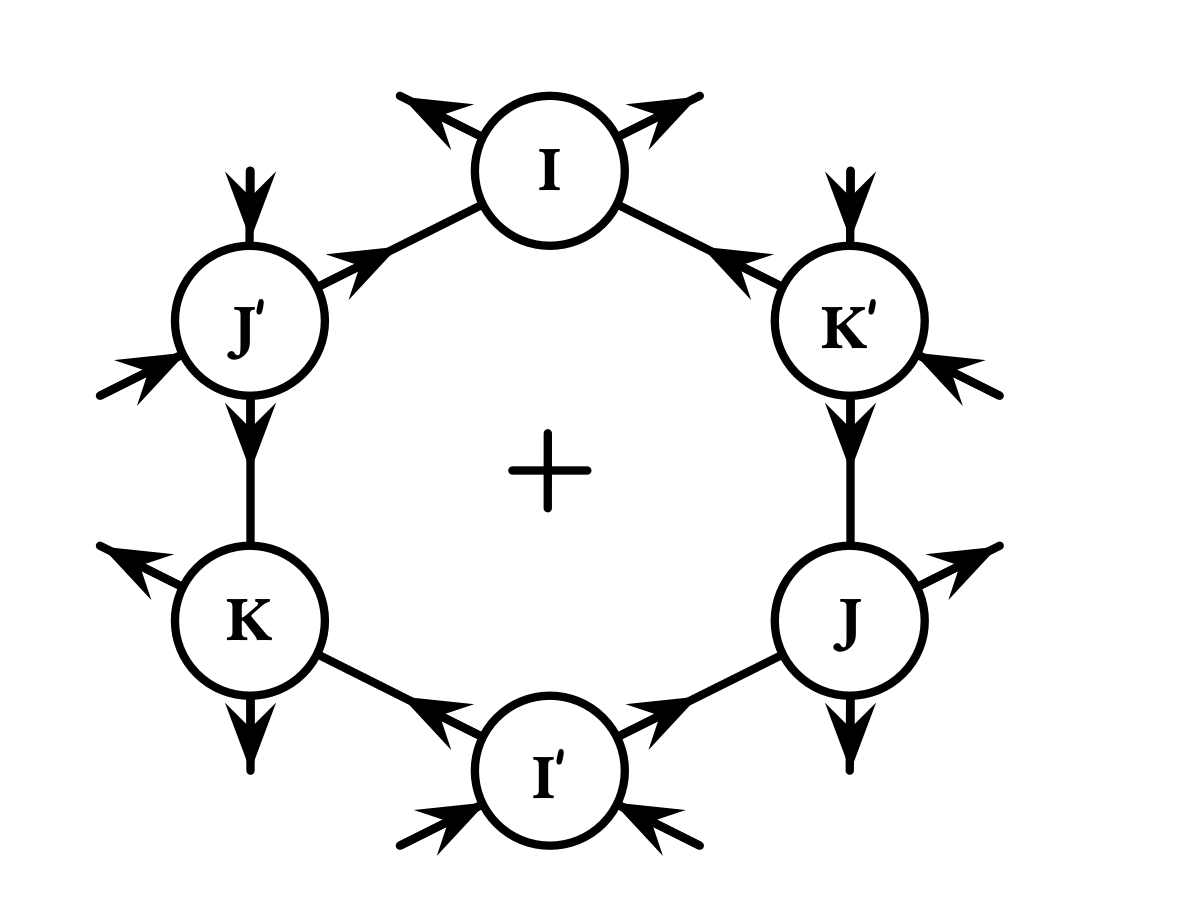}\caption{}\label{fig:hardness_multi_+}
\end{subfigure}
\begin{subfigure}[b]{0.32\linewidth}
\centering\includegraphics[width=0.95\linewidth]{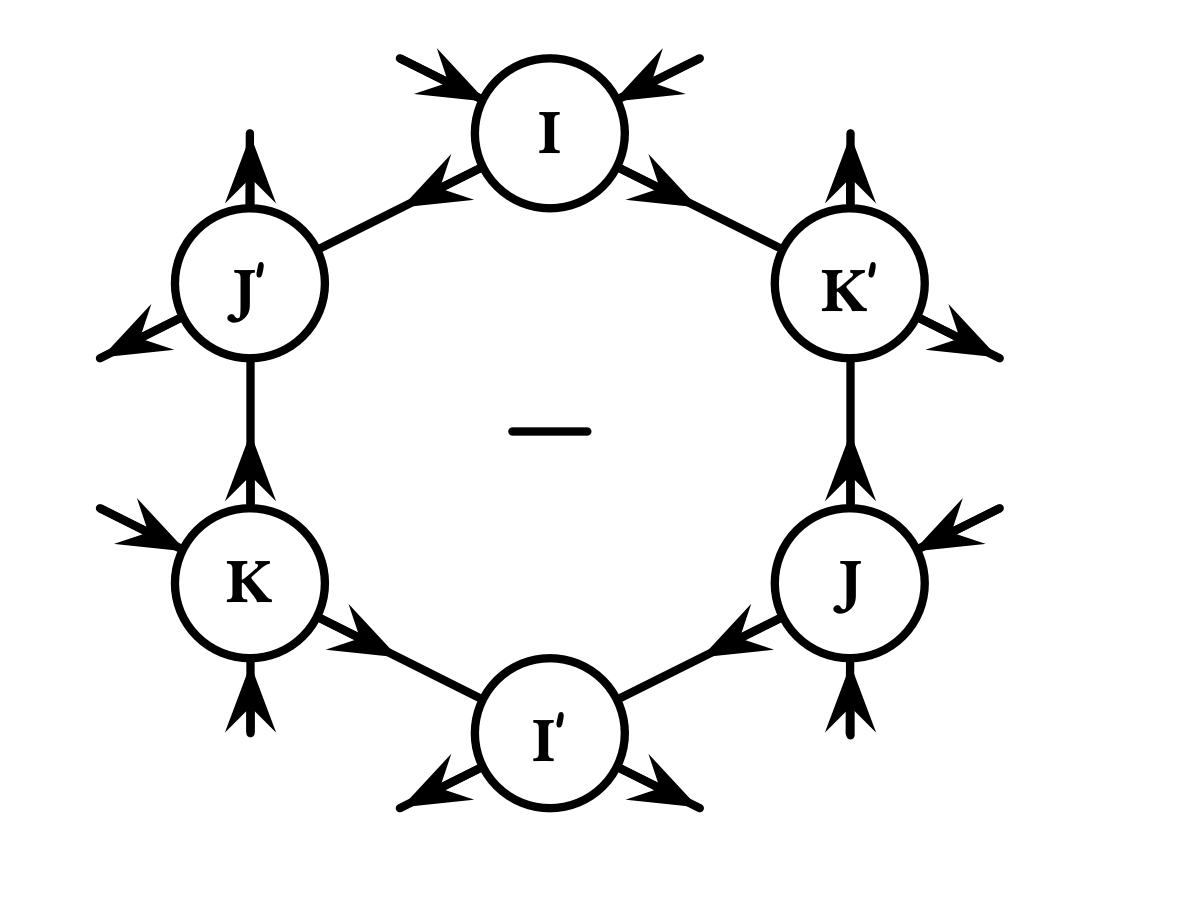}\caption{}\label{fig:hardness_multi_-}
\end{subfigure}
\caption{A locking device implementing a vertex of degree 3 in 3-MAX CUT.}
\end{figure}

To model a vertex of degree 3 in a 3-MAX CUT instance, we use the \emph{locking device} in \figref{fig:hardness_multi}.
Let us assume we have the virtual constraint that each of $I, I', J, J', K, K'$ can only be in two local configurations, \figref{fig:orientations_1} or \figref{fig:orientations_2}.
In fact, each locking device has two states, one shown in \figref{fig:hardness_multi_+} with every node in configuration \figref{fig:orientations_1} (called the $+$ state) and the other shown in \figref{fig:hardness_multi_-} with every node in configuration \figref{fig:orientations_2} (called the $-$ state).
If we think of the external edges incident to $I, J, K$ to serve as the ``top'' edges
(with ``N'' aligned with the ``N'' at $X$ or $Y$ in \figref{fig:hardness_single_a}), and the edges incident to $I', J', K'$ as the ``bottom'' edges there, then we 
 simulate the $\pm$ state of
 a degree 3 vertex as follows: (1) top edges are going out and bottom edges are coming in if the device is in $+$ state, and top edges are coming in and bottom edges are going out if the device is in $-$ state; and (2) the top edges on $I, J, K$ are going out or coming in at the same time.

\begin{figure}[h!]
\centering
\includegraphics[width=0.4\linewidth]{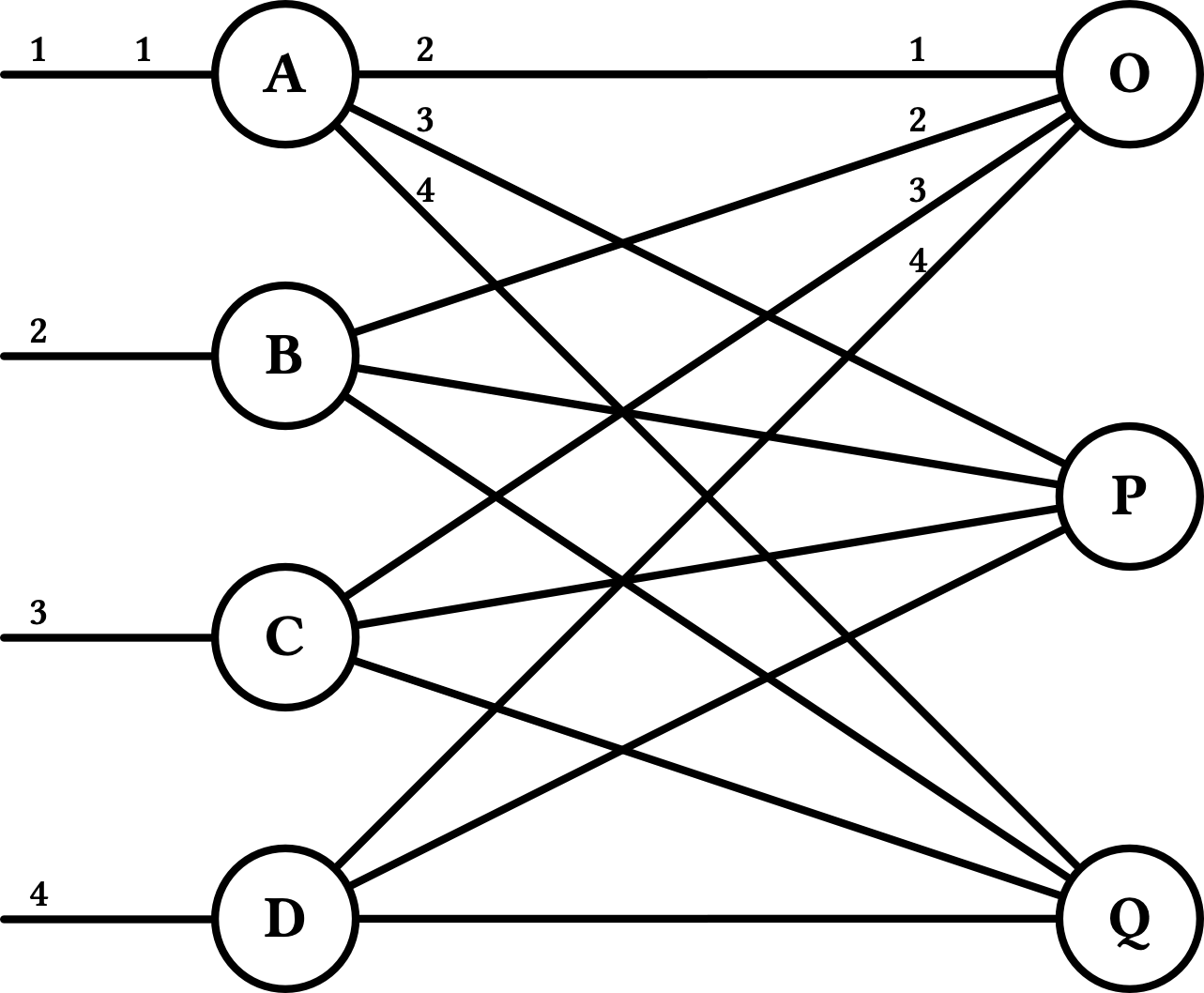}
\caption{A 4-ary construction that amplifies the maximum among $a, b, c, d$.}
\label{fig:hardness_bipartite}
\end{figure}

Next we show how to enforce the virtual constraint in \figref{fig:hardness_multi} that each vertex has two contrary configurations, in the sense of approximation. 
The idea is to implement an \emph{amplifier}  as a 4-ary construction with 
parameter $(\hat{a}, \hat{b}, \hat{c}, \hat{d})$ such that $\hat{a} \gg \hat{b} + \hat{c} + \hat{d}$ using polynomially many vertices in the eight-vertex model. We obtain such an amplifier by an iteration of $\Gamma$ shown in \figref{fig:hardness_bipartite} (input edges of $B, C, D$ labeled similarly as for $A$ and those of $P, 
Q$ labeled similarly as for $O$). Starting with $(a, b, c, d)$ on every vertex in $\Gamma$ (where $a > b+c+d$), 
the parameter setting $(a', b', c', d')$ of $\Gamma$ is
$\left\{\begin{smallmatrix}
a'=\Lambda(a, b,c,d)\\
b'=\Lambda(b, c,d,a)\\
c'=\Lambda(c, d,a,b)\\
d'=\Lambda(d, a,b,c)\\
\end{smallmatrix}\right.$,
where
\begin{multline}\label{eqn:Lambda}
\Lambda(\xi, x,y,z) = \xi^7 + (3 x^4 + 3 y^4 + 3 z^4 + 4 x^2 y^2 + 4 x^2 z^2 + 4 y^2 z^2) \xi^3 \\
+ (2x^4 y^2 + 2 x^4 z^2 + 2 x^2 y^4 + 2 y^4 z^2 + 2 x^2 z^4 + 2 y^2 z^4 + 30 x^2 y^2 z^2) \xi.
\end{multline}

This construction uses  $7$ vertices and is called a \emph{$1$-amplifier}.
We obtain $(a_1, b_1, c_1, d_1) = (a', b', c', d')$ which amplifies the relative weight of configurations in \figref{fig:orientations_1} or \figref{fig:orientations_2}.
If we plug in the amplifier $\Gamma$ into each vertex of $\Gamma$ itself (called a \emph{$2$-amplifier}), we can obtain $(a_2, b_2, c_2, d_2)$ using $7^2$ vertices.
Iteratively, we can construct a series of constraint functions
with parameters $(a_k, b_k, c_k, d_k) \ (k \ge 1)$ such that $\left\{\begin{smallmatrix}
a_{k+1}=\Lambda(a_k, b_k,c_k,d_k)\\
b_{k+1}=\Lambda(b_k, c_k,d_k,a_k)\\
c_{k+1}=\Lambda(c_k, d_k,a_k,b_k)\\
d_{k+1}=\Lambda(d_k, a_k,b_k,c_k)\\
\end{smallmatrix}\right.$,
using $7^k$ vertices for each $k$ (called a \emph{$k$-amplifier}).
\lemref{lem:growth_rate} shows that the asymptotic growth rate is exponential in the number of vertices used.

To reduce the problem 3-MAX CUT to approximating $Z(a,b,c,d)$, 
let  $\kappa > \lambda \ge 1$ be two constants that will be fixed later.
For each 3-MAX CUT instance $G = (V, E)$ with $|V| = n$ and $|E| = m$, we construct a graph $G'$ where a device in \figref{fig:hardness_multi} is created for each $v \in V$, and a four-way connection is made for every $\{u, v\} \in E(G)$,
on the external edges 
corresponding to $\{u, v\}$ as in \figref{fig:hardness_single_a}.
For each 4-way connection in \figref{fig:hardness_single_a},
each of the nodes $M, M'$ is replaced by a $(\lambda \log n)$-amplifier
to boost the ratio of the configurations in
\figref{fig:orientations_1} or \figref{fig:orientations_2} over other configurations.
For each device in \figref{fig:hardness_multi},
each of the nodes  $I, I', J, J', K, K'$ is replaced by a $(\kappa \log n)$-amplifier to lock in the configurations \figref{fig:hardness_multi_+} or \figref{fig:hardness_multi_-}.
 
Next we argue that the maximum size $s$ of all cuts
 in $G$ can be recovered from an approximate solution to
$Z(G'; a, b, c, d)$.

Given a cut $(V_+,  V_-)$ of size $s$ in $G$, we show there is a valid configuration
(at the granularity of nodes
and edges shown in \figref{fig:hardness_multi}) of weight
 $\ge \left(a_{\kappa \log n}\right)^{6n} \left(a_{\lambda \log n}\right)^{2s} \left(d_{\lambda \log n}\right)^{2(m-s)}$.
For every vertex $u \in V_+$ and every $v \in V_-$  we set 
the corresponding locking devices in the $+$ state (\figref{fig:hardness_multi_+}) 
 and $-$ state (\figref{fig:hardness_multi_-}) respectively.
Consequently, for each edge $\{u, v\}$, the two nodes $M$ and $M'$ in the 4-way connection between the external edges from $u$ and $v$ (two from each) are both in \figref{fig:orientations_1} or \figref{fig:orientations_2} if $\{u, v\}$ is in the cut; they are both in \figref{fig:orientations_7} or \figref{fig:orientations_8} if $\{u, v\}$ is not in the cut.
We have defined a valid configuration, and it has weight  
$\ge \prod_{v \in V} \left({a_{\kappa \log n}}\right)^6 \prod_{e \in V_+ \times V_-} 
\left(a_{\lambda \log n}\right)^2 \prod_{e \not\in V_+ \times V_-} 
\left(d_{\lambda \log n}\right)^2 = \left(a_{\kappa \log n}\right)^{6n} \left(a_{\lambda \log n}\right)^{2s} \left(d_{\lambda \log n}\right)^{2(m-s)}$, where 
the exponent $6$ comes from the 6 nodes $I, I', J, J', K, K'$ in each locking device and $2$ comes from the two nodes $M, M'$ in each four-way connection.

We also show that the weighted sum of all configurations 
is  $<\frac{1}{2}\left(a_{\kappa \log n}\right)^{6n} \left(a_{\lambda \log n}\right)^{2(s+1)} \left(d_{\lambda \log n}\right)^{2(m-(s+1))}$,
where $s$ is the maximum size of cuts in $G$.
First we bound $W_{\rm lock}$, the sum of weights for configurations
where all nodes labeled $I, I', J, J', K, K'$ are locked.
It follows that
\[ W_{\rm lock} \le 2^{n} \left(a_{\kappa \log n}\right)^{6n} \sum_{i=0}^s\binom{m}{i} \left(a_{\lambda \log n}\right)^{2i} \left(d_{\lambda \log n}\right)^{2(m-i)} \le 2^{n+m} \left(a_{\kappa \log n}\right)^{6n} \left(a_{\lambda \log n}\right)^{2s} \left(d_{\lambda \log n}\right)^{2(m-s)}, \]
where each locking device has 2 possible states each with weight $\left(a_{\kappa \log n}\right)^{6}$,
and given a particular $\pm$ assignment of $n$ devices, there can be at most $s$ four-way connections that are between a $+$ device and a $-$ device.
Hence $W_{\rm lock} < \frac{1}{4}\left(a_{\kappa \log n}\right)^{6n} \left(a_{\lambda \log n}\right)^{2(s+1)} \left(d_{\lambda \log n}\right)^{2(m-(s+1))}$ when $\lambda \ge 1$ is large.

It remains to upper-bound the weighted sum of configurations where there is at least one device with some lock broken. This quantity is bounded by
\begin{align*}
& 8^{6n} \sum_{i = 0}^{6n - 1}\binom{6n}{i} \left(a_{\kappa \log n}\right)^{i} \left(b_{\kappa \log n} + c_{\kappa \log n} + d_{\kappa \log n}\right)^{(6n-i)} \left[2\left(a_{\lambda \log n} + b_{\lambda \log n} + c_{\lambda \log n} + d_{\lambda \log n}\right)\right]^{2m} \\
\le \; & 2^{24n + 2m} \left(a_{\kappa \log n}\right)^{6n} \left(\frac{b_{\kappa \log n} + c_{\kappa \log n} + d_{\kappa \log n}}{a_{\kappa \log n}}\right) \left(a_{\lambda \log n} + b_{\lambda \log n} + c_{\lambda \log n} + d_{\lambda \log n}\right)^{2m} \\
\le \; & 2^{24n + 6m} \left(a_{\kappa \log n}\right)^{6n}  \left(\frac{b_{\kappa \log n} + c_{\kappa \log n} + d_{\kappa \log n}}{a_{\kappa \log n}}\right)  \left(a_{\lambda \log n}\right)^{2m} \\
\le \; & 2^{24n + 6m} \left[\Theta(1)\right]^{m n^{6\lambda}} \left(a_{\kappa \log n}\right)^{6n} \frac{1}{\beta^{n^\kappa}},
\end{align*}
where we use the fact that $a_{\lambda \log n} \le a^{64^{{\lambda \log n}}} = a^{n^{6 \lambda}}$ because there are in total $64=2^6$ terms in (\ref{eqn:Lambda}) and $\beta > 1$ by \lemref{lem:growth_rate}.
This quantity is $< \frac{1}{4}\left(a_{\kappa \log n}\right)^{6n}$ when $\kappa > \lambda \ge 1$ is sufficiently large.

\medskip
We have finished the proof for $a > b+c+d$. The case when $b > a + c + d$ or $c > a + b + d$ can be similarly proved.
We now adapt our proof to the case when $d > a + b + c$. Since not all $a, b, c =0$ by our assumption, let us assume without loss of generality that $a > 0$. The amplifier remains exactly the same thanks to its symmetry. For the locking device, we can still lock into two states (with the help of amplifiers on each node in the device): the $+$ state where $I, J, K$ are sources and $I', J', K'$ are sinks; the $-$ state where $I, J, K$ are sinks and $I', J', K'$ are sources. The only difference in the construction is the way that four-way connections are set up (\figref{fig:hardness_single_d}). This time, a node marked with $I, J, K$ in one locking device need to be connected to a node marked with $I', J', K'$ (instead of still $I, J, K$ as is the case when $a > b + c + d$) in another locking device to make sure locking devices in contrary states are favored (by setting $M$ and $M'$ as sinks and sources).

\captionsetup[subfigure]{labelformat=parens}
\begin{figure}[h!]
\centering
\begin{subfigure}[b]{0.32\linewidth}
\centering\includegraphics[width=0.7\linewidth]{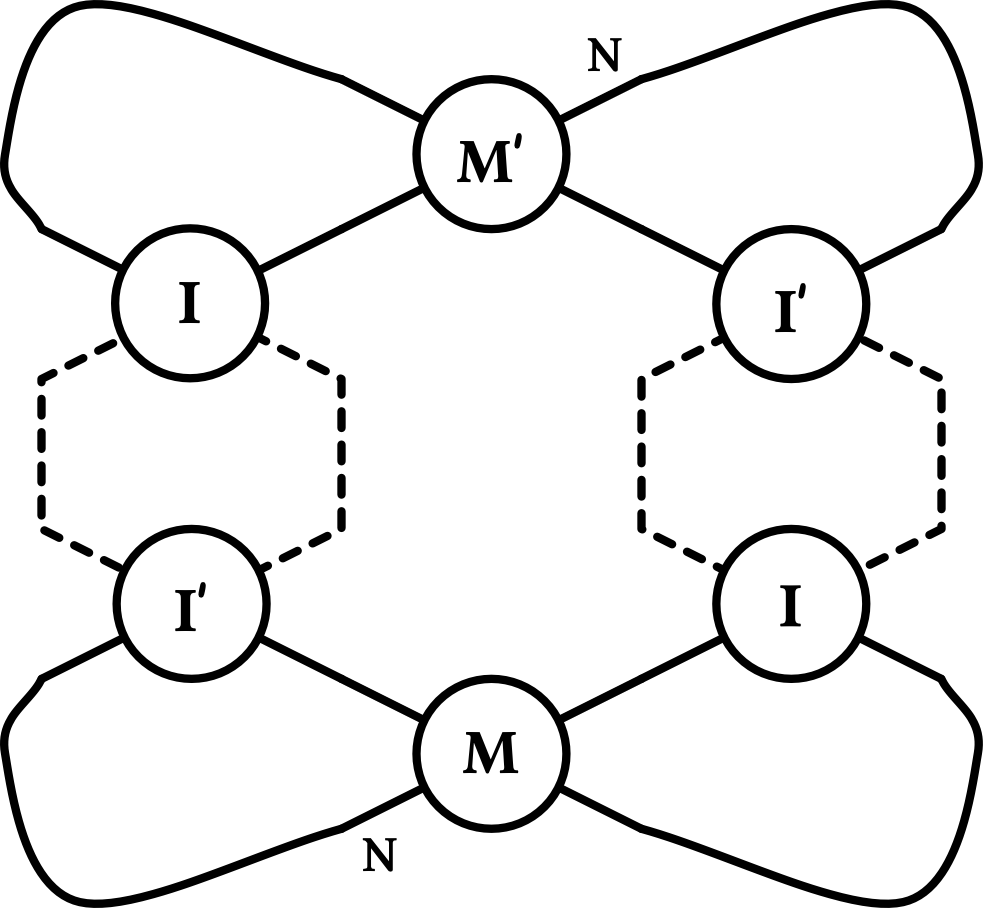}\caption{}\label{fig:hardness_single_d}
\end{subfigure}
\begin{subfigure}[b]{0.32\linewidth}
\centering\includegraphics[width=0.77\linewidth]{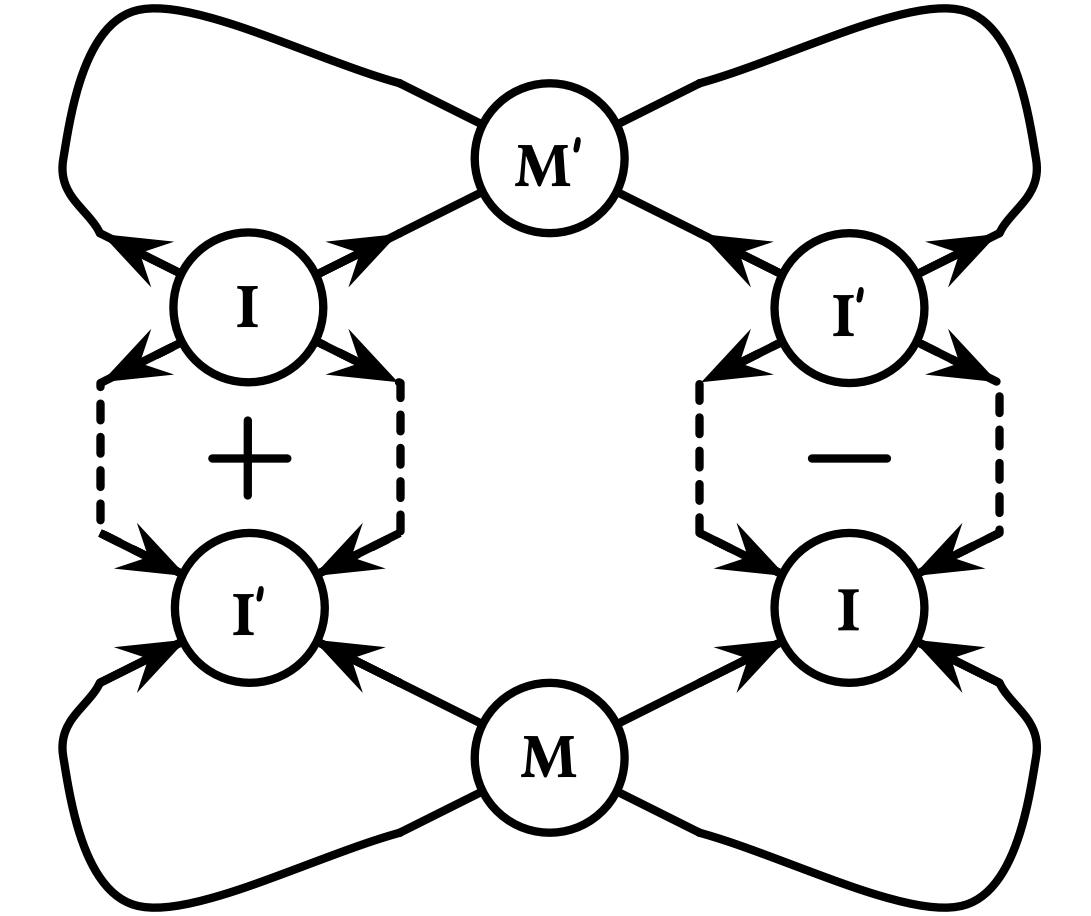}\caption{}\label{fig:hardness_single_d_+-}
\end{subfigure}
\begin{subfigure}[b]{0.32\linewidth}
\centering\includegraphics[width=0.83\linewidth]{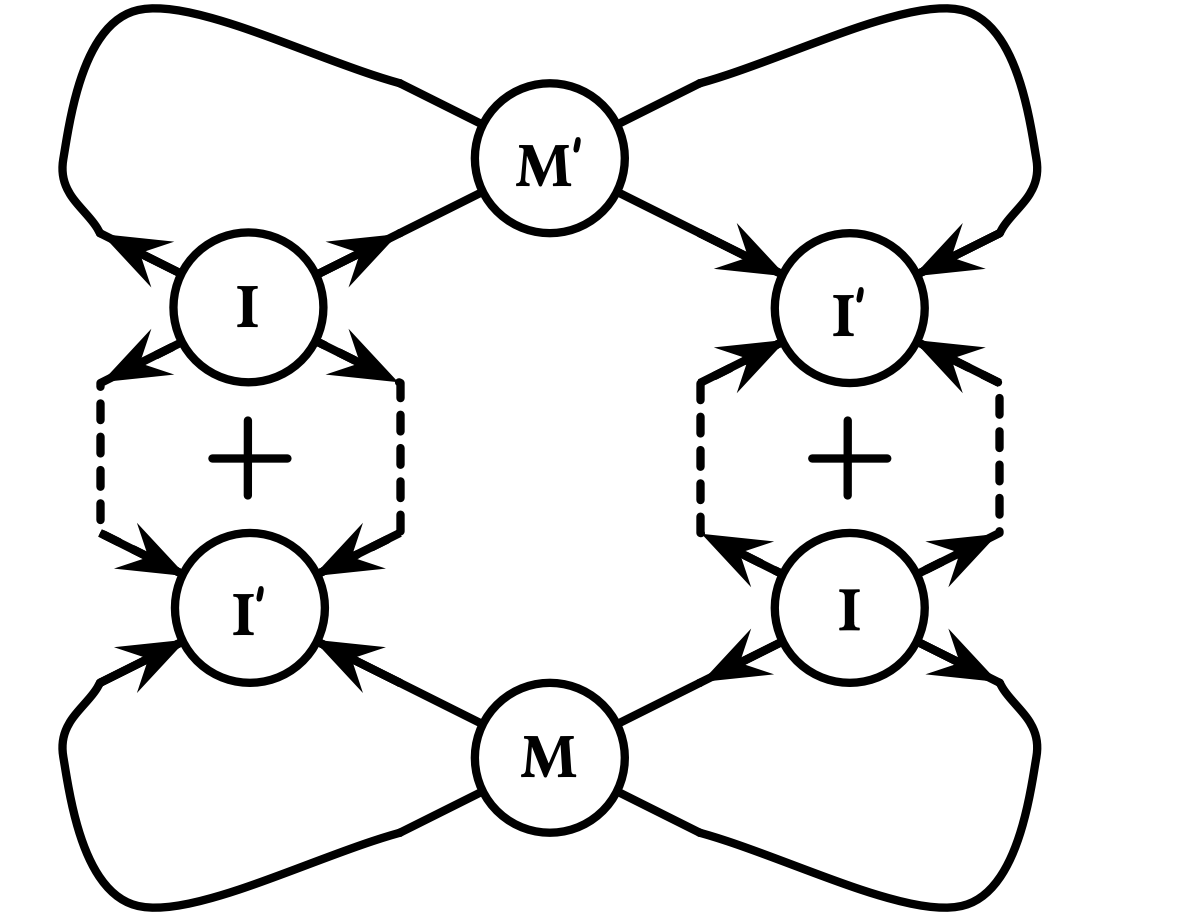}\caption{}\label{fig:hardness_single_d_++}
\end{subfigure}
\caption{Modifying the four-way connection for the case when $d > a + b + c$.}
\end{figure}

Note that in the case when $a > b + c + d$ (or symmetrically $b > a + c + d$, $c > a + b + d$), the construction $G'$ in the eight-vertex model is bipartite for any (not necessarily bipartite) 3-MAX CUT instance $G$. To see this, just check that (1) the amplifiers are bipartite and (2) the four way connections and the locking devices are bipartite by setting the nodes marked with $M, I, J, K$ on one side and the nodes marked with $M', I', J', K'$ on the other side.
Therefore, approximately computing $Z(a,b,c,d)$ in these cases is NP-hard even on bipartite graphs.
We remark this is no longer true for the construction of $G'$ in the case when $d > a + b + c$.
\end{proof}

\begin{lemma}\label{lem:growth_rate}
Let $(a_k, b_k, c_k, d_k) = \Lambda^{(k)} (a, b, c, d)$ given by (\ref{eqn:Lambda}). Assuming $a_0>b_0+c_0+d_0$, $a_0, d_0 > 0$, and $b_0, c_0 \ge 0$, there exists some constants $\alpha > 0, \beta > 1$ depending only on $a_0,b_0,c_0,d_0$ such that for all $k \ge 1$,
$\frac{a_k}{b_k+c_k+d_k} \ge \alpha \beta^{2^k}$.
\end{lemma}
\begin{proof}
Let $(a', b', c', d') = \Lambda(a, b, c, d)$ for any $a, b, c, d$ such that $a, d > 0$ and $b, c \ge 0$. We have
$a' > 0, b', c', d' \ge 0$ and $b' + c' + d'>0$.
 One can check that $\frac{a'}{b' + c' + d'} > \frac{a}{b + c + d}$ for any such $(a, b, c, d)$.
One can check that
\[ \frac{a'}{b' + c' + d'} - {\left(\frac{a}{b+c+d}\right)}^2 = \frac{a (a - (b + c + d)) F}{{(b + c + d)}^2(b' + c' + d')}, \]
where $F = (b^2 + c^2 + d^2 + 2bc + 2bd + 2cd) a^5 + F_4 a^4 + F_3 a^3 + F_2 a^2 + F_1 a + F_0$ and $F_i \ (0 \le i \le 4)$ are polynomials only in $b, c, d$
(with not necessarily positive coefficients).

Therefore, when $\frac{a}{b+c+d}$ is sufficiently large, $\frac{a'}{b' + c' + d'} - {\left(\frac{a}{b+c+d}\right)}^2 > 0$. This indicates that the series $\left\{ \frac{a_k}{b_k+c_k+d_k} \right\}_{k \ge 1}$ has a growth rate of $2^k$ after some finite $j$ such that $\frac{a_j}{b_j+c_j+d_j}$ is sufficiently large.
If we normalize $a = 1$, then $\Lambda$ takes a tuple $(\tilde{b}, \tilde{c}, \tilde{d})$ with $\tilde{b}, \tilde{c}, \tilde{d} \ge 0$ and $0 < \tilde{b}+\tilde{c}+\tilde{d} \le 1$ and maps it to another tuple $(\tilde{b'}, \tilde{c'}, \tilde{d'})$ with $\tilde{b'}+\tilde{c'}+\tilde{d'} < \tilde{b}+\tilde{c}+\tilde{d}$.
The existence of such a point $j$ is proved in \lemref{lem:lower_bound}. This completes the proof by setting $\beta = \frac{a_j}{b_j+c_j+d_j} > 1$.
\end{proof}

\begin{lemma}\label{lem:lower_bound}
Let $\Delta = \{ (b, c, d) \ |\ b, c, d \ge 0, b+c+d \le 1\}$. Let $s: \Delta \rightarrow \mathbb{R}_+$ be the summation function $s((b,c,d)) = b + c + d$. Suppose $g: \Delta 
\rightarrow \Delta$ is a continuous function such that for any $x \neq \mathbf{0}$, $s(g(x)) < s(x)$.
Then for any  $\epsilon > 0$, there exists $N \in \mathbb{Z}_+$ such that 
for any $x \in \Delta$, $s(g^{(N)}(x)) < \epsilon$.
\end{lemma}
\begin{proof}
For any $\epsilon > 0$, let   $\Delta_{\epsilon} 
 = \{ (b, c, d) \in \Delta \ |\ b+c+d \ge  \epsilon \}$,
and consider the continuous function 
$h(x) = s(x) - s(g(x))$ on $\Delta_{\epsilon}$.
Since $\Delta_{\epsilon}$ is \emph{compact}, $h$ 
reaches its minimum at some $x_0$ on $\Delta_{\epsilon}$.
Since $h(x) > 0$ for any $x \in \Delta_{\epsilon}$, we have $h(x_0) > 0$.
Let $N = \lceil \frac{1}{h(x_0)} \rceil$. 
Starting  from any $x \in \Delta$,
 we claim that $s(g^{(N)}(x)) < \epsilon$.
If not, then  $s(g^{(N)}(x)) \ge \epsilon$,
and by monotonicity, $g^{(n)}(x) \in \Delta_{\epsilon}$ for all $0 \le n \le N$.
But then $s(g^{(N)}(x)) \le s(x) - N h(x_0) \le 0$, a contradiction.
\end{proof}

\bibliography{reference}{}

\begin{thebibliography}{DFK91}

\bibitem[Bax71]{PhysRevLett.26.832}
R.~J. Baxter.
\newblock Eight-vertex model in lattice statistics.
\newblock {\em Phys. Rev. Lett.}, 26:832--833, Apr 1971.

\bibitem[Bax72]{BAXTER1972193}
R.~J. Baxter.
\newblock Partition function of the eight-vertex lattice model.
\newblock {\em Annals of Physics}, 70(1):193 -- 228, 1972.

\bibitem[Bax82]{Baxter:book}
R.~J. Baxter.
\newblock {\em Exactly Solved Models in Statistical Mechanics}.
\newblock Academic Press Inc., San Diego, CA, USA, 1982.

\bibitem[BN98]{Barkema98montecarlo}
G.~T. Barkema and M.~E.~J. Newman.
\newblock Monte carlo simulation of ice models.
\newblock {\em Phys. Rev. E}, pages 1155--1166, 1998.

\bibitem[CF17]{DBLP:journals/corr/CaiF17}
Jin{-}Yi Cai and Zhiguo Fu.
\newblock Complexity classification of the eight-vertex model.
\newblock {\em CoRR}, abs/1702.07938, 2017.

\bibitem[CLL19]{DBLP:journals/corr/abs-1712-05880}
Jin{-}Yi Cai, Tianyu Liu, and Pinyan Lu.
\newblock Approximability of the six-vertex model.
\newblock {\em CoRR}, abs/1712.05880, will appear in SODA 2019.

\bibitem[DFK91]{Dyer:1991:RPA:102782.102783}
Martin Dyer, Alan Frieze, and Ravi Kannan.
\newblock A random polynomial-time algorithm for approximating the volume of
  convex bodies.
\newblock {\em J. ACM}, 38(1):1--17, January 1991.

\bibitem[FW70]{PhysRevB.2.723}
Chungpeng Fan and F.~Y. Wu.
\newblock General lattice model of phase transitions.
\newblock {\em Phys. Rev. B}, 2:723--733, Aug 1970.

\bibitem[GR10]{Greenberg2010}
Sam Greenberg and Dana Randall.
\newblock Slow mixing of {Markov} chains using fault lines and fat contours.
\newblock {\em Algorithmica}, 58(4):911--927, Dec 2010.

\bibitem[Jae90]{Jaeger1990}
F.~Jaeger.
\newblock On transition polynomials of 4-regular graphs.
\newblock In Ge{\v{n}}a Hahn, Gert Sabidussi, and Robert~E. Woodrow, editors,
  {\em Cycles and Rays}, pages 123--150. Springer Netherlands, Dordrecht, 1990.

\bibitem[Jer03]{Jerrum-book}
Mark Jerrum.
\newblock {\em Counting, Sampling and Integrating: Algorithm and Complexity}.
\newblock Birkh\"{a}user, Basel, 2003.

\bibitem[JS89]{doi:10.1137/0218077}
Mark Jerrum and Alistair Sinclair.
\newblock Approximating the permanent.
\newblock {\em SIAM Journal on Computing}, 18(6):1149--1178, 1989.

\bibitem[JSV04]{Jerrum:2004:PAA:1008731.1008738}
Mark Jerrum, Alistair Sinclair, and Eric Vigoda.
\newblock A polynomial-time approximation algorithm for the permanent of a
  matrix with nonnegative entries.
\newblock {\em J. ACM}, 51(4):671--697, July 2004.

\bibitem[JVV86]{JERRUM1986169}
Mark~R. Jerrum, Leslie~G. Valiant, and Vijay~V. Vazirani.
\newblock Random generation of combinatorial structures from a uniform
  distribution.
\newblock {\em Theoretical Computer Science}, 43(Supplement C):169 -- 188,
  1986.

\bibitem[KL83]{Karp:1983:MAE:1382437.1382804}
Richard~M. Karp and Michael Luby.
\newblock {Monte-Carlo} algorithms for enumeration and reliability problems.
\newblock In {\em Proceedings of the 24th Annual Symposium on Foundations of
  Computer Science}, SFCS '83, pages 56--64, Washington, DC, USA, 1983. IEEE
  Computer Society.

\bibitem[Lie67a]{PhysRevLett.18.1046}
Elliott~H. Lieb.
\newblock Exact solution of the {$F$} model of an antiferroelectric.
\newblock {\em Phys. Rev. Lett.}, 18:1046--1048, Jun 1967.

\bibitem[Lie67b]{PhysRevLett.19.108}
Elliott~H. Lieb.
\newblock Exact solution of the two-dimensional slater {KDP} model of a
  ferroelectric.
\newblock {\em Phys. Rev. Lett.}, 19:108--110, Jul 1967.

\bibitem[Lie67c]{PhysRev.162.162}
Elliott~H. Lieb.
\newblock Residual entropy of square ice.
\newblock {\em Phys. Rev.}, 162:162--172, Oct 1967.

\bibitem[McQ13]{DBLP:journals/corr/abs-1301-2880}
Colin McQuillan.
\newblock Approximating {Holant} problems by winding.
\newblock {\em CoRR}, abs/1301.2880, 2013.

\bibitem[MW96]{Mihail1996}
M.~Mihail and P.~Winkler.
\newblock On the number of {Eulerian} orientations of a graph.
\newblock {\em Algorithmica}, 16(4):402--414, Oct 1996.

\bibitem[RS72]{doi:10.1063/1.1678874}
Aneesur Rahman and Frank~H. Stillinger.
\newblock Proton distribution in ice and the kirkwood correlation factor.
\newblock {\em The Journal of Chemical Physics}, 57(9):4009--4017, 1972.

\bibitem[Sin92]{Sinclair92improvedbounds}
Alistair Sinclair.
\newblock Improved bounds for mixing rates of {Markov} chains and
  multicommodity flow.
\newblock {\em Combinatorics, Probability and Computing}, 1:351--370, 1992.

\bibitem[Sut67]{PhysRevLett.19.103}
Bill Sutherland.
\newblock Exact solution of a two-dimensional model for hydrogen-bonded
  crystals.
\newblock {\em Phys. Rev. Lett.}, 19:103--104, Jul 1967.

\bibitem[Sut70]{doi:10.1063/1.1665111}
Bill Sutherland.
\newblock Two dimensional hydrogen bonded crystals without the ice rule.
\newblock {\em Journal of Mathematical Physics}, 11(11):3183--3186, 1970.

\bibitem[SZ04]{PhysRevE.70.016118}
Olav~F. Sylju\aa{}sen and M.~B. Zvonarev.
\newblock Directed-loop monte carlo simulations of vertex models.
\newblock {\em Phys. Rev. E}, 70:016118, Jul 2004.

\bibitem[Ver88]{VERGNAS1988367}
Michel~Las Vergnas.
\newblock On the evaluation at (3, 3) of the {Tutte} polynomial of a graph.
\newblock {\em Journal of Combinatorial Theory, Series B}, 45(3):367 -- 372,
  1988.

\bibitem[Yan78]{DBLP:conf/stoc/Yannakakis78}
Mihalis Yannakakis.
\newblock Node- and edge-deletion {NP}-complete problems.
\newblock In {\em Proceedings of the 10th Annual {ACM} Symposium on Theory of
  Computing}, STOC '78, pages 253--264, 1978.

\bibitem[YN79]{YANAGAWA1979329}
A.~Yanagawa and J.F. Nagle.
\newblock Calculations of correlation functions for two-dimensional square ice.
\newblock {\em Chemical Physics}, 43(3):329 -- 339, 1979.

\end{thebibliography}
\bibliographystyle{alpha}


\end{document}